\definecolor{GrayBgColor}{rgb}{0.9, 0.9, 0.9}
\newcommand*{\backref}[1]{}
\newcommand*{\backrefalt}[4]{%
    \ifcase #1%
          \or Cited on page~#2.%
          \else Cited on pages~#2.%
    \fi%
    }
\newcommand{\xRightarrow}[2][]{\ext@arrow 0359\Rightarrowfill@{#1}{#2}}
\newcommand{\bi}{\begin{array}[t]{@{}l@{}}}
\newcommand{\ei}{\end{array}}
\newcommand{\ba}{\begin{array}}
\newcommand{\ea}{\end{array}}
\newcommand{\bda}[1]{\begin{displaymath}\ba{#1}}
\newcommand{\eda}{\ea\end{displaymath}}
\newcommand{\bp}{\begin{quote}\tt\begin{tabbing}}
\newcommand{\ep}{\end{tabbing}\end{quote}}
\newcommand{\ignore}[1]{}
\newcommand\Angle[1]{\langle#1\rangle}
\newcommand{\thread}[1]{\ensuremath{t_{#1}}}
\newcommand{\eventE}[1]{\ensuremath{e_{#1}}}
\newcommand{\HIGHLIGHT}[1]{\colorbox{GrayBgColor}{\ensuremath{#1}}}
\newcommand\evtAA{e}
\newcommand\evtBB{f}
\newcommand\evtCC{g}
\newcommand\evtSubject{e}
\newcommand\evtAcc{a}
\newcommand\evtAccA{\evtAcc'}
\newcommand\evtRel{r}
\newcommand\evtRelA{\evtRel'}
\newcommand{\proj}[2]{\textit{proj}(#1,#2)}
\newcommand{\threadID}[1]{\textit{thread}(#1)}
\newcommand{\pwrCand}{PWR}
\newcommand{\pwrCandGuarded}{PWR$^{cs}$}
\newcommand{\all}{{\mathit all}}
\newcommand{\AllLocksSym}{\LocksSym{\all}}
\newcommand{\acqVC}[1]{{\mathit Acq(#1)}}
\newcommand{\RW}[1]{\mathit{RC}(#1)}
\newcommand{\pwrVC}[1]{\mathit{PWR}(#1)}
\newcommand{\pp}{\ensuremath{\mathbin{\texttt{++}}}}
\newcommand{\preRead}[1]{\textit{read}(#1)}
\newcommand{\preWrite}[1]{\textit{write}(#1)}
\newcommand{\lockE}[1]{\textit{acq}(#1)} %%{\mathit{acq}(#1)}
\newcommand{\unlockE}[1]{\textit{rel}(#1)} %%{\mathit{rel}(#1)}
\newcommand{\readE}[1]{r(#1)}
\newcommand{\writeE}[1]{w(#1)}
\newcommand{\forkE}[1]{\textit{fork}(#1)}
\newcommand{\joinE}[1]{\textit{join}(#1)}
\newcommand{\ltTrace}[3]{#2 <_{\textit{tr}}^{#1} #3}
\newcommand{\ltTraceT}[4]{#2 <_{\textit{tr}}^{#1} #3 <_{tr}^{#1} #4}
\newcommand{\ltTr}[2]{#1 <_{\textit{tr}} #2}
\newcommand{\ltTrr}[3]{#1 <_{\textit{tr}} #2 <_{\textit{tr}} #3}
\newcommand{\ltTraceSym}[1]{<_{\textit{tr}}^{#1}}
\newcommand{\ltTO}[3]{#2 <_{to}^{#1} #3}
\newcommand{\ltWCP}[3]{#2 <_{wcp}^{#1} #3}
\newcommand{\ltWCPNoT}[2]{#1 <_{wcp} #2}
\newcommand{\ltWCPSym}[1]{<_{wcp}^{#1}}
\newcommand{\lteqWCP}[3]{#2 \leq_{wcp}^{#1} #3}
\newcommand{\ltSDPNoT}[2]{#1 <_{sdp} #2}
\newcommand{\DCTaskPar}{DC$_{tp}$}
\newcommand{\ltDCNoT}[2]{#1 <_{dc} #2}
\newcommand{\ltHB}[3]{#2 <_{hb}^{#1} #3}
\newcommand{\ltHBNoT}[2]{#1 <_{hb} #2}
\newcommand{\ltHBSym}[1]{<_{hb}^{#1}}
\newcommand{\ltHBCS}[3]{#2 <_{hb-cs}^{#1} #3}
\newcommand{\ltHBCSSym}[1]{<_{hb-cs}^{#1}}
\newcommand{\ltMustHB}[3]{#2 <_{mhb}^{#1} #3}
\newcommand{\ltMustHBB}[4]{#2 <_{mhb}^{#1} #3 <_{mhb}^{#1} #4}
\newcommand{\ltMHB}[2]{#1 <_{mhb} #2}
\newcommand{\ltMHBB}[3]{#1 <_{mhb} #2 <_{mhb} #3}
\newcommand{\ltMHBNoTSym}{<_{mhb}}
\newcommand{\concMustHB}[3]{#2 {\|}_{mhb}^{#1} #3}
\newcommand{\concPWR}[3]{#2 {\|}_{pwr}^{#1} #3}
\newcommand{\ACQ}[1]{\evtAcc_{#1}}% {\mathit{acq}_{{\alpha}_{#1}}}
\newcommand{\THD}[1]{t_{#1}}% {t_{{\alpha}_{#1}}}
\newcommand{\nf}[1]{\mbox{\normalfont{#1}}}
\newcommand{\posP}[2]{\textit{pos}_{{\scriptstyle #1}}(#2)}
\newcommand{\events}[1]{\textit{events}(#1)}
\newcommand{\threads}[1]{\textit{threads}(#1)}
\newcommand\thd[1]{\ensuremath{\textit{thread} (#1)}}
\newcommand{\threadVC}[1]{\textit{Th}(#1)}
\newcommand\conflict[2]{\ensuremath{#1 \asymp #2}}
\newcommand{\FalseP}[2]{\textit{FP}^{#1} (#2)}
\newcommand{\FalseN}[2]{\textit{FN}^{#1} (#2)}
\newcommand{\FP}[1]{\textit{FP}(#1)}
\newcommand{\FN}[1]{\textit{FN}(#1)}
\newcommand\PredictDRP[2][{}]{\ensuremath{{P'}_{DR}^{#1} (#2)}}
\newcommand\PredictDR[2][{}]{\ensuremath{P_{DR}^{#1} (#2)}}
\newcommand\TrueDR[1][{}]{\ensuremath{G_{DR}^{#1}}}
\newcommand{\DataRace}[3]{#2 \bowtie^{#1} #3}
\newcommand\indexedcap{\mathbin{\cap'}}
\newcommand{\standard}{S}
\newcommand{\intrathread}{C}
\newcommand{\intrathreadthread}{CT}
\newcommand{\pwrsymbol}{\textit{pwr}}%{PWR}
\newcommand{\ltPWR}[3]{#2 <_{\pwrsymbol}^{#1} #3}
\newcommand{\ltPWRNoT}[2]{#1 <_{\pwrsymbol} #2}
\newcommand{\CSect}[4]{CS^{#2}_{#1}(#4)^{#3}} %{CS_{#1}(#4)^{#3}_{#2}}
\newcommand{\CSec}[3]{CS_{#1}(#3)^{#2}}    %% drop trace
\newcommand{\CS}[2]{CS_{#1}(#2)}           %% drop trace and guard
\newcommand{\csAR}[1]{CS_{C}(#1)}    %% \newcommand{\csAR}[1]{CS^{A}_{R}(#1)}
\newcommand{\StandardCS}[2]{{\mathit CS_{#1}}(#2)}
\newcommand{\AcqRelPair}[2]{\Angle{#1,#2}}
\newcommand{\StdCSect}[3]{\CSect{\standard}{#1}{#2}{#3}}
\newcommand{\StdCSec}[2]{\CSec{\standard}{#1}{#2}}
\newcommand{\IntraCSect}[3]{\CSect{\intrathread}{#1}{#2}{#3}}
\newcommand{\IntraCSec}[2]{\CSec{\intrathread}{#1}{#2}}
\newcommand{\LHeld}[3]{LH^{#2}_{#1}(#3)} %{LH_{#1}(#3)_{#2}}
\newcommand{\LH}[2]{LH_{#1}(#2)}           %% drop trace
\newcommand{\LHSym}[1]{{LH_{#1}}}           %% symbol
\newcommand{\LocksSym}[1]{\ensuremath{\mathcal{L}_{#1}}}           %% symbol
\newcommand{\IntraLHeld}[2]{\LHeld{\intrathread}{#1}{#2}}
\newcommand{\IntraLH}[1]{\LH{\intrathread}{#1}}
\newcommand{\CTTLHeld}[2]{\LHeld{\intrathreadthread}{#1}{#2}}
\newcommand{\CTTLH}[1]{\LH{\intrathreadthread}{#1}}
\newcommand{\StdLHeld}[2]{\LHeld{\standard}{#1}{#2}}
\newcommand{\StdLH}[1]{\LH{\standard}{#1}}
\newcommand{\StdLocksSym}{\LocksSym{\standard}}
\newcommand{\PWRLH}[1]{\LH{\pwrsymbol}{#1}}
\newcommand{\rwT}[1]{T^{\textit{rw}}_{#1}}
\newcommand{\rwTa}{\rwT{a}}
\newcommand{\rwTr}{T^{\textit{rw}}}
\newdimen\legendxshift
\newdimen\legendyshift
\newcommand{\bclldist}{1mm}
\newcommand{\bclegend}[3][10mm]{%
	% initialize
	\legendxshift=0pt\relax
	\legendyshift=0pt\relax
	\xdef\legendnodes{}%
	% get width of longest text and number of lines
	\foreach \lcolor/\ltext [count=\ll from 1] in {#3}%
	{\global\legendlines\ll\pgftext{\setbox0\hbox{\bcfontstyle\ltext}\ifdim\wd0>\legendxshift\global\legendxshift\wd0\fi}}%
	% calculate xshift for legend; \bcwidth: from bchart package; \bclldist: from node frame, inner sep=\bclldist (see below)
	% \@tempdima: half width of bar; 0.72em: inner sep from text nodes with some manual adjustment
	\@tempdima#1\@tempdima0.5\@tempdima
	\pgftext{\bcfontstyle\global\legendxshift\dimexpr\bcwidth-\legendxshift-\bclldist-\@tempdima-0.72em}
	% calculate yshift; 5mm: heigt of bar
	\legendyshift\dimexpr5mm+#2\relax
	\legendyshift\legendlines\legendyshift
	% \bcpos-2.5mm: from bchart package; \bclldist: from node frame, inner sep=\bclldist (see below)
	\global\legendyshift\dimexpr\bcpos-2.5mm+\bclldist+\legendyshift
	% draw the legend
	\begin{scope}[shift={(\legendxshift,\legendyshift)}]
		\coordinate (lp) at (0,0);
		\foreach \lcolor/\ltext [count=\ll from 1] in {#3}%
		{
			\node[anchor=north, minimum width=#1, minimum height=5mm,fill=\lcolor] (lb\ll) at (lp) {};
			\node[anchor=west] (l\ll) at (lb\ll.east) {\bcfontstyle\ltext};
			\coordinate (lp) at ($(lp)-(0,5mm+#2)$);
			\xdef\legendnodes{\legendnodes (lb\ll)(l\ll)}
		}
		% draw the frame
		\node[draw, inner sep=\bclldist,fit=\legendnodes] (frame) {};
	\end{scope}
}
\begin{document}

%% Title information
\title{Cross-thread critical sections and efficient dynamic race prediction methods}

%% Author information
%% Contents and number of authors suppressed with 'anonymous'.
%% Each author should be introduced by \author, followed by
%% \authornote (optional), \orcid (optional), \affiliation, and
%% \email.
%% An author may have multiple affiliations and/or emails; repeat the
%% appropriate command.
%% Many elements are not rendered, but should be provided for metadata
%% extraction tools.

\author{Martin Sulzmann}
\affiliation{
  %%\department{Faculty of Computer Science and Business Information Systems}
  \institution{Karlsruhe University of Applied Sciences}
  \streetaddress{Moltkestrasse 30}
  \city{Karlsruhe}
  \postcode{76133}
  \country{Germany}
}
\email{martin.sulzmann@gmail.com}

\author{Peter Thiemann}
\affiliation{
  %%\department{Faculty of Computer Science and Business Information Systems}
  \institution{University of Freiburg}
  \streetaddress{Georges-K{\"o}hler-Allee 079}
  \city{Freiburg}
  \postcode{79110}
  \country{Germany}
}
\email{thiemann@acm.org}

\begin{CCSXML}
<ccs2012>
   <concept>
       <concept_id>10011007.10011006.10011039.10011311</concept_id>
       <concept_desc>Software and its engineering~Semantics</concept_desc>
       <concept_significance>300</concept_significance>
       </concept>
   <concept>
       <concept_id>10011007.10011006.10011008.10011009.10011014</concept_id>
       <concept_desc>Software and its engineering~Concurrent programming languages</concept_desc>
       <concept_significance>500</concept_significance>
       </concept>
   <concept>
       <concept_id>10011007.10011006.10011008.10011024.10011034</concept_id>
       <concept_desc>Software and its engineering~Concurrent programming structures</concept_desc>
       <concept_significance>500</concept_significance>
       </concept>
 </ccs2012>
\end{CCSXML}

\ccsdesc[300]{Software and its engineering~Semantics}
\ccsdesc[500]{Software and its engineering~Concurrent programming languages}
\ccsdesc[500]{Software and its engineering~Concurrent programming structures}

%% Keywords
%% comma separated list
%%\keywords{keyword1, keyword2, keyword3}  %% \keywords are mandatory in final camera-ready submission

\keywords{shared memory concurrency, dynamic analysis, data race,
  happens-before} %TODO mandatory; please add comma-separated list of keywords

%% Abstract
%% Note: \begin{abstract}...\end{abstract} environment must come
%% before \maketitle command
\begin{abstract}
  The lock set method and the partial order method are two main approaches to guarantee
  that dynamic data race prediction remains efficient.
  There are many variations of these ideas.
  Common to all of them is the assumption that the events
  in a critical section belong to the same thread.

  We have evidence that critical sections in the wild do
  extend across thread boundaries even if the surrounding acquire and
  release events occur in the same thread.
  We introduce the novel concept of a cross-thread critical section to
  capture such situations, offer a theoretical comprehensive framework, and study
  their impact on state-of-the-art data race analyses.

  For the sound partial order relation WCP we can show that the soundness claim
  also applies to cross-thread critical sections.
  For DCtp the occurrence of cross-thread critical sections invalidates the soundness claim.
  For complete partial order relations such as WDP and PWR,
  cross-thread critical sections help to eliminate more false positives.
  The same (positive) impact applies to the lock set construction.

  Our experimental evaluation confirms that cross-thread critical sections arise in practice.
  For the complete relation PWR, we are able to reduce the number of false positives.
  The  performance overhead incurred by tracking cross-thread critical
  sections slows down the analysis by 10\%-20\%, on average.
\end{abstract}

%% \maketitle
%% Note: \maketitle command must come after title commands, author
%% commands, abstract environment, Computing Classification System
%% environment and commands, and keywords command.
\maketitle

%--------------------------------------------------------
%--------------------------------------------------------
\section{Introduction}
\label{sec:introduction}

Concurrent programming techniques have become essential to fully
leverage the potential of modern multicore architectures. However,
concurrent programming is notoriously challenging and requires
programmers to exercise great care to avoid bugs. One significant
source of bugs in concurrent programs is data races, which occur when
multiple threads read and write shared data concurrently without proper
synchronization. Fortunately, there is a wealth of research and tools
available to assist in detecting and predicting the presence of data
races.

Dynamic analysis is one of the methods used for predicting data
races. This approach aims to anticipate a program's behavior by
examining a trace of events generated during the execution of a single
program run that terminates successfully. These events typically
involve acquiring or releasing locks or performing read and write
operations on global variables. By analyzing this program trace,
dynamic data race analysis can identify potential data races and help
programmers identify areas where synchronization mechanisms, such as
locks or atomic operations, should be employed to ensure thread safety
and prevent bugs.

The key difficulty lies in predicting whether the trace can be reordered
such that two conflicting memory operations appear adjacent to each
other, but without affecting causal relationships between the
events. This prediction should be accurate and efficient, where
accuracy refers to the analysis yielding a reasonable
number of false positives and/or false negatives. False positives occur
when the analysis incorrectly issues a data race warning, while false
negatives occur when the analysis fails to detect actual harmful
reorderings.

Efficiency pertains to the scalability of the
analysis to handle large program traces without significant performance
or resource degradation. Unfortunately,
the naive approach of exhaustive exploration to identify such
reorderings is doomed as the non-deterministic nature of
concurrent programs often allows for an exponentially large
number of reorderings.
Striking the right balance between accuracy
and efficiency is crucial in developing effective techniques for
predicting data races in concurrent programs.

There are two main approaches to retain efficiency.
\emph{Partial-order-based methods} derive a partial order from the
program trace, the earliest instance being Lamport's happens-before
relation~\cite{lamport1978time}.
This relation orders
a lock release before any acquisition of the same lock in the
rest of the trace.
% events belonging to two critical sections on the same lock are ordered as they appear in the trace.
Events belonging to the same thread are ordered according to the
program order reflected in the trace.
Two events are considered concurrent unless they are ordered by this relation.
If conflicting memory operations are deemed concurrent, the method
flags them as candidates for a data race.
\emph{Methods based on lock sets}~\cite{Dinning:1991:DAA:127695:122767}
infer for each event the set of locks held by the current thread at this
event.
Two events are considered concurrent unless they are protected by a common lock.
A data race warning is issued if the lock sets of conflicting memory operations are disjoint.
For both approaches there are numerous refinements of the original ideas.

The standard lock set method has no false negatives, but it may exhibit false positives.
To reduce the number of false positives one popular approach is to
combine it with some form of partial order relation. A data race warning is only issued
if the lock sets are disjoint and conflicting events are unordered.
For example, the data race detection tools Eraser~\cite{Savage:1997:EDD:265924.265927}
and TSan~V1~\cite{serebryany2009threadsanitizer} include fork/join dependencies
that must be respected under any trace reordering.
The partial orders WDP~\cite{10.1145/3360605} and PWR~\cite{10.1145/3360605,10.1145/3426182.3426185}
additionally include some dependencies among critical sections.
The goal of WDP and PWR is to strengthen must-dependencies such as fork/join
without ruling out viable trace reorderings.
Analyses based on WDP or PWR remain complete (no false negatives).

In contrast, partial order-based methods can lead to false negatives. For example, HB is sensitive to the order of lock operations executed by the program.
It is unable to predict races that result from schedules that reorder lock acquisition.
CP~\cite{Smaragdakis:2012:SPR:2103621.2103702} and its improvement
WCP~\cite{Kini:2017:DRP:3140587.3062374} are refinements of
HB that do consider schedules that acquire locks in a different order.
Thus, some races, but not necessarily all, that result from alternative schedules can be found.
SDP~\cite{10.1145/3360605} is a further relaxation of WCP to detect even more races.
CP, WCP, and SDP weaken the happens-before relation as much as possible
without admitting invalid trace reorderings.
DC~\cite{Roemer:2018:HUS:3296979.3192385} is a partial order that is
weaker than WCP, but stronger than WDP.
It requires a subsequent vindication phase to reduce the number of
false positives by attempting to construct a reordered trace
that exhibits the race. {\DCTaskPar} \cite{8894270} is a seemingly
simpler variant which is specialized for single locks.

There are further approaches to dynamic data race analysis that
have scalability issues for large programs. We discuss them in the
related work (Section~\ref{sec:related}).

All of the aforementioned works rely on the same traditional concept
of a critical section. By default, a critical section refers to a
\emph{sequence of events within a single thread}, bracketed by an
acquire and the next following release operation on the same lock.
We generalize critical sections to sequences of \emph{events that must happen between an
acquire and its corresponding release operation}. Crucially, our
notion of a \emph{cross-thread critical section} may also include events
from threads other than the acquiring thread, but we still assume that the acquire and its
corresponding release event come happen in the same thread!
These situations arise in practice and affect the accuracy of data
race prediction methods, thereby impacting the outcome of the analysis.

This work makes the following contributions:

\begin{itemize}
\item We propose the novel concept of a cross-thread critical section (CTCS)
      and introduce the associated lock set construction (Section~\ref{sec:cross-thread}).
\item We study the theoretical impact of CTCS critical sections
      on lock set-based dynamic data race prediction methods
      (section~\ref{sec:lockset-race-pred}). We show that they never
      increase the number of false positives. False negatives may
      arise, but we give a new thread-indexed CTCS construction that retains the
      no-false-negatives property of the standard lock set-based approach.
\item We study the impact of CTCS on several popular, scalable partial
  order-based dynamic data race prediction methods.
  We find that the pure HB method is not affected by CTCS; the soundness proof of WCP
  can be adapted to CTCS; WDP and
  PWR can be adapted while retaining completeness;
  soundness of \DCTaskPar, DC in the context of task parallel programs, is compromised and it is not
  clear to the authors how to fix it.
\item Section~\ref{sec:experiments} gives empirical evidence that CTCS
  occur in practice in a standard suite of traces and
  decrease the number of false positives in some instances.
  Our implementation relies on PWR.
  The adaptation to CTCS results in a slowdown of 20\% on the average (worst case slowdown is 2.7x).

\end{itemize}

Section~\ref{sec:overview} presents an informal, example-driven overview of the
contributions of the paper. Section~\ref{sec:prelim} establishes
mathematical preliminaries and notation.

% \ms{some more details.}
% \begin{itemize}
% \item Lock set: Can eliminate more false positives while maintaining there are no false negatives (proven already).
% \item Partial order: Can eliminate more false negatives (easy to
%   prove) while maintaining there are no false positives (first race
%   soundness should hold).
% \item Plus:
%   \begin{itemize}
%    \item We confirm that the WCP results holds also in the cross-thread setting.
%    \item SDP soundness is lost.
%   \end{itemize}
% \end{itemize}

%%% Local Variables:
%%% mode: latex
%%% TeX-master: "main"
%%% End:

%--------------------------------------------------------
%--------------------------------------------------------
\section{Overview}
\label{sec:overview}

This section offers an informal overview of the impact of cross-thread critical
sections on dynamic data race prediction methods considered in the
literature. As mentioned in the introduction, we concentrate on
scalable methods, hence
we review lock set-based methods, offer a first definition of
cross-thread critical sections, and then discuss a selection of
partial order-based methods for dynamic race prediction.

\begin{figure}
\bda{lcl}

\ba{@{}l@{}}

\mbox{Program trace $T_0$}

\\

%%latexTrace (addLoc ex14)
\ba{|l|l|l|}
\hline  & \thread{1} & \thread{2} \\ \hline
\eventE{1}  & \forkE{\thread{2}}&\\
\eventE{2}  & \writeE{a}&\\
\eventE{3}  & \lockE{x}&\\
\eventE{4}  & \unlockE{x}&\\
\eventE{5}  & &\lockE{x}\\
\eventE{6}  & &\writeE{a}\\
\eventE{7}  & &\unlockE{x}\\

\hline \ea{}
\ea

 &

\ba{@{}l@{}}

\mbox{Reordered prefix $T_0'$}
\\ \mbox{exhibiting data race}

\\

\ba{|l|l|l|}
\hline  & \thread{1} & \thread{2}\\ \hline
\eventE{1}  & \forkE{\thread{2}}&\\
\eventE{5}  & &\lockE{x}\\
\eventE{6}  & & \writeE{a}\\
\eventE{2}  & \writeE{a}&\\
 \hline \ea{}

 \ea

 &

 \ba{@{}l@{}}

 \mbox{Trace annotated with lock sets}

\\

 \ba{|l|l|l||l|l|}
\hline  & \thread{1} & \thread{2} & L_{t_1} & L_{t_2} \\ \hline
\eventE{1}  & \forkE{\thread{2}}&& \emptyset &  \\
\eventE{2}  & \writeE{a}&& \emptyset &  \\
\eventE{3}  & \lockE{x}&& \{x\} &  \\
\eventE{4}  & \unlockE{x}&& \emptyset &  \\
\eventE{5}  & &\lockE{x} &  & \{ x \} \\
\eventE{6}  & &\writeE{a} &  & \{ x \} \\
\eventE{7}  & &\unlockE{x} &  & \emptyset \\

\hline \ea{}

\ea

 \eda
 \caption{Data race prediction using lock sets}
\label{fig:ex14}
\end{figure}

\subsection{Lock sets for data race prediction}

We review the use of lock sets for data race prediction with the example in figure~\ref{fig:ex14}.
The diagram on the left represents a program run by a trace of events $T_0$.
It visualizes the interleaved execution of the program using
a tabular notation with a separate column for each thread and one
event per row.
The textual order (from top to bottom) reflects the observed temporal order of events.

Each event takes place in a specific thread and represents an operation (formally defined in section~\ref{sec:prelim}).
Operation $\forkE{t}$ starts a new thread with ID $t$ and
operation $\joinE{t}$ synchronizes with the termination of thread $t$.
We use $x, y, z$ for locks and  $a,b,c$ for shared variables.
Operations $\lockE{x}/\unlockE{x}$ acquire/release lock $x$.
Operations $\readE{a}/\writeE{a}$ are shared memory read and write operations on $a$.
The same operation may appear multiple times in a trace, thus
we use indices as in $e_i$ to uniquely identify events in the trace.
% The tabular notation could be more compactly represented as the list $[e_1,e_2, e_3, e_4, e_5, e_6, e_7]$.

Trace $T_0$ contains two conflicting memory operations in events $e_2$ and $e_6$.
Event $e_6$ is protected by lock $x$ whereas event $e_2$ is unprotected.
The conflict in $T_0$ appears to be harmless because
all operations in thread $t_1$ are completed before the operations in thread $t_2$ are executed.
However, there is a reordered prefix of $T_0$ (shown in $T_0'$) that puts the two write
operations back to back, which indicates a data race.

Instead of considering all possible reorderings, the lock set method computes
for each event the set of locks held when processing this event.
Lock sets $L_t$ are computed separately for each thread $t$.
Initially, all lock sets are empty.
Processing operation $\lockE{x}$ in thread $t$ adds $x$ to $L_t$.
Processing operation $\unlockE{x}$ in thread $t$ removes $x$ from $L_t$.

The right diagram in figure~\ref{fig:ex14} shows trace $T_0$ annotated
with lock sets. Columns $L_{t_1}$ and $L_{t_2}$ track the respective
lock set after the operation in the trace.
The observation that the lock sets of the conflicting events $e_2$ and
$e_6$ are disjoint results in a data race warning.

\begin{figure}

\bda{l}

 \mbox{Trace annotated with (standard) lock sets $L$ and (cross-thread) lock sets $C$}

\\

%%latexTrace $ addLoc exCS_2
%% $
\ba{|l|l|l|l||l|l|l||l|l|l|}
\hline  & \thread{1} & \thread{2} & \thread{3} & L_{t_1} & L_{t_2} & L_{t_3} & C_{t_1} & C_{t_2} & C_{t_3} \\ \hline
\eventE{1}  & \forkE{\thread{3}}&& & \emptyset &&& \emptyset &&  \\
\eventE{2}  & \lockE{x}&&   & \{ x \} &&& \{x  \} && \\
\eventE{3}  & \forkE{\thread{2}}&&  & \{ x \} &&& \{ x  \} && \\
\eventE{4}  & &\writeE{a}&  &  & \HIGHLIGHT{\emptyset} && & \HIGHLIGHT{\{ x \}} & \\
\eventE{5}  & \joinE{\thread{2}}&& & \{ x \} &&& \{ x  \} && \\
\eventE{6}  & \unlockE{x}&& & \emptyset &&& \emptyset && \\
\eventE{7}  & &&\lockE{x} & & & \{ x \} && & \{ x \} \\
\eventE{8}  & &&\writeE{a} & & & \{ x \} && & \{ x \} \\
\eventE{9}  & &&\unlockE{x} & & & \emptyset && & \emptyset \\

\hline \ea{}

\eda

 \caption{Cross-thread critical sections}
\label{fig:exCS_2}
\end{figure}

\subsection{Cross-thread critical sections}

The use of locks allows us to regard certain subtraces as critical sections.
We identify a critical section by an acquire event $a$ and its
matching release event $r$.
Events $e$ between $a$ and $r$ are part of this critical section, written $e \in CS(a,r)$.
The standard assumption is that \emph{$e$, $a$, and $r$ belong to the same thread} and $e$ appears between $a$ and $r$
in the textual order of the trace.
This assumption goes back to the original intuition of a critical
section as a sequence of instructions that is executed
atomically \cite{DBLP:journals/cacm/Dijkstra65}.
In the present context, we need a more general definition.

Consider the sequence of events  $[e_2, \dots, e_6]$ in figure~\ref{fig:exCS_2}.
We regard this sequence as a critical section
\emph{that extends across multiple threads}.
We call such critical sections \emph{cross-thread} critical sections, written $\csAR{\cdot}$, characterized as follows:

\begin{enumerate}
\item $r$ is the matching release for $a$ with no other release on the same lock in between.

\item $a$ and $r$ belong to the same thread.

\item $e \in \csAR{a,r}$ if $a \ltMHBNoTSym e \ltMHBNoTSym r$
  where $\ltMHBNoTSym$  denotes the \emph{must} happen-before relation.
\end{enumerate}
The novelty lies in the third condition.
Event $e$ can be in any thread as long as the acquire $a$ must
happen-before $e$, which in turn
must happen-before the release $r$.
That is,  $e$ is surrounded by $a$ and $r$ in any valid reordering of
the trace.
In our example, this ordering is guaranteed by a fork-join dependency, but any
other form of happens-before dependency (like write-read) would also work.
% Our examples use fork-join dependencies as they yield smaller examples.
Formal definitions of (cross-thread) critical sections are given in section~\ref{sec:cross-thread}.

Cross-thread critical sections are a significant addition to the
toolbox of lock set-based data race prediction.
Returning to the example in figure~\ref{fig:exCS_2}, we can see that
the lock set of $e_4$ is empty (column $L_{t_3}$) if we rely on the standard lock set construction.
This outcome leads to a data race warning between $e_4$ and $e_8$ that
cannot be materialized by a reordering, i.e.,  a false positive.
However, with lock sets based on cross-thread critical sections, the
lock set of $e_4$ is $\{x\}$ (column $C_{t_3}$), which  eliminates the false positive.

In section~\ref{sec:lockset-race-pred} we make this claim formal by
showing that a lock set construction based on
cross-thread critical sections strictly improves over the standard construction:
it exhibits fewer false positives than the standard construction
without introducing false negatives.

\subsection{Partial order methods for data race prediction}

Partial order methods define an ordering, say $P$, on events in a
trace and consider unordered events as concurrent.
If conflicting events are unordered they are potentially in a race.
Ideally, $P$ would relate events $e$ and $f$ iff $e$ happens before
$f$ in any valid reordering of the trace. In this case, $P$ would be
\emph{sound} (no false positives) and \emph{complete} (no false
negatives).
Practical methods give up on soundness or completeness to obtain an
efficiently computable relation.

% Roughly, we distinguish between the following classes of partial order relations.
Relations like HB, WCP, and SDP focus on soundness and thus \emph{overapproximate} the must happen-before relation.
That is, if $e$ must happen before $f$, then $e$ is before $f$ in,
say, HB.
% For example, HB might order events even though they may occur in any order.
Complete relations like WDP and PWR \emph{underapproximate} the must happen-before relation.
That is, if neither $e$ must happen before $f$ nor $f$ before $e$,
then $e$ and $f$ are not ordered by, say, WDP.
% For example, two WDP events are unordered even so one must always take place before the other.
The relation DC is a bit of an outlier as it is neither sound nor complete.

In the following, we discuss these partial order relations and study the impact of cross-thread
critical sections on each of them.

\mbox{}
\\
\noindent
{\bf Happens-before.}
Lamport's happens-before (HB) relation \cite{lamport1978time}
prescribes the trace order for all events in the same thread and that
an acquire happens after any release of the same lock that occurs
textually before it in the trace:
\begin{description}
\item[\nf{(PO)}]  $\ltHBNoT{e}{f}$ if $e$ and $f$ are in the same
  thread and $e$ appears before $f$ in the trace.
\item[\nf{(AcqRel)}] $\ltHBNoT{r}{a}$ if $a$ is an acquire and $r$ a release event on the same lock
  and $r$ appears before $a$ in the trace.
\end{description}
As an example consider the left trace in figure~\ref{fig:wcp}.
HB dictates that $\ltHBNoT{e_4}{e_5}$. Therefore the two conflicting
events $e_3$ and $e_7$ are HB-ordered and HB reports no race.

Thanks to rule \nf{(AcqRel)}, HB maintains the order of critical sections according to the
trace. Therefore HB already takes care of cross-thread critical
sections.
In the example in figure~\ref{fig:exCS_2}, rule \nf{(AcqRel)} enforces $\ltHBNoT{e_6}{e_7}$.
Due to the join dependency we also have $\ltHBNoT{e_4}{e_5}$.
By \nf{(PO)} we conclude that $\ltHBNoT{e_4}{e_8}$.
Hence, cross-thread critical sections neither affect the soundness of
HB nor lead to extra false negatives.

The situation is different for approaches that may reorder critical sections.

\begin{figure}

\bda{lcl}

%%latexTrace  $ addLoc exWCP_8
%% $
\ba{|l|l|l|}
\hline  & \thread{1} & \thread{2}\\ \hline
\eventE{1}  & \forkE{\thread{2}}&\\
\eventE{2}  & \lockE{x}&\\
\eventE{3}  & \writeE{a}&\\
\eventE{4}  & \unlockE{x}&\\
\eventE{5}  & &\lockE{x}\\
\eventE{6}  & &\unlockE{x}\\
\eventE{7}  & &\writeE{a}\\

\hline \ea{}

& &

\ba{l}
\mbox{Reordering exhibiting race}

\\

\ba{|l|l|l|}
\hline  & \thread{1} & \thread{2}\\ \hline
\eventE{1}  & \forkE{\thread{2}}&\\
\eventE{5}  & &\lockE{x}\\
\eventE{6}  & &\unlockE{x}\\
\eventE{2}  & \lockE{x}&\\
\eventE{7}  & &\writeE{a}\\
\eventE{3}  & \writeE{a}&\\

\hline \ea{}

\ea

\eda

  \caption{Comparison between HB and WCP. HB: no race (false
    negative). WCP: race}
  \label{fig:wcp}
\end{figure}

\mbox{}
\\
\noindent
{\bf Weak-causally precedes.}
The WCP relation (weak-causally precedes)~\citep{Kini:2017:DRP:3140587.3062374} only orders
critical sections if they contain conflicting events:
\begin{description}
\item[\nf{(a)}] $\ltWCPNoT{r_1}{f}$ if $e \in CS(a_1,r_1)$, $f \in CS(a_2,r_2)$,
          $CS(a_1,r_1)$ appears before $CS(a_2,r_2)$ in the trace and $e$ and $f$ are in a conflict.
\end{description}
There are no conflicts between critical sections in
Figure~\ref{fig:wcp}.  The two write events $e_3$ and $e_7$ are not WCP-ordered and
WCP correctly finds a race, as materialized by the reordering on the right.

WCP checks for conflicts using the standard notion of a critical section $CS(\cdot)$.
This sounds like trouble in situations as shown in figure~\ref{fig:exCS_2}.
Events $e_4$ and $e_8$ are in a conflict, but $e_4$ is not part of a
critical section (from WCP's point of view) so that rule \nf{(a)} is
not sufficient to order $e_4$ before $e_8$.

However, WCP imposes two additional rules:
\begin{description}
\item[\nf{(b)}] Fork-join dependencies are WCP-ordered.
\item[\nf{(c)}] WCP composes to the left and right with the HB relation.
  That is, $\ltWCPNoT{e}{f}$ if either $\ltWCPNoT{e}{g}$ and $\ltHBNoT{g}{f}$ or
                       $\ltHBNoT{e}{g}$ and $\ltWCPNoT{g}{f}$.
 \end{description}
From (b) we obtain $\ltWCPNoT{e_4}{e_5}$.
The happens-before relation yields $\ltHBNoT{e_5}{e_8}$.
Thus, we derive $\ltWCPNoT{e_4}{e_8}$ via rule (c).
We observe that WCP is unaware of cross-thread critical sections when applying rule (a).
Thanks to the additional rules (b) and (c), WCP will not falsely issue
a race warning, so its soundness is not affected for this example.

\begin{figure}

\bda{lcl}

%% latexTrace $ addLoc exWCP_7b
\ba{|l|l|l|}
\hline  & \thread{1} & \thread{2}\\ \hline
\eventE{1}  & \forkE{\thread{2}}&\\
\eventE{2}  & \writeE{a}&\\
\eventE{3}  & \lockE{y}&\\
\eventE{4}  & \writeE{b}&\\
\eventE{5}  & \unlockE{y}&\\
\eventE{6}  & &\lockE{y}\\
\eventE{7}  & &\writeE{b}\\
\eventE{8}  & &\unlockE{y}\\
\eventE{9}  & &\writeE{a}\\

 \hline \ea{}

& &

\ba{l}
\mbox{Reordering exhibiting race}

\\

\ba{|l|l|l|}
\hline  & \thread{1} & \thread{2}\\ \hline
\eventE{1}  & \forkE{\thread{2}}&\\
\eventE{6}  & &\lockE{y}\\
\eventE{7}  & &\writeE{b}\\
\eventE{8}  & &\unlockE{y}\\
\eventE{9}  & &\writeE{a}\\
\eventE{2}  & \writeE{a}&\\

 \hline \ea{}

\ea

\eda

  \caption{WCP no race (false negative), SDP race}
  \label{fig:exWCP_7b}
\end{figure}

\mbox{}
\\
\noindent
{\bf Strong dependently precedes.}
The SDP relation (strong dependently precedes)~\cite{10.1145/3360605}
relaxes WCP with the goal to eliminate some false negatives. To this
end, SDP modifies WCP's rule \nf{(a)} to ignore conflicts between
write operations:
\begin{description}
\item[\nf{(a')}] $\ltSDPNoT{r_1}{f}$ if $e \in CS(a_1,r_1)$, $f \in CS(a_2,r_2)$,
  $CS(a_1,r_1)$ appears before $CS(a_2,r_2)$ in the trace and $e$ and $f$ are in a conflict and either $e$ or $f$ is a read
   operation.
\end{description}

The example in figure~\ref{fig:exWCP_7b} illustrates the difference.
For WCP, we find $\ltWCPNoT{e_6}{e_8}$ via rule (a)
and by applying rule (c) we obtain that $\ltWCPNoT{e_3}{e_{10}}$.
So, WCP is unable to detect the race between conflicting events $e_3$ and $e_{10}$.
In contrast, SDP replaces rule \nf{(a)} by rule \nf{(a')} with the
result that $e_3$ and $e_{10}$ are unordered under SDP and the race is
detected.

Moreover,  SDP does not order the conflicting events $e_5$ and $e_8$.
This conflict would be a false positive, but SDP does not report it
because the lock sets of $e_5$ and $e_8$ are not disjoint.

\begin{figure}

\bda{lcl}

%latexTrace $ addLoc exWCP_9d
%%% $
\ba{|l|l|l|l|}
\hline  & \thread{1} & \thread{2} & \thread{3}\\ \hline
\eventE{1}  & \forkE{\thread{3}}&&\\
\eventE{2}  & \lockE{x}&&\\
\eventE{3}  & \forkE{\thread{2}}&&\\
\eventE{4}  & &\lockE{y}&\\
\eventE{5}  & &\unlockE{y}&\\
\eventE{6}  & \joinE{\thread{2}}&&\\
\eventE{7}  & \writeE{a}&&\\
\eventE{8}  & \unlockE{x}&&\\
\eventE{9}  & &&\lockE{y}\\
\eventE{10}  & &&\lockE{x}\\
\eventE{11}  & &&\unlockE{x}\\
\eventE{12}  & &&\writeE{a}\\
\eventE{13}  & &&\unlockE{y}\\

 \hline \ea{}

& &

\ba{l}

\mbox{Reordering that gets stuck}

\\

\ba{|l|l|l|l|}
\hline  & \thread{1} & \thread{2} & \thread{3}\\ \hline
\eventE{1}  & \forkE{\thread{3}}&&\\
\eventE{2}  & \lockE{x}&&\\
\eventE{3}  & \forkE{\thread{2}}&&\\
\eventE{9}  & &&\lockE{y}\\
\eventE{10}  & && \HIGHLIGHT{\lockE{x}}\\

\eventE{4}  & &\HIGHLIGHT{\lockE{y}}&\\

 \hline \ea{}

\ea

\eda

 \caption{Trace with no predictable race, nor a predictable deadlock}
\label{fig:exWCP_9d}
\end{figure}

\mbox{}
\\
\noindent
{\bf WCP soundness in the presence of cross-thread critical sections.}
WCP is (weakly) sound in the following way~\cite{Kini:2017:DRP:3140587.3062374,10.1145/3360605}:
If some trace exhibits a WCP-race then either there is a predictable race or a predictable deadlock.
This property applies to the first race reported.
Predictable race means that we can reorder the trace such that the two conflicting events
appear right next two each other.
Predictable deadlock means that a set of threads is blocked because
each thread fails to acquire a lock because this lock has been acquired
by some of the other threads.

The WCP relation is based on the standard definition of critical sections.
Hence, we ask the question if cross-thread critical sections
threaten soundness of WCP.
Consider the trace in figure~\ref{fig:exWCP_9d}.

The conflicting events are not in a predictable race as
there is no reordering under which we can place them next to each
other, but there is a reordering that gets stuck (shown on the right
of figure~\ref{fig:exWCP_9d}).
In this reordering,
thread $t_2$ attempts to acquire lock $y$, which is held by $t_3$ and
thread $t_3$ attempts to acquire lock $x$, which is held by $t_1$.
See the highlighted blocked operations.

However, this form of \emph{stuckness} is not a predictable deadlock.
The difference to a (standard) predictable deadlock situation is that further threads, beyond the deadlocked threads $t_2$ and $t_3$, are involved.
Lock $x$ is held by thread $t_1$, not by thread $t_2$, due to the cross-thread critical section.

We conclude that the standard notion of a predictable deadlock is insufficient in general
to capture all stuck situations that arise due to cycles among acquire operations.
Fortunately, WCP imposes strong conditions so that the conflicting events
$e_7$ and $e_{12}$ are ordered under WCP.

Rule (a) does not apply because conflicting events $e_7$ and $e_{12}$ are not part of critical sections
that share the same lock.
From rule (b) we obtain $\ltWCPNoT{e_5}{e_6}$.
In combination with rule (c) we find that $\ltWCPNoT{e_2}{e_{11}}$.
At this point, it seems that $e_7$ and $e_{12}$ are not ordered under WCP.
However, WCP imposes the following fourth rule.
\begin{description}
\item[\nf{(d)}] $\ltWCPNoT{r_1}{r_2}$ if $\ltWCPNoT{a_1}{r_2}$ for critical sections
     $CS(a_1,r_1)$ and $CS(a_2,r_2)$.
\end{description}
Thus, we can conclude that $\ltWCPNoT{e_8}{e_{11}}$.
In combination with rule (c) we then obtain that $\ltWCPNoT{e_7}{e_{12}}$.

In appendix~\ref{sec:wcp}, we examine the WCP soundness proof in detail.
The proof only assumes standard critical sections.
We show which parts of the proofs are affected and show how the arguments can be generalized
to take into account cross-thread critical sections.
The same observations should apply to SDP as well but we have not yet fully worked out all details.

\begin{figure}

\bda{l}

%%  latexTrace $ addLoc exDC_2b
\ba{|l|l|l|l|l|}
\hline  & \thread{1} & \thread{2} & \thread{3} & \thread{4}\\ \hline
\eventE{1}  & \forkE{\thread{3}}&&&\\
\eventE{2}  & \lockE{x}&&&\\
\eventE{3}  & \writeE{a}&&&\\
\eventE{4}  & \forkE{\thread{2}}&&&\\
\eventE{5}  & &\writeE{b}&&\\
\eventE{6}  & \joinE{\thread{2}}&&&\\
\eventE{7}  & \unlockE{x}&&&\\
\eventE{8}  & &&\lockE{x}&\\
\eventE{9}  & &&\forkE{\thread{4}}&\\
\eventE{10}  & &&&\writeE{b}\\
\eventE{11}  & &&\readE{a}&\\
\eventE{12}  & &&\unlockE{x}&\\

\hline \ea{}

\eda

 \caption{\DCTaskPar\ race (false positive), loosely released cross-thread critical section}
\label{fig:exDC_2b}

\end{figure}

\mbox{}
\\
\noindent
{\bf Does-not-commute for task parallel programs.}
The does-not-commute (DC)
relation~\cite{Roemer:2018:HUS:3296979.3192385} omits rule \nf{(c)}
from WCP, i.e., DC does not compose with HB.
This modification leads to false positives, even for the first race reported.
\citet{8894270} consider the variant \DCTaskPar\ in the context of task parallel programs.
\DCTaskPar\ applies the same partial order rules as DC but assumes that there is a single lock only.
Hence, a successful program run will not end in a deadlock.
We might expect that it becomes easier to obtain a soundness result
as we do not need to consider the case of a predictable deadlock.
This is not the case.

For the example in figure~\ref{fig:exCS_2}, \DCTaskPar\ wrongly claims that there is a race.
The two conflicting events $e_4$ and $e_8$ are unordered under \DCTaskPar, because
(a) \DCTaskPar\ uses the standard notion of a critical section (like WCP and SDP), and
(b) \DCTaskPar\ does not compose with HB (unlike WCP and SDP).
We conclude that \DCTaskPar\ is unsound in the presence of cross-thread critical sections.

We might hope to restore soundness for \DCTaskPar\ by adjusting rule (a) to replace standard critical sections $\CS\cdot$
with cross-thread critical sections $\csAR{\cdot}$.
\begin{description}
\item[\nf{(a'')}] $\ltDCNoT{r_1}{f}$ if $e \in \csAR{a_1,r_1}$, $f \in \csAR{a_2,r_2}$,
          $\csAR{a_1,r_1}$ appears before $\csAR{a_2,r_2}$ in the trace and $e$ and $f$ are in a conflict.
\end{description}
This change eliminates the false positive in figure~\ref{fig:exCS_2},
but is insufficient in general, as demonstrated with the trace in figure~\ref{fig:exDC_2b}.
The conflicting events $e_5$ and $e_{10}$ are in a \DCTaskPar\ race,
but this race  is not predictable.
Event $e_{10}$ is covered by a ``loosely released'' critical section.
In combination with the write-read dependency among $e_3$ and $e_{11}$,
we conclude that $e_7$ must happen-before $e_8$.

In summary, we see that the adjustments to obtain soundness for \DCTaskPar\ are non-trivial
as we also need to reason about further variants of cross-thread critical sections.
Hence, we focus our attention on partial order methods that are generally unsound (considering the first race reported)
with the goal to eliminate as many false positives as possible.

\mbox{}
\\
\noindent
{\bf WDP and PWR.}
The relations WDP~\cite{10.1145/3360605}
and its improvement PWR~\cite{10.1145/3426182.3426185}
underapproximate must happen-before relations
by weakening rule (a):
\begin{description}
\item[\nf{(a''')}] $\ltPWRNoT{r_1}{f}$ if $e \in CS(a_1,r_1)$, $f \in CS(a_2,r_2)$,
  $CS(a_1,r_1)$ appears before $CS(a_2,r_2)$ in the trace and $e$ is a write and $f$ is a read operation.
\end{description}
If $\ltPWRNoT{e}{f}$, then $e$ appears before $f$ in any correct reordering.
Hence, a partial order-based race check using WDP and PWR guarantees that there are no false negatives.
Like SDP, they make use of the lock set to eliminate some, but not all false positives.

% That is, if two events are PWR-ordered, then they appear in that
% sequence in all reorderings.
% These relations remain complete in the sense that they do not have any false negatives
% without having too many false positives.

WDP and PWR remain complete (no false negatives) in the presence of
cross-thread critical sections, but
we can improve their precision (eliminate more false positives ).
For example, in figure~\ref{fig:exCS_2} the
conflicting events $e_4$ and $e_8$ are unordered under PWR and their standard lock set is disjoint.
By using cross-thread lock sets we can eliminate false positives like
this one.
We can also incorporate cross-thread critical sections into rule (a''') to eliminate
further false positives.
Details are discussed in section~\ref{sec:pwr}.

\subsection{Implementation and experiments}

For our experimental evaluation we enhanced the PWR data race predictor
so that PWR is aware of cross-thread critical sections. The goal was to answer the
following research questions.

\begin{description}
\item[RQ1] Can we compute cross-thread critical sections and the associated
  lock sets efficiently?

\item[RQ2] What is the effect on the analysis results of  incorporating
  cross-thread critical section?
\end{description}

As PWR underapproximates the must happen-before relation,
we can use PWR to underapproximate the cross-thread lock set.
That is, a PWR-computed cross-thread lock set may be smaller than the
``true'' cross-thread lock set,
but the computation can be done efficiently and our experiments
confirm that there are a number of examples where we can eliminate
false positives.
Details are discussed in section~\ref{sec:experiments}.

\section{Preliminaries}
\label{sec:prelim}

\noindent
{\bf Events and Traces.}
We consider concurrent programs with shared variables and locks.
Concurrency primitives are \textit{acq}uire and \textit{rel}ease of locks (mutexes) as well
as \textit{fork} to start a new thread and \textit{join} to synchronize with its termination.
% We assume the sequential consistency
% memory model~\cite{Adve:1996:SMC:619013.620590}, so that a program's
% execution can be seen as a trace of events.

\begin{definition}[Events and Traces]
\label{def:run-time-traces-events}
\bda{lcll}
T & ::= & [] \mid e : T   & \mbox{Traces}
 \\ e & ::= & (\alpha, t,op) & \mbox{Events}
  \\ op & ::= &  \readE{a}
           \mid \writeE{a}
           \mid \lockE{x}
           \mid \unlockE{x}
           \mid \forkE{t}
           \mid \joinE{t}
           & \mbox{Operations}
\\ t,s,u & ::= & t_1 \mid t_2 \mid ... & \mbox{Thread ids}
\\  x, y, z & ::= & x_1\mid ... & \mbox{Lock variables}
\\  a, b, c & ::= & a_1 \mid ... & \mbox{Shared variables}
\\ \alpha, \beta,\delta & ::= & 1 \mid 2 \mid ... & \mbox{Unique event identifiers}
\eda
\end{definition}
A trace $T$ is a list of events reflecting a single execution of a
concurrent program under the sequential consistency memory
model~\cite{Adve:1996:SMC:619013.620590}. We write $[o_1,\dots,o_n]$
for a list of objects as a shorthand of $o_1:\dots:o_n:[]$ and use the
operator  $\pp$ for list concatenation.

An event $e$ is represented by a triple $(\alpha, t,op)$
where
$\alpha$ is a unique event identifier,
$op$ is an operation, and
$t$ is the thread id in which the operation took place.
The main thread has thread id $\thread{1}$.
The unique event identifier allows us to unambiguously identify events in case of trace reordering.

The operations $\readE{a}$ and $\writeE{a}$ denote read and write
on a shared variable $a$.
We let $\lockE{x}$ and $\unlockE{x}$ denote acquire and release of a lock $x$.
We write $\forkE{t}$ for the creation of a new thread with thread id $t$.
We write $\joinE{t}$ for a join with a thread with thread id $t$.

Our tabular notation for traces has one column per thread. The events
for a thread are lined up in the thread's column and the trace
position corresponds to the row number.

We write $\evtAA = (t,op)$ as a shorthand for $\evtAA = (\alpha, t,
op)$ and $\thd{\evtAA} = t$ to extract the thread id from this event.
%% \pt{We could use $t_\alpha$ for $t$ and $op_\alpha$ for $op$ if $\evt{e}{\alpha} = (t,op)$.}
%% \ms{not sure this is needed}
% We write $e \in T$ if there exist $T_1$ and $T_2$ such that $T = T_1 \pp [e] \pp T_2$.
% We define $\posP{T}{\evt{e}{\alpha}} = k$ if $\evt{e}{\alpha}$ is the $k$-th event in~$T$.
The notation $e \in T$ indicates that $T = [e_1, \dots, e_n]$ and $e = e_k$, for some
$1\le k\le n$. In this case, we define $\posP{T}{e} = k$.
The set of events in a trace is $\events{T} = \{
e \mid e \in T\}$.
The set of thread ids in a trace is $\threads{T} = \{ \thd{e} \mid e \in T \}$.
%% \pt{We could introduce the index operator $T\ii k$ to define $e\in T$ and the position below.}
%% \ms{but then need to define $\ii$ ...}

% We write $\evt{e}{\alpha} = (t,op) \in T$ as a shorthand for $(\alpha, t, op) \in T$.
% where $\evt{e}{\alpha}$ stands for $(\alpha,t,op)$.

For trace $T$ and events $e, f \in \events{T}$,
we define $\ltTrace{T}{e}{f}$ if $\posP{T}{e} < \posP{T}{f}$.
We write $\ltTr{e}{f}$ if the context uniquely identifies the trace $T$.

%% \pt{It would be nice for the following definition to have a notation like $T\downarrow t$ to project the events of thread $t$ from $T$.}
%% \ms{\proj{}{} is a macro.}

\noindent
{\bf Well-formedness.}
Traces must be well-formed.
We adopt the  sequential consistency conditions for concurrent objects
of Huang and others~\cite{Huang:2014:MSP:2666356.2594315}.
For example, events can only happen in a thread between its creation
and its termination, each release must be preceded an acquire on the same lock etc.

\begin{definition}%[Well-Formed Trace]
\label{def:well-formed-trace}
% We assume that the main thread always has the thread id $\threadMain$.
A trace $T$ is \emph{well-formed} if the following conditions are satisfied:
\begin{description}
\item[Lock-1:]
  For each pair of acquire events
  $\evtAcc = (t,\lockE{y}), \evtAccA = (s,\lockE{y}) \in T$
  where  $\ltTr{\evtAcc}{\evtAccA}$
  there exists a release event $\evtRel = (t,\unlockE{y}) \in T$ such that
  $\ltTrr{\evtAcc}{\evtRel}{\evtAccA}$.

 \item[Lock-2:]
   For each release event $\evtRel = (t,\unlockE{y}) \in T$
   there exists an acquire $\evtAcc = (t,\lockE{y}) \in T$
   such that
   $\ltTr{a}{r}$
   and there is no release event $\evtRelA = (s,\unlockE{y}) \in T$
   with $\ltTrr{\evtAcc}{\evtRelA}\evtRel$.

 \item[Fork-1:]
   For each thread id $t \ne t_1$ there exists at most one event with operation $\forkE{t}$
   in trace~$T$ and $\forkE{t_1}$ does not appear at all.

 \item[Fork-2:]
   For each event $\evtAA = (t,op) \in T$ where $t \not= t_1$ there exists
   $\evtBB = (s,\forkE{t}) \in T$ where
   $\ltTr{\evtBB}{\evtAA}$.

 \item[Join:]
   For each join event $\evtAA = (s,\joinE{t}) \in T$ we have
   that $s \ne t$ and for all events $\evtBB = (t,op) \in T$  we find that
   $\ltTr{\evtBB}{\evtAA}$.
\end{description}
\end{definition}

Conditions \textbf{Lock-1} and \textbf{Lock-2} together state that a
correct lock admits an alternating sequence of acquire and release
operations, where each acquire and release pair is from the same
thread. Locks do not have to be released at the end of a trace.

Condition \textbf{Fork-1} states that a thread can be created at most once.
Condition \textbf{Fork-2} states that each thread except the main thread
is preceded by a fork event.
Both conditions imply that for each event ${(\beta, t,\forkE{s})}$ we have $t \ne s$.

Condition \textbf{Join} states that all events from a joined thread appear
before the join event.
There can be several join events $\joinE{t}$ for the same thread
$t$. A join operation $\joinE{t}$ does not necessarily
have to appear in the thread that forked thread $t$.

\begin{figure}
%%\begin{wrapfigure}[15]{R}{0pt}
  \centering
  %%   latexTrace $ addLoc ex6
  \bda{|l|l|l|l|} \hline & \thread{1} & \thread{2} & \thread{3}\\
  \hline
  \eventE{1}  & \forkE{\thread{2}}&&\\
  \eventE{2}  & \forkE{\thread{3}}&&\\
  \eventE{3}  & \lockE{x}&&\\
  \eventE{4}  & &\lockE{y}&\\
  \eventE{5}  & &\unlockE{y}&\\
  \eventE{6}  & &&\lockE{z}\\
  \eventE{7}  & &&\joinE{\thread{2}}\\
  \eventE{8}  & &&\unlockE{z}\\
  \eventE{9}  & \joinE{\thread{2}}&&\\
  \eventE{10}  & \unlockE{x}&&\\
  \hline \eda{}
  %% fontlock exit math $
 \caption{Example of well-formed trace}
 \label{fig:ex-well-formed}
 %%\end{wrapfigure}
 \end{figure}

\begin{example}
  The trace in figure~\ref{fig:ex-well-formed} is well-formed as all conditions of
  definition~\ref{def:well-formed-trace} are met. The event
  $\joinE{\thread{2}}$ appears twice, which is unusual, but accepted
  by our definition.

 % \ms{further notes}
 % Example could be reused later.
 % Event $\eventE{4}$ holds lock $x$ and $z$.
 % Hence, we find the lock dependency $\LDS{\thread{2}}{y}{x,z}$.
\end{example}

\noindent
{\bf Trace Reordering and Must-Happen-Before.}
A trace represents one possible interleaving of concurrent events.
In theory, there can be as many interleavings as there are permutations of the original trace.
However, not all permutations are feasible in the sense that they
could be reproduced by executing the program with a different schedule.
In addition to well-formedness, a reordering must guarantee that (a) the program order
and (b) last writes are maintained.
A reordering maintains program order if the order of events within any thread remains the same.
A reordering maintains last writes if any read observes the same write event.
The latter ensures that every read obtains the same value, so that the
control flow of the program remains the same.
% If a read obtained a different value,
% execution and the resulting trace might fundamentally change. For example,
% if the value read is used to evaluate a condition.
We now  formalize the criteria for correct reorderings.

From now on, we assume that $T$ is a well-formed trace.
The \emph{projection} of $T$ onto thread $t$ is the trace $T' =
\proj{t}{T}$ consisting of all events $e\in T$ with $\thd{e} = t$ in the same order as in $T$. That is, (1) for
each $\evtAA = (t,op) \in T$ we have that $\evtAA \in T'$, and
(2) for each $\evtAA, \evtBB \in T'$,
$\posP{T'}{\evtAA} < \posP{T'}{\evtBB}$
implies $\posP{T}{\evtAA} < \posP{T}{\evtBB}$.

We define $\rwTa$ as the set of all read/write
events in $T$ on some shared variable $a$.
We define $\rwTr$ as the union of $\rwTa$ for all shared variables $a$.

Take events $\evtAA = (t, \readE{a}), \evtBB = (s, \writeE{a}) \in \rwTa$.
We say that $\evtBB$ is the \emph{last write} for $\evtAA$ w.r.t.~$T$ if
(1) $\ltTr{\evtBB}{\evtAA}$ , and
(2) there is no event $\evtCC = (u, \writeE{a}) \in T$ such that
    $\ltTrr{\evtBB}{\evtCC}{\evtAA}$.
That is, the last write appears before the read with no other write in between.

\begin{definition}[Correct Trace Reordering]
  \label{def:correct-trace-reordering}
% Let $T$ be a well-formed trace.
Trace $T'$ is a \emph{correctly reordered prefix} of $T$ if the following
conditions are satisfied:
\begin{description}
\item[WF:] Trace $T'$ is well-formed and $\events{T'} \subseteq \events{T}$.

\item[PO:] For each thread id~$t$, $\proj{t}{T'}$ is a prefix of $\proj{t}{T}$.

\item [LW:] For each read event $\evtAA = (\alpha,t,\readE{a}) \in T'$
            where $\evtBB = (\beta, s, \writeE{a})$ is the last write for $\evtAA$ w.r.t.~$T$,
            it must be $\evtBB \in T'$ and
            $\evtBB$ is also the last write for $\evtAA$
            w.r.t.~$T'$.\footnote{Unique event identifiers are crucial for this condition.}
\end{description}
\end{definition}

By considering all (correct) trace reorderings that are derivable from a given trace,
we can determine if an event \emph{must happen  before} another event.

\begin{definition}[Must-Happen-Before Relation]
  \label{def:must-happen-before}
%  Let $T$ be a trace.
  The \emph{must-happen-before relation}
  $\ltMustHB{T}{}{}$ is a  binary relation on
  $\events{T}$ such that for all distinct events $e, f \in
  T$,   $\ltMustHB{T}{e}{f}$ if
  for all correctly reordered traces $T'$  of $T$ such that $e, f \in
  T'$ we have
  $\ltTrace{T'}{e}{f}$.

  We write $\concMustHB{T}{e}{f}$ if neither $\ltMustHB{T}{e}{f}$ nor
  $\ltMustHB{T}{f}{e}$ holds.

  We write $\ltMHB{\evtAA}{\evtBB}$ if trace $T$ is determined
  by the context.
\end{definition}

\begin{lemma}
  The relation $\ltMustHB{T}{}{}$ is a strict partial ordering on $\events{T}$.
\end{lemma}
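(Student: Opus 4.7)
The plan is to verify irreflexivity and transitivity, the two defining properties of a strict partial order.

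Irreflexivity is immediate: the definition of $\ltMustHB{T}{e}{f}$ is only given for distinct $e,f$, and independently the positional order $\ltTraceSym{T'}$ is strict, so $\ltTrace{T'}{e}{e}$ is impossible. A useful corollary follows by noting that $T$ itself is a correctly reordered prefix of $T$: applied to the definition this yields $\ltMustHB{T}{e}{f} \Rightarrow \ltTrace{T}{e}{f}$, and in particular $\ltMustHB{T}{e}{f}$ and $\ltMustHB{T}{f}{e}$ cannot both hold, giving asymmetry for free.

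For transitivity, assume $\ltMustHB{T}{e}{f}$ and $\ltMustHB{T}{f}{g}$. The corollary yields $\ltTrace{T}{e}{f}$ and $\ltTrace{T}{f}{g}$, hence $e \neq g$. Fix any correctly reordered prefix $T'$ with $e,g \in T'$; we must show $\ltTrace{T'}{e}{g}$. If $f \in T'$ as well, the two hypotheses applied to $T'$ together with transitivity of positional order close the case. Otherwise we invoke an extension claim: any correctly reordered prefix $T'$ of $T$ can be extended to a correctly reordered prefix $T''$ of which $T'$ is a list-prefix and in which a prescribed further event of $T$ (here $f$) appears. Granted this, $e,f,g \in T''$, the two hypotheses applied in $T''$ chain together to give $\ltTrace{T''}{e}{g}$, and because $T'$ is a list-prefix of $T''$ the positions of $e$ and $g$ are unchanged between the two traces, so $\ltTrace{T'}{e}{g}$ follows.

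The step I expect to be the main obstacle is the extension claim. The natural construction appends events of $\events{T} \setminus \events{T'}$ to $T'$ in the order they occur in $T$, stopping once $f$ has been included. The well-formedness conditions of definition~\ref{def:well-formed-trace} and the PO condition transfer because $T$-order is preserved in the appended suffix and any constraining intermediate event from $T$ (such as an intervening release between two acquires of the same lock) is either already present in $T'$ or enters the suffix at its $T$-position. The delicate part is LW: a read appearing in the suffix could in principle find a different last-write witness in $T''$ if $T'$ had already permuted two cross-thread writes to the same variable, so one needs to argue that the chain of dependencies that justifies $\ltMustHB{T}{f}{g}$ forces the relevant events into $T'$ already and rules out this pathology. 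Once the extension is in place, the transitivity conclusion follows as described.
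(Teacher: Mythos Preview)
Your instinct that the paper's two-line argument silently assumes $f\in T'$ is correct, and the extension lemma is the natural repair to attempt. But the extension claim is self-defeating in precisely the case you need it: if $\ltMustHB{T}{f}{g}$, $g\in T'$, $f\notin T'$, and $T''$ is a correctly reordered prefix with $T'$ as a list-prefix and $f\in T''$, then $f$ lies in the appended suffix, so $\posP{T''}{g}<\posP{T''}{f}$; yet $f,g\in T''$ together with $\ltMustHB{T}{f}{g}$ force $\posP{T''}{f}<\posP{T''}{g}$. So no such $T''$ exists, and the chain $\ltTrace{T''}{e}{f}$, $\ltTrace{T''}{f}{g}$ you want to form is unavailable. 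Your closing remark that the dependencies behind $\ltMustHB{T}{f}{g}$ ``force the relevant events into $T'$'' does not rescue this: it would have to force $f$ itself into $T'$, which it does not.

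In fact transitivity fails outright under the paper's definition. Take
\[
T=[\,(t_1,\forkE{t_3}),\,(t_1,\writeE{a}),\,(t_1,\forkE{t_2}),\,(t_2,\writeE{b}),\,(t_3,\joinE{t_2}),\,(t_3,\writeE{c})\,]
\]
with $e=e_2$, $f=e_4$, $g=e_6$. One has $\ltMustHB{T}{e}{f}$ (any prefix containing $e_4$ contains $e_3=\forkE{t_2}$ by \textbf{Fork-2}, hence $e_2$ before it by \textbf{PO}) and $\ltMustHB{T}{f}{g}$ (any prefix containing both has $e_5$ before $e_6$ by \textbf{PO} and $e_4$ before $e_5$ by \textbf{Join}). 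But $T'=[e_1,e_5,e_6,e_2]$ is a correctly reordered prefix: it is well-formed (\textbf{Join} at $e_5$ is vacuous since no $t_2$-event is present; \textbf{Fork-2} is met via $e_1$), and \textbf{PO}, \textbf{LW} hold. In $T'$ we have $g$ before $e$, so $\neg\,\ltMustHB{T}{e}{g}$. The paper's proof writes ``for all correctly reordered traces $T'$, $\ltTrace{T'}{e}{f}$ and $\ltTrace{T'}{f}{g}$'' as if every such $T'$ contained $f$; that step is exactly the gap you identified, and it cannot be closed without altering the definition (for instance, quantifying only over reorderings with $\events{T'}=\events{T}$, under which the paper's argument goes through verbatim).
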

\begin{proof}
  Irreflexivity is immediate.
  For transitivity assume that $\ltMustHB{T}{e}{f}$ and
  $\ltMustHB{T}{f}{g}$.
  That is, for all correctly reordered traces $T'$, $\ltTrace{T'}{e}{f}$ and
  $\ltTrace{T'}{f}{g}$.  Transitivity of $\ltTrace{T'}{}{}$
  yields $\ltTrace{T'}{\evtAA}{\evtCC}$, which implies $\ltMustHB{T}{\evtAA}{\evtCC}$.
\end{proof}

% \ms{note}
% \cite{Huang:2014:MSP:2666356.2594315} refer to fork/join conditions as ``must happen-before'' conditions,
% we consider all conditions as part of the must happens-before relation.

\noindent
{\bf Predictable data races.}
To talk about data races, we first define when two events are in
conflict: they have to take place in different threads, they must refer
to read/write operations, and one of them must be a write.

    We generally assume that $T$ is a well-formed
trace. We omit indices if the trace $T$ is clear from
the context.

\begin{definition}
  Let $\evtAA = (s, op), \evtBB = (t, op') \in T$. Events $\evtAA$ and
  $\evtBB$ are \emph{conflicting} (notation: $\conflict\evtAA\evtBB$) if $s \ne t$ and there exists some variable
  $a$ such that
  $op = \preRead{a}$ and $op' = \preWrite{a}$ or
  $op = \preWrite{a}$ and $op' = \preRead{a}$ or
  $op = \preWrite{a}$ and $op' = \preWrite{a}$.
\end{definition}
Conflicting events are harmless as long as they are properly
ordered. If they are not sufficiently constrained, we predict a data race.
\begin{definition}[Predictable Data Race]
  % Let $T$ be a well-formed trace.
  Let $e, f \in T$ be two conflicting events.

  We say that $(e, f)$ are in a \emph{predictable data race}
  (notation $\DataRace{T}{e}{f}$) if
  there exists a correctly reordered prefix $T'$ of $T$ such that
  $e$ and $f$ appear right next to each other in $T'$.

  Let $\TrueDR[T] = \{ (e, f) \mid \DataRace{T}{e}{f} \}$ be the
  set of predictable data races.
\end{definition}

%%\ms{note: maybe use here must-happens before, if not in must-happens before relation then there's a reordering ...}

% We write $\DR{\evtAA}{\evtBB}$ if the trace $T$ is determined by the
% context.
The set $\TrueDR[T]$ serves as the ground truth for dynamic
data race analysis. It is computable, but no efficient algorithm is known.

%%% Local Variables:
%%% mode: latex
%%% TeX-master: "main"
%%% End:

%--------------------------------------------------------
%--------------------------------------------------------
\section{Cross-Thread Critical Sections}
\label{sec:cross-thread}

Informally, an event is part of a critical section if it must happen
between an acquire of a lock
and its matching release.
In this case we say that the event \emph{holds the lock} with the
understanding that an event can hold more than one lock if it is
bracketed by
more than one matching pair of acquire and release operations. This observation gives
rise to the notion of the lock set of an event.

We start this section by defining critical sections according to their
standard use in the literature, point out that the fixation on a
single thread is an intrinsic shortcoming of this definition,
and propose alternative definitions that address this shortcoming.

We assume that $T$ is a well-formed trace with a matching release
event for each acquire.\footnote{This assumption imposes no
  restriction as we can always add missing release events at the
  end of a trace.}
Most definitions are indexed with a trace $T$; we omit the index if
it is clear from the context.

%--------------------------------------------------------
\subsection{Standard Critical Sections}
\label{sec:stand-crit-sect}

While the notion of a critical section goes back to Dijkstra
\cite{DBLP:journals/cacm/Dijkstra65}, it is hard to find a
mathematical definition.
The following definition captures the usual understanding in the literature.
\begin{definition}[Standard Critical Section]
\label{def:std-cs}
  Suppose there are events
  $\evtAcc = (t, \lockE{x}), \evtRel = (t, \unlockE{x}) \in T$ for
  some lock $x$.
  We say that $e \in T$ is in the
  \emph{standard critical section for lock $x$ guarded by acquire $\evtAcc$
    and release $\evtRel$},
  written $e \in \StdCSect{T}{\AcqRelPair{\evtAcc}{\evtRel}}{x}$ if
  the following three conditions hold:

  \begin{description}
  \item[CS-BRACKET:]  $\ltMustHBB{T}{\evtAcc}{e}{\evtRel}$.

  \item[CS-MATCH:] There is no release event $\evtRelA= (t, \unlockE{x})
    \in T$  and
     correct reordering $T'$ of $T$
     such that $\ltTraceT{T'}{\evtAcc}{\evtRelA}{e}$.

  \item[CS-SAME-THREAD:] $e = (\alpha, t,op)$ for some operation $op$ in the
    same thread as $\evtAcc$ and $\evtRel$.

  \end{description}
\end{definition}

We call $\evtAcc, \evtRel$  a \emph{matching acquire-release pair}.
We omit guard events if they are clear from the context.
Condition \textbf{CS-BRACKET} states that the acquire-release pair $\evtAcc$ and $\evtRel$ must happen before and after~$e$.
Condition \textbf{CS-MATCH} guarantees that $\evtRel$ is the matching release for the acquire event $\evtAcc$.
Condition \textbf{CS-SAME-THREAD} states that $e$ is in the same thread as the acquire-release pair.
%

%% PT: confused - is that even true?
% Under this condition, \textbf{CS-BRACKET} is equivalent to $\ltTraceT{T}{\evtAcc}{e}{\evtRel}$.
% We use the more general formulation in \textbf{CS-BRACKET} to highlight
% that the notion of a critical section can be generalized by simply dropping condition \textbf{CS-SAME-THREAD}.

%% MS: omit
%% \ms{Could combine/simplify BRACKET and STD-2} but this way nicer as the refined versions
%% are more closely connected.
%%
%% \pt{would be sufficient to write
%%   $\StdCSect{T}{\AcqRelPair{\beta}{\gamma}}{x}$ as events are uniquely
%% determined by their identifiers}

Lock $x$ is held by some event $e$ if the event is part of a critical section for lock~$x$.

\begin{definition}[Standard Locks Held]
  % Let $T$ be a well-formed trace with an event $e \in T$.
  The \emph{standard lock set for an event $e\in T$} is defined by
  $\StdLHeld{T}{e} =
  \{ x \mid \exists \evtAcc, \evtRel \in T.
        e \in \StdCSect{T}{\AcqRelPair{\evtAcc}{\evtRel}}{x} \}$.
        % We refer to $\StdLHeld{T}{e}$ as the \emph{standard locks held by $e$ in trace $T$}.
        % We write $\StdLH{e}$ if the trace $T$ is determined by the context.
\end{definition}

%--------------------------------------------------------
\subsection{Cross-Thread Critical Sections}
\label{sec:cross-thre-crit}

Definition~\ref{sec:stand-crit-sect}
does not capture all cases where an event must happen between an acquire-release pair because the definition insists that the event
must be in the same thread as the acquire-release.
As observed in the overview section in figure~\ref{fig:exCS_2}
there are  acquire-release pair that guards events across thread boundaries.

%% \pt{todo: explain the figure}
%% \ms{this is already done in the overview section}

Hence,
we relax the notion of a critical section to include events from
other threads simply by dropping condition \textbf{CS-SAME-THREAD}.

\begin{definition}[Cross-Thread Critical Section]
  \label{def:cross-thread-cs}
  Suppose there are events
  $\evtAcc = (t, \lockE{x}), \evtRel = (t, \unlockE{x}) \in T$ for
  some lock $x$.
  We say that $\evtSubject \in T$ is in the
  \emph{cross-thread critical section for lock $x$ guarded by acquire
    $\evtAcc$ and release $\evtRel$},
  written $\evtSubject \in \IntraCSect{T}{\AcqRelPair{\evtAcc}{\evtRel}}{x}$ if
  the following two conditions hold:

  \begin{description}
  \item[CS-CROSS-1:]  $\ltMustHBB{T}{\evtAcc}{\evtSubject}{\evtRel}$.

  \item[CS-CROSS-2:] There is no release event $\evtRelA = (t, \unlockE{x}) \in T$ and
     correct reordering $T'$ of $T$
     such that $\ltTraceT{T'}{\evtAcc}{\evtRelA}{\evtSubject}$.

  \end{description}
\end{definition}

Again, a lock $x$ is held by an event $e$ if $e$ is in a cross-thread
critical section for $x$.

\begin{definition}[Cross-Thread Locks Held]
  % Let $T$ be a well-formed trace with event $e \in T$.
  The \emph{cross-thread lock set for an event $e \in T$} is defined by $\IntraLHeld{T}{e} =
  \{ x \mid \exists \evtAcc, \evtRel \in T.
  e \in \IntraCSect{T}{\AcqRelPair{\evtAcc}{\evtRel}}{x} \}$.
  % We refer to $\IntraLHeld{T}{e}$ as the \emph{intra-thread locks held by $\evtAcc$ in trace $T$}.
  % We write $\IntraLH{e}$ if trace $T$ is determined by the context.
\end{definition}

We revisit the examples from the overview section~\ref{sec:overview}
and compare the critical sections and the lock sets of the standard
construction with the ones of the cross-thread construction.

\begin{example}
  \label{ex:exCS_2}
    Recall the trace in figure~\ref{fig:exCS_2}.
    We find that $\StdCSec{\AcqRelPair{e_7}{e_9}}{x} = \IntraCSec{\AcqRelPair{e_7}{e_9}}{x} = \{ e_8 \}$
    but $\StdCSec{\AcqRelPair{e_2}{e_{6}}}{x} = \{ e_3, e_{5} \}$
    and $\IntraCSec{\AcqRelPair{e_2}{e_{6}}}{x} = \{ e_3, e_4, e_5 \}$.
    Hence, $\StdLH{e_8} = \IntraLH{e_8} = \{x\}$ but
     $\StdLH{e_{4}} = \{ \}$ and $\IntraLH{e_{4}} = \{ x \}$.
  \end{example}

%--------------------------------------------------------

\subsection{Properties}
\label{sec:properties}

It is easy to see that each of the relaxations of the definition of a
critical section potentially increases the lock set of each event.

\begin{lemma}
  \label{le:lh}
  % Let $T$ be a well-formed trace.
  \begin{enumerate}
  \item For each $e \in T$, we have
    $\StdLHeld{T}{e} \subseteq \IntraLHeld{T}{e}$.
  \item There exist traces $T$ and $e \in T$, such that
    $\StdLHeld{T}{e} \subsetneq \IntraLHeld{T}{e}$.
  \end{enumerate}
\end{lemma}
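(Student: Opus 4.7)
The plan is to observe that the lemma is essentially a direct consequence of the way Definition~\ref{def:cross-thread-cs} relaxes Definition~\ref{def:std-cs}: conditions \textbf{CS-CROSS-1} and \textbf{CS-CROSS-2} are literally the same statements as \textbf{CS-BRACKET} and \textbf{CS-MATCH}, while \textbf{CS-SAME-THREAD} has been dropped. So every standard critical section is automatically a cross-thread critical section for the same acquire-release pair and lock.

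For part (1), I would unfold the definition of $\StdLHeld{T}{e}$: if $x \in \StdLHeld{T}{e}$, then there exist $\evtAcc, \evtRel \in T$ such that $e \in \StdCSect{T}{\AcqRelPair{\evtAcc}{\evtRel}}{x}$. The conditions \textbf{CS-BRACKET} and \textbf{CS-MATCH} are then in force, which are syntactically identical to \textbf{CS-CROSS-1} and \textbf{CS-CROSS-2}. Hence $e \in \IntraCSect{T}{\AcqRelPair{\evtAcc}{\evtRel}}{x}$, and therefore $x \in \IntraLHeld{T}{e}$. No reasoning about reorderings or must-happen-before is needed beyond what has already been verified for the standard critical section.

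For part (2), I would point to the running example already worked out in Example~\ref{ex:exCS_2} (based on the trace in figure~\ref{fig:exCS_2}). There, event $e_4$ sits in thread $t_2$ while the matching acquire-release pair $(e_2, e_6)$ sits in thread $t_1$. The fork-join dependencies enforce $\ltMHB{e_2}{e_4}$ and $\ltMHB{e_4}{e_6}$, so \textbf{CS-CROSS-1} is satisfied, and \textbf{CS-CROSS-2} holds because $(e_2, e_6)$ is a matching pair. Thus $x \in \IntraLHeld{T}{e_4}$, while the standard definition rules $e_4$ out via \textbf{CS-SAME-THREAD} since $\thd{e_4} \neq \thd{e_2}$. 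This establishes strict containment.

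The only thing worth double-checking is that the existentially quantified acquire-release witness for the standard critical section is still admissible as a witness for the cross-thread critical section, i.e.\ that I am not implicitly changing the meaning of the guards; since both definitions require $\evtAcc = (t,\lockE{x})$ and $\evtRel = (t,\unlockE{x})$ with the same thread $t$, this is immediate. There is no genuine obstacle in the proof — it is a one-line subset argument together with an appeal to the already-computed example.
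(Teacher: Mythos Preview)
Your proposal is correct and follows exactly the paper's approach: part~(1) is immediate from the definitions (the paper just says so, while you spell out the identification of \textbf{CS-BRACKET}/\textbf{CS-MATCH} with \textbf{CS-CROSS-1}/\textbf{CS-CROSS-2}), and part~(2) is witnessed by Example~\ref{ex:exCS_2}. There is nothing to add or correct.
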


\begin{proof}
  \begin{enumerate}
  \item The inclusions are immediate from the definitions of the
    different critical sections.
  \item
    Example~\ref{ex:exCS_2} shows an event where the
    two lock sets are different.
    \qedhere
  \end{enumerate}
\end{proof}

%%% Local Variables:
%%% mode: latex
%%% TeX-master: "main"
%%% End:

%--------------------------------------------------------
%--------------------------------------------------------
\section{Data Race Prediction Based on Lock Sets}
\label{sec:lockset-race-pred}

Lock sets can be exploited for data race prediction.
To cater for the different lock set constructions in
section~\ref{sec:cross-thread},
we define an abstract data race predictor for trace $T$ parameterized by a
\emph{lock set function} $L$ that maps $\events{T}$ to sets of lock
variables.
We predict a
datarace if there is a pair of conflicting events in a trace with
disjoint lock sets.
\begin{align*}
  \PredictDR[T]{L} & := \{ (e, f) \mid e, f \in T, \conflict ef, L(e)
                     \cap L(f) = \emptyset \}
\end{align*}
Given any lock set function $L$, we can define false positives and
false negatives of the corresponding predictor by comparing it with
the ground truth given by $\TrueDR[T]$.\footnote{The ground truth is
  usually unknown, but we can nevertheless use it in a mathematical definition.}
\begin{definition}[Data Race Prediction False Positives/False Negatives]
  % Let $T$ be a well-formed trace.
  Let $L$ be a lock set function.
  We define $\FalseP{T}{L}$, the set of false positives, and
  $\FalseN{T}{L}$, the set of false negatives for data race prediction
  based on $L$.
\begin{align*}
  \FalseP{T}{L} & := \PredictDR[T]{L} \setminus \TrueDR[T] && \text{predicted, but no
                                              data race} \\
  \FalseN{T}{L} & := \TrueDR[T] \setminus \PredictDR[T]{L} &&
                                                              \text{data race,
                                                    but not predicted}
\end{align*}
\end{definition}
Next, we compare data race predictors for lock set
functions $L_1 \subseteq L_2$ of different precision.\footnote{For set-valued mappings $L_1, L_2$, we lift the subset relation
$\subseteq$ pointwise, that is, we write $L_1 \subseteq L_2$ if, for all $e$,
$L_1(e) \subseteq L_2(e)$.} Smaller lock sets lead to larger numbers
of predicted dataraces and thus to a potentially larger number of
false positives. On the other hand, smaller lock sets result in fewer
false negatives.

\begin{lemma}
  \label{le:fp-fn}
  % Let $T$ be a well-formed trace.
  Let $L_1, L_2$ be lock set functions with  $L_1 \subseteq L_2$.
  \begin{enumerate}
  \item $\PredictDR[T]{L_2} \subseteq \PredictDR[T]{L_1}$,
  \item $\FalseP{T}{L_2} \subseteq \FalseP{T}{L_1}$,
  \item $\FalseN{T}{L_1} \subseteq \FalseN{T}{L_2}$.
  \end{enumerate}
\end{lemma}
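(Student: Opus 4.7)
The plan is to prove the three inclusions in order, since (2) and (3) follow almost immediately from (1) by elementary set-theoretic manipulation.

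For part (1), I would unfold the definition of $\PredictDR[T]{\cdot}$ and take an arbitrary pair $(e,f) \in \PredictDR[T]{L_2}$. By definition $e,f \in T$, $\conflict{e}{f}$, and $L_2(e) \cap L_2(f) = \emptyset$. The key observation is that intersection is monotone in both arguments: from $L_1 \subseteq L_2$ (pointwise) we get $L_1(e) \subseteq L_2(e)$ and $L_1(f) \subseteq L_2(f)$, hence
\[
L_1(e) \cap L_1(f) \;\subseteq\; L_2(e) \cap L_2(f) \;=\; \emptyset.
\]
The conflict relation $\conflict{e}{f}$ and the trace-membership conditions are independent of the lock set, so $(e,f) \in \PredictDR[T]{L_1}$, establishing (1).

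For part (2), I would simply rewrite
\[
\FalseP{T}{L_2} = \PredictDR[T]{L_2} \setminus \TrueDR[T]
\]
and apply (1) together with the monotonicity of set difference in its first argument: $A \subseteq B$ implies $A \setminus C \subseteq B \setminus C$. For part (3), the argument is dual. Rewrite
\[
\FalseN{T}{L_1} = \TrueDR[T] \setminus \PredictDR[T]{L_1},
\]
and use antimonotonicity of set difference in its second argument: $A \subseteq B$ implies $C \setminus B \subseteq C \setminus A$. Combined with (1), which says $\PredictDR[T]{L_2} \subseteq \PredictDR[T]{L_1}$, this yields $\FalseN{T}{L_1} \subseteq \FalseN{T}{L_2}$.

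There is no real obstacle here; the entire lemma is a purely set-theoretic consequence of the monotonicity of intersection, and the only step that touches the actual definitions (as opposed to generic set operations) is the one-line computation in part (1). The proof can be written out in a few lines.
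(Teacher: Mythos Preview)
Your proposal is correct and matches the paper's proof essentially line for line: the paper also proves (1) by taking an arbitrary $(e,f)\in\PredictDR[T]{L_2}$ and invoking monotonicity of intersection (stated there as a separate auxiliary lemma), and then derives (2) and (3) from (1) via the monotonicity/antimonotonicity of set difference exactly as you describe.
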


For the proof, we recall a basic result from
set theory.
\begin{lemma}
  \label{le:set}
  Let $A, B, C, D$ be sets with $A \subseteq B$ and $C \subseteq D$.
  Then, $A \cap C \subseteq B \cap D$.
\end{lemma}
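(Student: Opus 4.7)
The plan is to prove Lemma~\ref{le:set} by a direct element-chasing argument, which is the textbook approach for inclusions between sets built from unions and intersections. There is essentially nothing else to try here: the statement is a pure propositional fact about membership in intersections, so unfolding the definition of $\cap$ and the hypotheses $A \subseteq B$ and $C \subseteq D$ should close the proof in a few lines.

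Concretely, I would fix an arbitrary element $x \in A \cap C$ and derive $x \in B \cap D$. Unfolding the definition of intersection, the hypothesis gives $x \in A$ and $x \in C$. From $A \subseteq B$ together with $x \in A$ we get $x \in B$; symmetrically, from $C \subseteq D$ together with $x \in C$ we get $x \in D$. Combining these two memberships, $x \in B \cap D$ by the definition of intersection. Since $x$ was arbitrary, $A \cap C \subseteq B \cap D$.

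There is no real obstacle: the argument uses only the definition of $\cap$ and modus ponens on the two subset hypotheses. If one prefers a more algebraic style, one could alternatively note $A \cap C \subseteq A \subseteq B$ and $A \cap C \subseteq C \subseteq D$ and conclude by the universal property of the intersection (any set contained in both $B$ and $D$ is contained in $B \cap D$), but the element-chasing version is shorter and self-contained, so that is what I would present.
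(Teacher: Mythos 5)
Your element-chasing argument is correct and complete; the paper itself gives no proof of Lemma~\ref{le:set}, simply recalling it as a basic set-theoretic fact before using it in the proof of Lemma~\ref{le:fp-fn}. Your write-up is exactly the standard argument one would supply, so there is nothing to compare or correct.
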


\begin{proof}[Proof of lemma~\ref{le:fp-fn}]~
  \begin{enumerate}
  \item If $(e, f) \in \PredictDR[T]{L_2}$, then $\conflict ef$ and
    $L_2(e) \cap L_2(f) = \emptyset$.
    By lemma~\ref{le:set}, $L_1(e) \cap L_1(f) = \emptyset$, so that
    $(e, f) \in \PredictDR[T]{L_1}$.
  \item $\FalseP{T}{L_2} = \PredictDR[T]{L_2} \setminus \TrueDR[T]
    \stackrel{\mathrm{item}~1}{\subseteq} \PredictDR[T]{L_1} \setminus
    \TrueDR[T]  = \FalseP{T}{L_1}$.
  \item $\FalseN{T}{L_1} = \TrueDR[T] \setminus \PredictDR[T]{L_1}
    \stackrel{\mathrm{item}~1}\subseteq \TrueDR[T] \setminus
    \PredictDR[T]{L_2} =  \FalseN{T}{L_2}$.
    \qedhere
  \end{enumerate}
\end{proof}

Each of the lock set constructions in
section~\ref{sec:cross-thread} gives rise to a data race predictor by instantiating the lock set function
$L$ accordingly. From lemmas~\ref{le:lh} and~\ref{le:fp-fn}, we obtain several corollaries.
We show that the predictor based on the standard construction
$\StdLH{\cdot}$ has no false negatives (lemma~\ref{le:std-has-no-false-negatives}).
The cross-thread predictor based on $\IntraLH{\cdot}$ may give rise to false
negatives  (Corollary~\ref{co:fn}), but they can be amended as shown in Section~\ref{sec:elim-more-false}.
On the other hand, the predictor based on $\IntraLH{\cdot}$ reports
fewer races and has fewer false positives as $\StdLH{\cdot}$
(Corollaries~\ref{co:predict} and~\ref{co:fp}).
\begin{corollary}
  \label{co:predict}
  $\PredictDR{\IntraLH{\cdot}} \subseteq \PredictDR{\StdLH{\cdot}}$.
\end{corollary}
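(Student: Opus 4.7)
The plan is to observe that Corollary~\ref{co:predict} is an immediate consequence of the two preceding lemmas, so no new machinery is needed. First, I would invoke Lemma~\ref{le:lh}(1) to establish the pointwise inclusion $\StdLH{\cdot} \subseteq \IntraLH{\cdot}$ on the lock set functions. This is the key input: cross-thread critical sections only ever enlarge the set of locks held at an event, never shrink it.

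Next, I would specialize Lemma~\ref{le:fp-fn}(1) by taking $L_1 := \StdLH{\cdot}$ and $L_2 := \IntraLH{\cdot}$. The hypothesis $L_1 \subseteq L_2$ is exactly what the previous step established, and the conclusion $\PredictDR[T]{L_2} \subseteq \PredictDR[T]{L_1}$ is precisely the claim of the corollary. Intuitively, the argument is that enlarging lock sets can only make more pairs of conflicting events share a common lock, which in turn removes them from the set of predicted races.

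There is essentially no obstacle here: the proof is a one-liner combining the two lemmas. The only thing to be careful about is stating the instantiation of $L_1$ and $L_2$ in the right direction, since the subset inclusion on lock sets flips when passing to the set of predicted races. I would present the proof in a single sentence, making that flip explicit.
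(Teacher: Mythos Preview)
Your proposal is correct and matches the paper's approach exactly: the paper states that Corollaries~\ref{co:predict}, \ref{co:fp}, and~\ref{co:fn} all follow directly from Lemmas~\ref{le:lh} and~\ref{le:fp-fn}, which is precisely the instantiation you spell out. Your care about the direction of the inclusion flip is well placed and is the only subtlety in this one-line derivation.
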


\begin{corollary}
  \label{co:fp}
  $\FP{\IntraLH{\cdot}} \subseteq \FP{\StdLH{\cdot}}$.
\end{corollary}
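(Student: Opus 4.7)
The plan is to derive this corollary directly from two results already in hand: the pointwise containment of lock sets in Lemma~\ref{le:lh}(1) and the monotonicity of false positives with respect to lock set functions in Lemma~\ref{le:fp-fn}(2). No new combinatorial reasoning about traces or critical sections is needed; the work has already been isolated in those two lemmas.

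Concretely, I would instantiate Lemma~\ref{le:fp-fn} with $L_1 := \StdLH{\cdot}$ and $L_2 := \IntraLH{\cdot}$. The hypothesis $L_1 \subseteq L_2$ of that lemma is exactly the conclusion of Lemma~\ref{le:lh}(1), which states $\StdLHeld{T}{e} \subseteq \IntraLHeld{T}{e}$ for every $e \in T$ and thus lifts pointwise to the lock set functions themselves. Applying item~(2) of Lemma~\ref{le:fp-fn} then yields $\FalseP{T}{\IntraLH{\cdot}} \subseteq \FalseP{T}{\StdLH{\cdot}}$, which is precisely $\FP{\IntraLH{\cdot}} \subseteq \FP{\StdLH{\cdot}}$.

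There is essentially no obstacle: the intuition (larger lock sets can only shrink the set of predicted races, and hence can only shrink the false positives inside it) has already been formalized in Lemma~\ref{le:fp-fn}, which in turn reduces to the elementary set-theoretic observation in Lemma~\ref{le:set}. The only minor point to flag is that the subset ordering on lock set functions must be read pointwise, as noted in the footnote accompanying Lemma~\ref{le:fp-fn}, so that the hypothesis derived from Lemma~\ref{le:lh}(1) is in the exact form consumed by Lemma~\ref{le:fp-fn}(2).
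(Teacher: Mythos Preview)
Your proposal is correct and matches the paper's own approach: the paper derives Corollary~\ref{co:fp} directly from Lemmas~\ref{le:lh} and~\ref{le:fp-fn}, exactly by the instantiation $L_1 = \StdLH{\cdot}$, $L_2 = \IntraLH{\cdot}$ that you describe.
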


\begin{corollary}
  \label{co:fn}
  $\FN{\StdLH{\cdot}} \subseteq \FN{\IntraLH{\cdot}}$.
\end{corollary}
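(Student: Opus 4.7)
The plan is to derive this corollary as a direct instantiation of the two preceding results. By Lemma~\ref{le:lh}(1), the standard lock set is pointwise contained in the cross-thread lock set, i.e., $\StdLH{\cdot} \subseteq \IntraLH{\cdot}$ in the lifted subset ordering on lock set functions. This is exactly the hypothesis required to apply Lemma~\ref{le:fp-fn}(3), taking $L_1 := \StdLH{\cdot}$ and $L_2 := \IntraLH{\cdot}$, which yields $\FN{\StdLH{\cdot}} \subseteq \FN{\IntraLH{\cdot}}$ as desired.

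Concretely, I would write the proof in one line: ``Immediate from Lemma~\ref{le:lh}(1) and Lemma~\ref{le:fp-fn}(3).'' There are no calculations to carry out and no quantifier juggling beyond what has already been set up. The statement is the expected dual of Corollary~\ref{co:fp}: because the cross-thread construction assigns larger lock sets, fewer conflict pairs survive the disjointness test, so the predictor reports fewer races; every genuine race that the standard construction overlooks is therefore also overlooked by the cross-thread construction.

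There is no real obstacle here, since the heavy lifting is done by the monotonicity argument in Lemma~\ref{le:fp-fn}, whose proof in turn reduces to the elementary set-theoretic fact in Lemma~\ref{le:set}. The only point worth flagging for the reader is the direction of the inclusion: enlarging lock sets simultaneously shrinks $\PredictDR{\cdot}$ and $\FP{\cdot}$ while growing $\FN{\cdot}$, so this corollary represents the \emph{price} paid for the false positive reduction established in Corollary~\ref{co:fp}. This motivates the refined, thread-indexed construction promised in Section~\ref{sec:elim-more-false}, which is stated as recovering the no-false-negatives property of the standard approach.
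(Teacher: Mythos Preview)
Your proposal is correct and matches the paper's approach exactly: the corollary is obtained directly by instantiating Lemma~\ref{le:fp-fn}(3) with $L_1 = \StdLH{\cdot}$ and $L_2 = \IntraLH{\cdot}$, the hypothesis being supplied by Lemma~\ref{le:lh}(1). The paper additionally remarks that the same inclusion follows trivially from Lemma~\ref{le:std-has-no-false-negatives} (since $\FN{\StdLH{\cdot}} = \emptyset$), which you might mention as an alternative one-line justification.
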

%
% In the above statements, we view $\StdLH{\cdot}$ and so on as
% mappings from events to sets of lock variables.

Corollary~\ref{co:fn} also follows from the fact that the standard lock set
construction has no false negatives.

\begin{lemma}
  \label{le:std-has-no-false-negatives}
  $\FN{\StdLH{\cdot}} = \emptyset$.
\end{lemma}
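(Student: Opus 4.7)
Unfolding the definition, $\FN{\StdLH{\cdot}} = \TrueDR[T] \setminus \PredictDR[T]{\StdLH{\cdot}}$, so it suffices to show the inclusion $\TrueDR[T] \subseteq \PredictDR[T]{\StdLH{\cdot}}$. The plan is to take an arbitrary predictable race $(e,f) \in \TrueDR[T]$, fix a correctly reordered prefix $T'$ of $T$ that places $e$ and $f$ adjacent to each other (WLOG $e$ immediately before $f$), and argue by contradiction that $\StdLH{T}{e} \cap \StdLH{T}{f} = \emptyset$.

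Suppose some lock $x$ belongs to both standard lock sets. Then there exist matching acquire--release pairs $\evtAcc_e,\evtRel_e$ and $\evtAcc_f,\evtRel_f$ on $x$ such that $e \in \StdCSect{T}{\AcqRelPair{\evtAcc_e}{\evtRel_e}}{x}$ and $f \in \StdCSect{T}{\AcqRelPair{\evtAcc_f}{\evtRel_f}}{x}$. Condition \textbf{CS-SAME-THREAD} pins $\evtAcc_e,\evtRel_e$ to $\thd{e}$ and $\evtAcc_f,\evtRel_f$ to $\thd{f}$; since $e,f$ are conflicting, $\thd{e} \neq \thd{f}$, so the two acquires lie in distinct threads. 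Condition \textbf{CS-BRACKET} gives $\evtAcc_e <_{mhb} e$ and $\evtAcc_f <_{mhb} f$, so in particular $\evtAcc_e <_T e$ and $\evtAcc_f <_T f$. The prefix/\textbf{PO} condition on $T'$ then forces both $\evtAcc_e$ and $\evtAcc_f$ into $T'$, with $\evtAcc_e <_{T'} e$ and $\evtAcc_f <_{T'} f$. Because $e,f$ are adjacent in $T'$ and $\evtAcc_f$ is in a different thread from $e$, we obtain $\evtAcc_f <_{T'} e$, so both acquires of $x$ precede $e$ in $T'$.

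Now I apply \textbf{Lock-1} to $T'$ (which is well-formed by \textbf{WF}) for the two acquires of $x$: there must exist a release $\evtRel_\ast$ of $x$ lying strictly between them in $T'$, and crucially its thread is determined by \textbf{Lock-1} to be that of the earlier acquire. The case split is on which acquire comes first in $T'$. If $\evtAcc_e <_{T'} \evtAcc_f$, then $\evtRel_\ast \in \thd{e}$ and $\evtAcc_e <_{T'} \evtRel_\ast <_{T'} \evtAcc_f <_{T'} e$, directly contradicting \textbf{CS-MATCH} for the critical section guarding $e$. If $\evtAcc_f <_{T'} \evtAcc_e$, then $\evtRel_\ast \in \thd{f}$ and $\evtAcc_f <_{T'} \evtRel_\ast <_{T'} \evtAcc_e <_{T'} e <_{T'} f$, contradicting \textbf{CS-MATCH} for the critical section guarding $f$. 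Either way we reach a contradiction, so the standard lock sets of $e$ and $f$ are disjoint, giving $(e,f) \in \PredictDR[T]{\StdLH{\cdot}}$.

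The only subtlety I anticipate is the bookkeeping around which events of the critical sections actually survive into $T'$: $\evtAcc_e$ and $\evtAcc_f$ must be pulled into $T'$ via program order (not via the must-happen-before relation, since $T'$ is only a prefix and need not contain $\evtRel_e$ or $\evtRel_f$). The argument must therefore work using only the two acquires, which is exactly why \textbf{Lock-1} (a statement about pairs of acquires) together with the thread identity of the intervening release is the right hammer rather than any reasoning about the matching releases themselves.
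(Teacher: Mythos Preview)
Your proof is correct and follows essentially the same approach as the paper: both assume a common lock, use \textbf{CS-SAME-THREAD} to place the two acquires in distinct threads, pull them into the witnessing prefix $T'$, and derive a contradiction between \textbf{Lock-1} (which forces an intervening release in the thread of the earlier acquire) and \textbf{CS-MATCH} (which forbids exactly such a release). The paper compresses your case split into a single sentence (``without an intervening release (\textbf{CS-MATCH}) \ldots\ contradiction as $T'$ would violate \textbf{Lock-1}''), whereas you spell out explicitly why the thread identity clause in \textbf{Lock-1} is what makes \textbf{CS-MATCH} fire; this is a genuine clarification, since \textbf{CS-MATCH} only constrains same-thread releases.
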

\begin{proof}

      Suppose that $(e, f) \in \FN{\StdLH{\cdot}}$.

      That is, $\conflict ef$, $\DataRace{T}{e}{f}$, but there exists
      some lock $x$ such that $x \in \StdLH{e}
      \cap \StdLH{f} \ne \emptyset$.

      By definition of $\StdLH{}$ there must be an acquire event
      $\evtAcc \in T$
      % and a release event $\evtCC \in T$
      for lock $x$
      such that $ \ltMustHB{T}{\evtAcc}{e}$
      % $ \ltMustHB{T}{}{} \evtCC$
      as well as an acquire event $\evtAcc' \in T$ for
      lock $x$ such that  $ \ltMustHB{T}{\evtAcc'}{f}$.
      % $\ltMustHB{T}{}{} \evtCC'$.
      From \textbf{CS-SAME-THREAD} we know that events $\evtAcc$ and $e$ are in the
      same thread and that events $\evtAcc'$ and $f$ are in the same thread.

      As $\DataRace{T}{e}{f}$, there is a correctly
      reordered prefix $T'$ of $T$ that puts $e$ and $f$ next to each other.
      As $\evtAcc$ and $\evtAcc'$ must appear before $e$ and $f$ in $T'$
      without an intervening release (\textbf{CS-MATCH}), we find a
      contradiction as $T'$ would violate \textbf{Lock-1}.
\end{proof}

Datarace prediction based on the cross-thread lock set construction gives rise to false negatives,
but we eliminate this problem in Section~\ref{sec:elim-more-false} by
improving the datarace predictor.
\begin{lemma}
  \label{le:ex9}
  There exists a trace $T$ and event $e\in T$ such that $\FN{\IntraLH{e}} \ne \emptyset$.
\end{lemma}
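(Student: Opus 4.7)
The plan is to construct a small, explicit trace $T$ and two conflicting events $e,f \in T$ such that $(e,f) \in \TrueDR[T]$ while $\IntraLH{e} \cap \IntraLH{f} \ne \emptyset$, so that the predictor $\PredictDR[T]{\IntraLH{\cdot}}$ fails to report the pair, witnessing $(e,f) \in \FN{\IntraLH{\cdot}}$.

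The guiding idea is that a \emph{single} cross-thread critical section can enclose two events living in two \emph{different} child threads; if both events are conflicting writes, they carry the same lock in their cross-thread lock sets, yet the two writes can still be placed adjacently in a correctly reordered \emph{prefix} obtained by truncating the trace before the matching release (and the joins) are reached. Concretely, I would take the trace in which thread $t_1$ executes $\lockE{x}; \forkE{t_2}; \forkE{t_3}; \joinE{t_2}; \joinE{t_3}; \unlockE{x}$, while each of $t_2$ and $t_3$ consists of a single $\writeE{a}$ event. Let $e$ be the write in $t_2$ and $f$ the write in $t_3$.

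Step 1 (overlap of cross-thread lock sets). I would verify that both $e$ and $f$ lie in $\IntraCSect{T}{\AcqRelPair{\lockE{x}}{\unlockE{x}}}{x}$. The bracketing $\ltMHBB{\lockE{x}}{e}{\unlockE{x}}$ required by \textbf{CS-CROSS-1} follows by chaining program order in $t_1$ with the Fork-2 condition (in any reordering that contains both $\forkE{t_2}$ and $e$, the former precedes the latter) and with the Join condition ($e$ must precede $\joinE{t_2}$, which in turn precedes $\unlockE{x}$ by program order in $t_1$); the argument for $f$ is symmetric via $t_3$. Condition \textbf{CS-CROSS-2} is vacuous because $\unlockE{x}$ is the only release of $x$ in $T$. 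Hence $x \in \IntraLH{e} \cap \IntraLH{f}$, so $(e,f) \notin \PredictDR[T]{\IntraLH{\cdot}}$.

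Step 2 (existence of a predictable race). I would exhibit the reordered prefix $T'$ consisting of $\lockE{x}, \forkE{t_2}, \forkE{t_3}, e, f$ in that order, and check each clause of definition~\ref{def:correct-trace-reordering}: well-formedness (no releases or joins appear, so Lock-1/Lock-2 and Join are vacuous, and Fork-2 holds since each write is preceded by its fork), program-order preservation per thread (the $t_1$-projection is a prefix of $t_1$'s events in $T$; the $t_2$- and $t_3$-projections are singletons), and the last-write condition (vacuous because no reads occur). Since $e$ and $f$ are adjacent in $T'$, $(e,f) \in \TrueDR[T]$. The only subtle point is that the mhb arguments of Step 1 depend on $\joinE{t_2}$, $\joinE{t_3}$, and $\unlockE{x}$ being present alongside the writes, but a correctly reordered prefix need not contain all events of $T$; omitting the joins and the release in Step 2 removes the synchronisation that would otherwise separate $e$ and $f$. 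Combining Steps 1 and 2 yields $(e,f) \in \FN{\IntraLH{\cdot}}$, which is the precise gap the refined predictor in Section~\ref{sec:elim-more-false} is designed to close.
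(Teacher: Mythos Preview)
Your proof is correct and follows essentially the same approach as the paper: exhibit an explicit trace in which a single cross-thread critical section encloses two conflicting writes that can nonetheless be placed adjacently in a correctly reordered prefix that omits the join(s) and the release. The paper's witness is slightly more economical---it uses only two threads, with one write in the lock-holding thread $t_1$ itself and the other in a single forked child---whereas you place both writes in separate child threads; either construction works and the reasoning is identical.
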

\begin{proof}
  \begin{minipage}[t]{0.5\linewidth}
    Consider the trace on the right. Here $\IntraLH{\eventE{3}} =
    \IntraLH{\eventE{4}} = \{ x \}$ so that
    $\PredictDR{\IntraLH{\cdot}} = \emptyset$, although there is
    clearly a data race $\DataRace{}{\eventE{3}}{ \eventE{4}} $.
  \end{minipage}
  \quad
  \begin{minipage}[t]{0.4\linewidth}
    \vspace{-\baselineskip}
%%  latexTrace $ addLoc ex9b
\bda{|l|l|l|}
\hline  & \thread{1} & \thread{2}\\ \hline
\eventE{1}  & \lockE{x}&\\
\eventE{2}  & \forkE{\thread{2}}&\\
\eventE{3}  & \writeE{a}&\\
\eventE{4}  & &\writeE{a}\\
\eventE{5}  & \joinE{\thread{2}}&\\
\eventE{6}  & \unlockE{x}&\\

 \hline \eda{}
 % fontlock $
\end{minipage}
\end{proof}

% \ms{needs polishing} comments follow.

% The following example shows that
% the subset relation in Corollary~\ref{co:fn} is strict.
% The example also shows that we encounter false negatives for cross-thread and the refined cross-thread construction.

% \begin{example}
% \label{ex:ex9}
% \bda{lcl}
% %%  latexTrace $ addLoc ex9b
% \ba{|l|l|l|}
% \hline  & \thread{1} & \thread{2}\\ \hline
% \eventE{1}  & \lockE{x}&\\
% \eventE{2}  & \forkE{\thread{2}}&\\
% \eventE{3}  & \writeE{a}&\\
% \eventE{4}  & &\writeE{a}\\
% \eventE{5}  & \joinE{\thread{2}}&\\
% \eventE{6}  & \unlockE{x}&\\

%  \hline \ea{}

% &&

% %% latexTrace $ addLoc ex9
% \ba{|l|l|l|l|}
% \hline  & \thread{1} & \thread{2} & \thread{3}\\ \hline
% \eventE{1}  & \forkE{\thread{3}}&&\\
% \eventE{2}  & \forkE{\thread{2}}&&\\
% \eventE{3}  & \lockE{x}&&\\
% \eventE{4}  & &\writeE{a}&\\
% \eventE{5}  & \joinE{\thread{2}}&&\\
% \eventE{6}  & \unlockE{x}&&\\
% \eventE{7}  & &&\lockE{x}\\
% \eventE{8}  & &&\writeE{a}\\
% \eventE{9}  & &&\unlockE{x}\\

%  \hline \ea{}
%  \eda

% \end{example}

\subsection{Eliminating false negatives by thread indexing}
\label{sec:elim-more-false}

Using insights from the proof of
Lemma~\ref{le:std-has-no-false-negatives}, we refine the cross-thread lock set construction
to elide all false negatives. The key step in the proof
is the observation that the acquire (and release) events guarding the
conflicting events $e$ and $f$ are in the same thread as $e$ and $f$
by \textbf{CS-SAME-THREAD}, respectively. The cross-thread lock set
construction omits the axiom \textbf{CS-SAME-THREAD}, so that the guarding
events may end up in the same thread and coincide as in
the proof of lemma~\ref{le:ex9}.

We rule out such cases by demanding that the guarding events are in distinct threads.
For this purpose, we modify the cross-thread lock set
construction to retain the information which thread acquired each
lock. Consequently, the resulting lock set function maps $\events{T}$ to a set of
thread-indexed lock variables $(x,t)$.

\begin{definition}[Cross-Thread Locks Held by Thread]
  \label{def:thread-indexed-lock-set}
  % Let $T$ be a well-formed trace with event $e \in T$.
  The \emph{cross-thread thread-indexed lock set for an event $e \in T$} is defined by $\CTTLHeld{T}{e} =
  \{ (x, t) \mid \exists \evtAcc, \evtRel \in T.
  e \in \IntraCSect{T}{\AcqRelPair{\evtAcc}{\evtRel}}{x}, t = \thd{\evtAcc} \}$.
  % We refer to $\IntraLHeld{T}{e}$ as the \emph{intra-thread locks held by $\evtAcc$ in trace $T$}.
  % We write $\IntraLH{e}$ if trace $T$ is determined by the context.
\end{definition}

We have to adapt the predictor to process this lock set construction.
For thread-indexed lock sets $M'$ and $N'$ the intersection
only includes lock variables with different indices, that is, locks
that have been acquired in different threads.
\begin{align*}
  M' \indexedcap N' & := \{ x \mid (x, s) \in M', (x, t) \in N', s \ne t \}
\end{align*}

We define the predictor for indexed lock sets in terms of this operation:
\begin{align*}
  \PredictDRP[T]{L'} & := \{ (e, f) \mid e, f \in T, \conflict ef, L'(e)
                     \indexedcap L'(f) = \emptyset \}
\end{align*}

\begin{lemma}\label{lemma:indexedcap}
  $ M' \indexedcap N' \subseteq \pi_1(M') \cap \pi _1( N')$ (where $\pi_1$
  denotes projection on the first component).
\end{lemma}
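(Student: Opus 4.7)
The plan is to unfold the definitions of $\indexedcap$ and $\pi_1$ and observe that the constraint $s \neq t$ in the definition of $\indexedcap$ is the only feature that distinguishes it from a witness for membership in $\pi_1(M') \cap \pi_1(N')$; since we are proving an inclusion (not equality), this constraint can simply be dropped.

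Concretely, I would take an arbitrary $x \in M' \indexedcap N'$ and unfold the definition to obtain thread ids $s, t$ with $s \neq t$ such that $(x, s) \in M'$ and $(x, t) \in N'$. From $(x, s) \in M'$ I conclude $x \in \pi_1(M')$, and from $(x, t) \in N'$ I conclude $x \in \pi_1(N')$, hence $x \in \pi_1(M') \cap \pi_1(N')$. No use of the hypothesis $s \neq t$ is required, which is exactly why the inclusion is one-directional: the reverse inclusion would fail whenever a lock $x$ appears in both $M'$ and $N'$ but only with the same thread index.

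There is no real obstacle here; the lemma is a bookkeeping statement recording that the indexed intersection is at most as permissive as the un-indexed one. Its role in the sequel is presumably to justify that the thread-indexed predictor $\PredictDRP{\cdot}$ reports no more races than a predictor that forgets thread indices, which is what enables transferring the no-false-negatives guarantee of the standard lock set to the refined cross-thread construction in Section~\ref{sec:elim-more-false}.
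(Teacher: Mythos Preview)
Your proof is correct and matches the paper's own proof essentially verbatim: take $x \in M' \indexedcap N'$, obtain witnesses $(x,s)\in M'$ and $(x,t)\in N'$ with $s\ne t$, and conclude $x\in\pi_1(M')\cap\pi_1(N')$ without using the inequality. Your remark on why the reverse inclusion fails also mirrors the paper's counterexample; the only quibble is your speculation about the lemma's downstream use---in the paper it feeds item~(3) of Lemma~\ref{le:thread-indexed-relations}, showing that the thread-indexed predictor reports \emph{at least} as many races as the plain cross-thread predictor (the inclusion runs the other way than you suggest), while the no-false-negatives transfer is handled separately via item~(4) and Corollary~\ref{co:ct-no-fn}.
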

\begin{proof}
  Suppose $x \in  M' \indexedcap N'$.
  Then there are $s, t$ such that $(x, s) \in M'$ and $(x,t) \in N'$ and
  $s \ne t$.
  Hence, $x \in \pi_1(M') \cap \pi_1(N')$.
\end{proof}
The reverse inclusion does not hold in general: if $M' = N' = \{ (x,
1) \}$, then $M' \indexedcap N' = \emptyset$, but $\pi_1(M')\cap
\pi_1(N') =  \{x\}$.

The predictions of the new thread-indexed cross-thread construction fit in between the
standard construction and the cross-thread construction.
The corresponding formal statements follow from the definitions.
\begin{lemma}~\\[-\baselineskip]
  \label{le:thread-indexed-relations}
  \begin{enumerate}
  \item For all events $e \in T$, $\pi_1(\CTTLHeld{T}{e}) =
    \IntraLHeld{T}{e}$.
  \item For all events $e \in T$, $\StdLHeld Te = \{ x \mid (x, t) \in
    \CTTLHeld Te, t = \thd e \}$.
  \item $\PredictDR[T]{\IntraLH{\cdot}} \subseteq \PredictDRP[T]{\CTTLH{\cdot}}$
  \qquad(old predictor on the left
  and new one on the right).
\item $\PredictDRP[T]{\CTTLH{\cdot}} \subseteq\PredictDR[T]{\StdLH{\cdot}} $
  \qquad (new predictor on the left and old one on the right).
  \end{enumerate}
\end{lemma}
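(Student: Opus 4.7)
The plan is to unfold the relevant definitions for items 1 and 2 and then reuse them, together with Lemma~\ref{lemma:indexedcap} and the conflict hypothesis, for items 3 and 4.

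For item 1, I would argue by straightforward unfolding: the definitions of $\CTTLH e$ and $\IntraLH e$ differ only in that $\CTTLH e$ records an extra second component $t = \thd{\evtAcc}$. Projecting onto the first component drops exactly this extra data, so $\pi_1(\CTTLH e) = \IntraLH e$. For item 2, I would observe that the standard critical section axioms are CS-BRACKET, CS-MATCH, CS-SAME-THREAD, while the cross-thread axioms are just CS-CROSS-1 and CS-CROSS-2, which coincide with CS-BRACKET and CS-MATCH. Hence $x \in \StdLH e$ iff there is a cross-thread critical section for $x$ whose acquire lies in the same thread as $e$, which is exactly the condition $(x, \thd e) \in \CTTLH e$.

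For item 3, suppose $(e,f) \in \PredictDR[T]{\IntraLH{\cdot}}$, so $\conflict ef$ and $\IntraLH e \cap \IntraLH f = \emptyset$. By Lemma~\ref{lemma:indexedcap} and item 1,
\[
\CTTLH e \indexedcap \CTTLH f \subseteq \pi_1(\CTTLH e) \cap \pi_1(\CTTLH f) = \IntraLH e \cap \IntraLH f = \emptyset,
\]
so $(e,f) \in \PredictDRP[T]{\CTTLH{\cdot}}$. For item 4, assume $(e,f) \in \PredictDRP[T]{\CTTLH{\cdot}}$ and suppose for contradiction that $x \in \StdLH e \cap \StdLH f$. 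By item 2, this yields $(x, \thd e) \in \CTTLH e$ and $(x, \thd f) \in \CTTLH f$. Since $\conflict ef$ forces $\thd e \ne \thd f$, the definition of $\indexedcap$ gives $x \in \CTTLH e \indexedcap \CTTLH f$, contradicting emptiness. Hence $\StdLH e \cap \StdLH f = \emptyset$ and $(e,f) \in \PredictDR[T]{\StdLH{\cdot}}$.

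There is really no serious obstacle here: the key point, which only item 4 genuinely uses, is that the conflict hypothesis supplies the inequality $\thd e \ne \thd f$ required by $\indexedcap$. Without that inequality, item 4 would fail, because a single-threaded coincidence $(x, \thd e) \in \CTTLH e$, $(x, \thd e) \in \CTTLH f$ would contribute to the standard intersection but not to the indexed one — but such a coincidence is ruled out precisely by $\conflict ef$.
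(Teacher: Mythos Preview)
Your proposal is correct and follows essentially the same route as the paper: items~1 and~2 by unfolding the definitions (your treatment of item~2 is in fact slightly more explicit about the converse direction than the paper's one-line appeal to \textbf{CS-SAME-THREAD}), and items~3 and~4 via Lemma~\ref{lemma:indexedcap} combined with items~1 and~2 respectively, with the conflict hypothesis supplying $\thd e \ne \thd f$ in item~4 exactly as you note.
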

\begin{proof}
  \begin{enumerate}
  \item Immediate from the definition.
  \item Follows from axiom \textbf{CS-SAME-THREAD}: the locks in the
    standard lock set are acquired by the event's thread.
  \item Suppose that $(e, f) \in \PredictDR[T]{\IntraLH{\cdot}}$.
    That is, $\conflict ef, \IntraLH e \cap \IntraLH f = \emptyset$.
    By item~(1), $\pi_1 (\CTTLH e) \cap \pi_1 (\CTTLH f) = \emptyset$.
    By lemma~\ref{lemma:indexedcap}, $\CTTLH e \indexedcap \CTTLH f =
    \emptyset$.
    Conclude $(e,f) \in \PredictDRP[T]{\CTTLH{\cdot}}$.
  \item Suppose that $(e,f) \in \PredictDRP[T]{\CTTLH{\cdot}}$ with $s
    = \thd e$ and $t = \thd f$.
    That is, $\conflict ef, \CTTLH e \indexedcap \CTTLH f =
    \emptyset$.
    Suppose now for a contradiction that there exists some $x \in \StdLH e \cap \StdLH f$.
    From item~(2), we see that $(x, s) \in \CTTLH e$ and $(x,t) \in
    \CTTLH f$ and from $\conflict ef$ we know that $s \ne t$.
    We obtain $x \in \CTTLH e \indexedcap \CTTLH f = \emptyset$, which is
    a contradiction.
    Hence, $\StdLH e \cap \StdLH f = \emptyset$ and $(e,f) \in \PredictDR[T]{\StdLH{\cdot}}$.
  \end{enumerate}
\end{proof}

\begin{figure}[tp]
  \begin{center}
    \begin{tikzpicture}
      \node at (0,0) [rectangle,draw=black,fill=blue!20,thick,minimum width=50ex,minimum height=35ex] {
        \tikz{
          % \draw [draw=black,fill=yellow!20] (0,0) rectangle ++ (4,2) ;
          \node at (0,0.5) [draw=black,fill=orange!60,minimum width=35ex,minimum height=25ex,text width=34ex,text depth=24ex] {$\PredictDR{\StdLH{\cdot}}$};
          \node at (0,0.25) [draw=black,fill=orange!40,minimum width=30ex,minimum height=20ex,text width=29ex,text depth=19ex] {$\PredictDRP{\CTTLH{\cdot}}$};
          \node at (0,0) [draw=black,fill=orange!20,minimum width=25ex,minimum height=15ex,text width=24ex,text depth=14ex] {$\PredictDR{\IntraLH{\cdot}}$};
          \node at (0.5,-0.3) [draw=black,fill=yellow!20,minimum width=23ex,minimum height=12ex] {$\TrueDR$};
        }
        };
      \end{tikzpicture}
  \end{center}
    \caption{Predictions for different data race predictors relative to ground truth}
  \label{fig:fp-fn-containment}
\end{figure}

%%% Local Variables:
%%% mode: latex
%%% TeX-master: "main"
%%% End:

Given the inclusions between the different datarace predictors, the
inclusions for false positives and false negatives from
Corollaries~\ref{co:fn} and~\ref{co:fp} extend accordingly.
Moreover, the thread-indexed cross-thread construction has no false negatives.
\begin{corollary}~\\[-\baselineskip]
  \label{co:ct-inclusion}
  \begin{itemize}
  \item
    $\FN{\StdLH{\cdot}} \subseteq \FN{\CTTLH{\cdot}} \subseteq \FN{\IntraLH{\cdot}}$.
  \item
    $\FP{\IntraLH{\cdot}} \subseteq \FP{\CTTLH{\cdot}} \subseteq \FP{\StdLH{\cdot}}$.
  \end{itemize}
\end{corollary}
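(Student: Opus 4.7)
The plan is to derive both inclusion chains directly from the analogous chain at the level of predicted data races, which is already established in Lemma~\ref{le:thread-indexed-relations}, items~(3) and~(4):
\[
\PredictDR[T]{\IntraLH{\cdot}} \;\subseteq\; \PredictDRP[T]{\CTTLH{\cdot}} \;\subseteq\; \PredictDR[T]{\StdLH{\cdot}}.
\]
Once this chain is in hand, the corollary follows by the same elementary set-theoretic manipulation used in Lemma~\ref{le:fp-fn}, items~(2) and~(3), which I would reuse verbatim: if $P_1 \subseteq P_2$ are any two sets of predicted races, then $P_1 \setminus \TrueDR[T] \subseteq P_2 \setminus \TrueDR[T]$ (monotonicity of false positives in the predictor) and $\TrueDR[T] \setminus P_2 \subseteq \TrueDR[T] \setminus P_1$ (antitonicity of false negatives in the predictor).

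Concretely, I would first apply this twice going forward along the predictor chain to obtain the false-positive chain
$\FP{\IntraLH{\cdot}} \subseteq \FP{\CTTLH{\cdot}} \subseteq \FP{\StdLH{\cdot}}$,
and then apply it twice going backward along the same chain to obtain the false-negative chain
$\FN{\StdLH{\cdot}} \subseteq \FN{\CTTLH{\cdot}} \subseteq \FN{\IntraLH{\cdot}}$.

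The only point where one has to be slightly careful is that the outer predictors use ordinary intersection $\cap$ of plain lock sets while the middle one uses the thread-indexed intersection $\indexedcap$, so the original Lemma~\ref{le:fp-fn} (which is phrased in terms of pointwise $\subseteq$ on lock set functions) does not apply off the shelf to the middle inclusions. This mismatch is, however, exactly what Lemma~\ref{le:thread-indexed-relations} resolves: items~(3) and~(4) translate the relations between $\StdLH{\cdot}$, $\CTTLH{\cdot}$, and $\IntraLH{\cdot}$ directly into inclusions of the predicted-race sets, bypassing the need for a common intersection operator. Since that translation has already been carried out, no further obstacle remains, and the corollary becomes a routine two-step transfer.
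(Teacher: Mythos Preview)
Your proposal is correct and matches the paper's intended argument: the paper states that the corollary follows because ``the inclusions for false positives and false negatives from Corollaries~\ref{co:fn} and~\ref{co:fp} extend accordingly'' given the predictor inclusions of Lemma~\ref{le:thread-indexed-relations}, which is exactly the two-step transfer you spell out. Your observation about the mismatch between $\cap$ and $\indexedcap$, and that Lemma~\ref{le:thread-indexed-relations} resolves it at the level of predicted-race sets, is precisely the point the paper leaves implicit.
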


\begin{corollary}
  \label{co:ct-no-fn}
  $\FN{\CTTLH{\cdot}} = \emptyset$.
\end{corollary}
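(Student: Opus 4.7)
The plan is to establish $\FN{\CTTLH{\cdot}} = \emptyset$ by contradiction, following the same schema as the proof of Lemma~\ref{le:std-has-no-false-negatives} but exploiting the new thread-index to compensate for the dropped axiom \textbf{CS-SAME-THREAD}. Specifically, I would suppose $(e,f) \in \FN{\CTTLH{\cdot}}$, so $\conflict ef$, $\DataRace{T}{e}{f}$, and $\CTTLH{e} \indexedcap \CTTLH{f} \ne \emptyset$. By definition of $\indexedcap$, there exist a lock $x$ and distinct threads $s\ne t$ with $(x,s)\in\CTTLH{e}$ and $(x,t)\in\CTTLH{f}$, hence acquire events $\evtAcc = (s,\lockE{x})$ and $\evtAcc' = (t,\lockE{x})$ in $T$ whose matching cross-thread critical sections contain $e$ and $f$ respectively (Definition~\ref{def:thread-indexed-lock-set}).

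Next, from $\DataRace{T}{e}{f}$ I obtain a correctly reordered prefix $T'$ of $T$ in which $e$ and $f$ appear adjacently. Condition \textbf{CS-CROSS-1} gives $\ltMustHB{T}{\evtAcc}{e}$ and $\ltMustHB{T}{\evtAcc'}{f}$, so both $\evtAcc$ and $\evtAcc'$ lie in $T'$ and precede $e$ and $f$ respectively. WLOG assume $\evtAcc$ precedes $\evtAcc'$ in $T'$. Because $T'$ is well-formed and $\evtAcc,\evtAcc'$ are two acquires of the same lock $x$, condition \textbf{Lock-1} of Definition~\ref{def:well-formed-trace} forces a release $\evtRelA = (s,\unlockE{x})$ in the same thread as $\evtAcc$, with $\ltTraceT{T'}{\evtAcc}{\evtRelA}{\evtAcc'}$.

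The final step is to locate $\evtRelA$ relative to $e$ and derive the contradiction from \textbf{CS-CROSS-2}. Since $e$ and $f$ are adjacent in $T'$ and $\evtAcc'$ precedes $f$, a brief case split on whether $e$ precedes $f$ or vice versa in $T'$ shows that $\evtAcc'$ must in fact precede (or equal position of, but it cannot equal since it is an acquire event distinct from $e$) $e$ in $T'$. Consequently $\ltTraceT{T'}{\evtAcc}{\evtRelA}{e}$, which directly contradicts \textbf{CS-CROSS-2} for the cross-thread critical section guarding $e$. The symmetric case in which $\evtAcc'$ precedes $\evtAcc$ yields a contradiction against the critical section guarding $f$ by the same reasoning.

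The main obstacle is the case analysis in the last step: unlike the standard setting, we can no longer assume that the guarding acquires are in the same threads as $e$ and $f$, so the argument must invoke the thread-index condition $s\ne t$ of $\indexedcap$ to ensure two \emph{distinct} acquires of $x$, and then use adjacency of $e,f$ in $T'$ to propagate the release position past $e$ (or past $f$ in the symmetric subcase). Once that is done, the rest is a routine application of well-formedness and the cross-thread critical section axioms.
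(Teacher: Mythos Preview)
Your proposal is correct and is precisely the extension the paper alludes to: the paper's proof is a single sentence stating that the argument of Lemma~\ref{le:std-has-no-false-negatives} ``extends to $\PredictDRP[T]{\CTTLH{\cdot}}$'', and you have spelled out that extension. Your use of the thread-index condition $s\ne t$ in $\indexedcap$ to guarantee two \emph{distinct} acquires of $x$, followed by \textbf{Lock-1} to obtain an intervening release and the adjacency case split to push it past $e$ (resp.\ $f$), is exactly what is needed to replace the appeal to \textbf{CS-SAME-THREAD} in the original lemma and close via \textbf{CS-CROSS-2}.
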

\begin{proof}
  The proof of absence of false negatives in
  lemma~\ref{le:std-has-no-false-negatives} extends to
  $\PredictDRP[T]{\CTTLH{\cdot}}$.
\end{proof}

% \ms{note} we could establish $\FN{\CTTLH{\cdot}} = \emptyset$.
% Then, $\FN{\StdLH{\cdot}} = \emptyset$ falls out as an immediate consequence.

Figure~\ref{fig:fp-fn-containment} gives an overview of the relative
power of all variations of datarace predictors considered in this
section. It visualizes the Corollaries~\ref{co:predict}, \ref{co:fp}, and
\ref{co:fn} (extended with Corollaries~\ref{co:ct-inclusion}
and~\ref{co:ct-no-fn}), as well as
lemma~\ref{le:std-has-no-false-negatives}.

\subsection{On False Positives}
\label{sec:false-positives}
% \ms{Maybe move to appendix in case we need to save space.}

There are examples where the inclusion in Corollary~\ref{co:fp} is proper.

\begin{example}
  The standard predictor yields a false positive for the trace in
  figure~\ref{fig:exCS_2}, which is
  captured by the cross-thread predictor: $\StdLH{\eventE4} =
  \emptyset$ and  $\StdLH{\eventE8} = \{x\}$, but
  $\IntraLH{\eventE4} =\IntraLH{\eventE8} = \{x\}$.
  This false positive is also captured by the thread-indexed cross-thread
  predictor because the locks for $\eventE4$ and $\eventE8$ are
  acquired in different threads.
\end{example}

%%% Local Variables:
%%% mode: latex
%%% TeX-master: "main"
%%% End:

%--------------------------------------------------------
%--------------------------------------------------------
\section{PWR and Cross-Thread Critical Sections}
\label{sec:pwr}

In this section, we describe an algorithm for data race detection that
takes advantage of cross-thread critical sections to calculate lock
sets. It relies on the PWR relation to efficiently compute an
underapproximation of the must happen-before relation, which in turn
results in an underapproximation (smaller sets) of the cross-thread lock sets.

% In turn, this allows us to improve the precision of the PWR data race predictor
% by replacing the standard lock set construction with the approximated cross-thread lock set construction.

%--------------------------------------------------------
\subsection{Program, Last-Write and Release Order Relation}

We first recall the original definition of PWR, which relies on the
standard lock set construction.

\begin{definition}[Program, Last Write, Release Order (PWR)~\cite{10.1145/3426182.3426185}]
  \label{def:www-relation}
  For a well-formed trace $T$,
  the \emph{program-order, last write-order, release-order relation}
  $\ltPWR{T}{}{}$
  is the smallest strict partial order that
  satisfies the following rules:

  \begin{description}
  \item[\nf{(PWR-1)}:]
        $\ltPWR{T}{\evtAA}{\evtBB}$ if
       $\ltTrace{T}{\evtAA}{\evtBB}$ and $\thd{\evtAA} = \thd{\evtBB}$.

  \item[\nf{(PWR-2)}:]
      $\ltPWR{T}{\evtBB}{\evtAA}$ if
        $\evtAA = (t,\readE{a})$ and
        $\evtBB$ is the last write for $\evtAA$ w.r.t.~$T$.
  \item[\nf{(PWR-3)}:]
    $\ltPWR{T}{r_1}{f}$ if
    for any two matching acquire-release pairs $(a_1,r_1)$ and $(a_2,r_2)$
    and for some lock $x$ where $\ltTrace{T}{a_1}{a_2}$
    there is some event $e$ such that
    $e \in \StdCSect{T}{\AcqRelPair{a_1}{r_1}}{x}$ and
    $f \in \StdCSect{T}{\AcqRelPair{a_2}{r_2}}{x}$.
  \item[\nf{(PWR-4)}]
    $\ltPWR{T}{\evtAA}{\evtBB}$ if $\evtAA = (t, \forkE{s})$ and $s = \thd{\evtBB}$,

  \item[\nf{(PWR-5)}]
    $\ltPWR{T}{\evtAA}{\evtBB}$ if $\evtBB = (t, \joinE{s})$ and $s = \thd{\evtAA}$.
  \end{description}

Conflicting events $\conflict\evtAA\evtBB$
are in a \emph{PWR-race} if $\concPWR{T}{\evtAA}{\evtBB}$
%neither $\ltPWR{T}{\evtAA}{\evtBB}$, nor $\ltPWR{T}{\evtBB}{\evtCC}$,
and $\StdLH{\evtAA} \cap \StdLH{\evtBB} = \emptyset$.

\end{definition}
Rule (PWR-1) imposes program order and rule (PWR-2) guarantees that each read sees the same write.
Rule (PWR-3) imposes an order among critical sections.

PWR is a hybrid method as a race warning is only issued
for conflicting events that are concurrent and have disjoint lock sets.

PWR underapproximates the must happen-before relation in the following sense.

\begin{lemma}\label{lemma:pwr-lt-cross-thread}
  (cf.\ \cite[Proposition 3.5]{10.1145/3426182.3426185})
  % If $\ltPWR{T}{e}{f}$ then $\ltMustHB{T}{e}{f}$.
  $\ltPWR{T}{{}}{{}} \subseteq \ltMustHB{T}{{}}{{}}$.
\end{lemma}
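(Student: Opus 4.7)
The plan is to proceed by case analysis on the defining rules of PWR. Since $\ltPWR{T}{{}}{{}}$ is the smallest strict partial order satisfying (PWR-1)--(PWR-5) by Definition~\ref{def:www-relation}, and $\ltMustHB{T}{{}}{{}}$ is itself a strict partial order on $\events{T}$ (by the lemma following Definition~\ref{def:must-happen-before}), it suffices to verify the one-step implication $\ltPWR{T}{e}{f} \Rightarrow \ltMustHB{T}{e}{f}$ for each of the five generating clauses; closure under transitivity then propagates the inclusion automatically.

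Rules (PWR-1) and (PWR-2) are immediate from conditions \textbf{PO} and \textbf{LW} of Definition~\ref{def:correct-trace-reordering}: any correctly reordered prefix $T'$ containing two same-thread events preserves their trace order, and any $T'$ containing a read preserves its last-write. Rules (PWR-4) and (PWR-5) follow similarly from conditions \textbf{Fork-2} and \textbf{Join} of Definition~\ref{def:well-formed-trace}: a fork must precede every event of the spawned thread, and every event of a joined thread must precede any join on that thread, in any well-formed prefix.

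The main obstacle is rule (PWR-3). Fix an arbitrary correctly reordered prefix $T'$ with $r_1, f \in T'$; we must show $r_1 <_{T'} f$. By \textbf{CS-SAME-THREAD}, $a_1$ and $r_1$ share a thread, and $a_2$ and $f$ share a thread; condition \textbf{PO} therefore forces $a_1, a_2 \in T'$ together with $a_1 <_{T'} r_1$ and $a_2 <_{T'} f$. The crux is to establish that $a_1 <_{T'} a_2$; once this is in place, a single application of \textbf{Lock-1} to the two acquires $a_1, a_2$ in $T'$, combined with the uniqueness of matching releases guaranteed by \textbf{Lock-2}, places $r_1$ strictly between them, yielding $r_1 <_{T'} a_2 <_{T'} f$.

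To rule out the case $a_2 <_{T'} a_1$, I would first use \textbf{Lock-1} in $T'$ to force the matching release $r_2$ to sit between $a_2$ and $a_1$, i.e. $a_2 <_{T'} r_2 <_{T'} a_1 <_{T'} r_1$, and then derive a contradiction from the dependency between the witnessing events $e$ and $f$ that justifies (PWR-3): such a dependency (e.g. a last-write/read link between events inside the critical sections, or more generally a chain of already-established $\ltPWR{T}{}{}$ steps) would, by (PWR-2) or the inductive hypothesis, force $e <_{T'} f$ and hence $a_1 <_{T'} f$, which is incompatible with $r_2 <_{T'} a_1$ and $f <_{T'} r_2$ obtained from \textbf{PO} in $a_2$'s thread. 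This is the step that will require close attention to whatever access constraint the rule's premise carries and is the portion of the argument that must be re-examined when the construction is later generalised to cross-thread critical sections.
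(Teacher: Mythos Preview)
The paper gives no proof of this lemma; it simply cross-references Proposition~3.5 of the cited PWR paper.  So there is nothing to compare against, and your proposal has to be judged on its own.  Your overall plan---check each generating clause of Definition~\ref{def:www-relation} against Definition~\ref{def:correct-trace-reordering} and then appeal to transitivity of $\ltMustHB{T}{}{}$---is exactly the right shape, and your treatment of (PWR-1), (PWR-2), (PWR-4), (PWR-5) is fine.

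For (PWR-3) there is a genuine gap, which you sense but do not close.  As the rule is \emph{literally} written in Definition~\ref{def:www-relation}, the only link between the two critical sections is that some $e$ lies in $\StdCSect{T}{\AcqRelPair{a_1}{r_1}}{x}$ and $f$ lies in $\StdCSect{T}{\AcqRelPair{a_2}{r_2}}{x}$; there is no write--read, last-write, or $\ltPWR{T}{e}{f}$ premise.  Under that literal reading the lemma is false: two critical sections on the same lock that touch disjoint variables can be swapped in a correct reordering, yet the rule would still produce $\ltPWR{T}{r_1}{f}$.  Even the informal rule~(a''') in Section~\ref{sec:overview} (``$e$ is a write and $f$ is a read'') is not enough---if $e$ is not the last write for $f$, nothing forces $e <_{T'} f$.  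The rule from the cited source carries the additional premise $\ltPWR{T}{e}{f}$, and that is precisely what your contradiction in the case $a_2 <_{T'} a_1$ needs: the induction hypothesis then gives $e <_{T'} f$, hence $a_1 <_{T'} e <_{T'} f$, which clashes with $f <_{T'} r_2 <_{T'} a_1$.  You should state explicitly that you are reading (PWR-3) with this premise and that the argument is by induction on the PWR derivation, not merely a one-step check.

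One smaller point: \textbf{Lock-1} applied to $a_2 <_{T'} a_1$ yields \emph{some} release on $x$ in $\thd{a_2}$ between them, not $r_2$ directly.  You still need to observe that the first such release in program order after $a_2$ is $r_2$ (by \textbf{Lock-2}), so \textbf{PO} forces $r_2 \in T'$ with $r_2 <_{T'} a_1$; only then does $f <_{T'} r_2$ follow from $f \in \StdCSect{T}{\AcqRelPair{a_2}{r_2}}{x}$ and \textbf{PO}.
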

The example in figure~\ref{fig:exDC_2b} demonstrates that
the reverse inclusion does not hold:
We have $\ltMustHB{T}{e_5}{e_{10}}$, but PWR does not order these events.

From Lemma~\ref{le:std-has-no-false-negatives} and
Lemma~\ref{lemma:pwr-lt-cross-thread} we obtain
that PWR is a complete data race prediction method (no false
negatives).
% because the lock sets get even smaller than the standard lock sets!
However, we may encounter false positives.
Consider the example in figure~\ref{fig:exCS_2}.
Conflicting events $e_4$ and $e_8$ are not ordered under PWR and their standard lock set is disjoint.

Next, we show how to make use of the PWR relation to compute an
approximation of cross-thread lock sets. The resulting construction
retains completeness and eliminates some false positives (like the one
discussed in the previous paragraph).

%--------------------------------------------------------
\subsection{Approximating Cross-Thread Lock Sets via PWR}
\label{sec:effic-comp-cross}

\begin{figure}

%%  latexTrace $ id $ addLoc exPWR_1
\bda{|l|l|l|l|l|}
\hline  & \thread{1} & \thread{2} & \thread{3} & \thread{4}\\ \hline
\eventE{1}  & \forkE{\thread{2}}&&&\\
\eventE{2}  & \forkE{\thread{3}}&&&\\
\eventE{3}  & \lockE{x}&&&\\
\eventE{4}  & \writeE{a}&&&\\
\eventE{5}  & &\lockE{y}&&\\
\eventE{6}  & &\writeE{b}&&\\
\eventE{7}  & \joinE{\thread{2}}&&&\\
\eventE{8}  & \unlockE{x}&&&\\
\eventE{9}  & &&\lockE{x}&\\
\eventE{10}  & &&\forkE{\thread{4}}&\\
\eventE{11}  & &&&\writeE{b}\\
\eventE{12}  & &&\readE{a}&\\
\eventE{13}  & &&\unlockE{x}&\\
\eventE{14}  & &\joinE{\thread{4}}&&\\
\eventE{15}  & &\unlockE{y}&&\\

 \hline \eda{}

   \caption{$\PWRLH{\cdot}$ underapproximates $\IntraLH{\cdot}$: $(y,
     t_2) \in \CTTLH{e_{11}}$ but $(y,t_2) \not\in \PWRLH{e_{11}}$}
\label{fig:exPWR_1}

\end{figure}

We define a new lock set function $\PWRLH{\cdot}$ as a variant of $\CTTLH{\cdot}$ which
replaces the relation $\ltMHB{\cdot}{\cdot}$ in definition~\ref{def:cross-thread-cs} with the
relation $\ltPWR{}{\cdot}{\cdot}$ and then calculates a thread-indexed
lock set analogous to
definition~\ref{def:thread-indexed-lock-set}. This new construction
lies properly between the standard and cross-thread lock set construction.

\begin{lemma}\label{lemma:lockset-s-pwr-c}
  \begin{enumerate}
  \item For all $e\in T$, $\PWRLH{e} \subseteq \CTTLH{e}$.
  \item For all $e\in T$, $\StdLH{e} = \{ x \mid (x, t) \in \PWRLH{e},
    t = \threadID e\} $.
  \item There exists a trace $T$ and $e\in T$, such that $\PWRLH{e}
    \subsetneq \CTTLH{e}$.
  \item There exists a trace $T$ and $e\in T$, such that $\StdLH{e}
    \subsetneq \pi_1 (\PWRLH{e})$.
  \end{enumerate}
\end{lemma}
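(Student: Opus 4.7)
The plan is to establish items (1) and (2) as direct consequences of the definitions together with Lemma~\ref{lemma:pwr-lt-cross-thread}, and to supply small witness traces for items (3) and (4).

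For item (1), I would unfold the definition of $\PWRLH{\cdot}$: a pair $(x,t)$ belongs to $\PWRLH{e}$ iff there is a matching acquire-release pair $\AcqRelPair{\evtAcc}{\evtRel}$ in thread $t$ with $\ltPWRNoT{\evtAcc}{e}$, $\ltPWRNoT{e}{\evtRel}$, and the CS-CROSS-2 condition. By Lemma~\ref{lemma:pwr-lt-cross-thread}, the same witness gives $\ltMHBB{\evtAcc}{e}{\evtRel}$, and CS-CROSS-2 is shared verbatim with $\CTTLH{\cdot}$; hence $(x,t) \in \CTTLH{e}$. For item (2) I would prove two inclusions. Left-to-right: if $x \in \StdLH{e}$, then by CS-SAME-THREAD the guarding acquire $\evtAcc$ and release $\evtRel$ lie in $\threadID{e}$ and appear before respectively after $e$ in $T$, so PWR-1 gives $\ltPWRNoT{\evtAcc}{e}$ and $\ltPWRNoT{e}{\evtRel}$; together with CS-MATCH (identical to CS-CROSS-2) this yields $(x, \threadID{e}) \in \PWRLH{e}$. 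Right-to-left: if $(x, \threadID{e}) \in \PWRLH{e}$ with witness $(\evtAcc, \evtRel)$, then CS-SAME-THREAD holds by the thread-index constraint, CS-BRACKET follows from Lemma~\ref{lemma:pwr-lt-cross-thread}, and CS-MATCH coincides with CS-CROSS-2, so $x \in \StdLH{e}$.

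For item (4), I would reuse the trace in Figure~\ref{fig:exCS_2} with witness event $e_4$. The standard lock set of $e_4$ is empty since $t_2$ acquires no locks, yet chaining PWR-4 (fork of $t_2$), PWR-1 (program order inside $t_1$) and PWR-5 (join of $t_2$) yields $\ltPWRNoT{e_2}{e_4}$ and $\ltPWRNoT{e_4}{e_6}$. Since $e_6$ is the only release of $x$, CS-CROSS-2 is trivial and $(x, t_1) \in \PWRLH{e_4}$, so $\StdLH{e_4} = \emptyset \subsetneq \{x\} \subseteq \pi_1(\PWRLH{e_4})$.

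For item (3), the witness is the trace in Figure~\ref{fig:exPWR_1}. A fork/join dependency argument shows that any correctly reordered prefix containing $e_{11}$ must also contain $e_5$ before and $e_{15}$ after, so $\ltMHB{e_5}{e_{11}}$ and $\ltMHB{e_{11}}{e_{15}}$, witnessing $(y, t_2) \in \CTTLH{e_{11}}$. The main obstacle, and where I would spend most effort, is the negative side $(y, t_2) \notin \PWRLH{e_{11}}$: this requires ruling out every PWR chain from $e_5$ to $e_{11}$. The plan is a rule-by-rule case analysis: PWR-1 cannot cross the boundary between $t_2$ and $t_4$; PWR-2 does not apply because $e_{11}$ is a write; PWR-3 would need a pair of critical sections on a common lock whose constituents include $e_5$ and $e_{11}$ with the requisite write/read conflict, which is absent here; and the fork/join edges (PWR-4, PWR-5) do not compose into a chain from any event PWR-reachable from $e_5$ to $e_{11}$.
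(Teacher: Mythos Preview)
Your treatment of items (1), (2), and (4) follows the paper's line and is, if anything, more complete. For item (2) you supply both inclusions, whereas the paper only spells out the left-to-right direction; your right-to-left argument via Lemma~\ref{lemma:pwr-lt-cross-thread} is correct and fills that gap. For item (4) you argue directly that $(x,t_1)\in\PWRLH{e_4}$ using the PWR fork/join/program-order rules, which is what is actually needed; the paper's proof text states $\CTTLH{e_4}=\{(x,t_1)\}$ rather than $\PWRLH{e_4}$, so your version is the cleaner one.

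For item (3) you choose the same witness trace (Figure~\ref{fig:exPWR_1}) and the same target pair $(y,t_2)$ at $e_{11}$, but your justification of the positive side $(y,t_2)\in\CTTLH{e_{11}}$ has a gap. You write that a ``fork/join dependency argument'' gives $\ltMHB{e_5}{e_{11}}$. The second half, $\ltMHB{e_{11}}{e_{15}}$, is indeed a fork/join argument (via $e_{14}=\joinE{t_4}$ in $t_2$). But there is no fork/join chain from $t_2$ to $t_4$: $t_4$ is forked by $t_3$ at $e_{10}$, and nothing in $t_2$ forks or is joined by anything on a path to $t_4$. In fact the correctly reordered prefix $[e_1,e_2,e_9,e_{10},e_{11},e_5,e_6]$ places $e_{11}$ before $e_5$, so fork/join alone cannot yield $\ltMHB{e_5}{e_{11}}$. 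The paper's proof instead appeals to the reasoning behind Figure~\ref{fig:exDC_2b}: the last-write dependency from $e_4$ to the read $e_{12}$, with both events inside $x$-critical sections, forces the two $x$-critical sections to keep their relative order, and composing this with the join at $e_7$ and the fork at $e_{10}$ is what links $t_2$ to $t_4$. You should route your argument through that write-read-plus-lock dependency rather than through fork/join alone. Your rule-by-rule sketch for the negative side $(y,t_2)\notin\PWRLH{e_{11}}$ is more detailed than the paper's bare assertion; just note that the formal rule (PWR-3) in Definition~\ref{def:www-relation} does not literally require a write/read conflict, so phrase that case in terms of $e_{11}$ not lying in any standard $y$-critical section.
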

\begin{proof}
  \begin{enumerate}
  \item The inclusion $\PWRLH{\cdot} \subseteq \CTTLH{\cdot}$ follows
    from lemma~\ref{lemma:pwr-lt-cross-thread}.
  \item Suppose that $x \in \StdLH{e}$, that is $\ltMHBB{a(x)}
    e{r(x)}$ and \textbf{CS-SAME-THREAD} and \textbf{CS-MATCH}. Hence,
    ${a(x)} <_{po} e <_{po} {r(x)}$ in the program order and
    \textbf{CS-MATCH}. By (PWR-1), ${a(x)} <_{\pwrsymbol} e <_{\pwrsymbol} {r(x)}$
    and \textbf{CS-MATCH}, so that $(x, \threadID e) \in \PWRLH (e)$.
  \item
    Figure~\ref{fig:exPWR_1} contains a trace where
    $\ltMustHB{T}{e_6}{e_{11}}$ via similar reasoning as in
    figure~\ref{fig:exDC_2b}.
    Due to the join event $e_{14}$, event $e_{11}$ is protected by
    lock $y$. That is, $(y, t_2) \in \IntraLH{e_{11}}$.
    However, $(y, t_2) \not\in \PWRLH{e_{11}}$.
  \item
    In figure~\ref{fig:exCS_2}, $\StdLH{e_4} = \emptyset$, but
    $\CTTLH{e_4} = \{ (x, t_1) \}$.
    \qedhere
  \end{enumerate}
\end{proof}

Together with Lemma~\ref{le:thread-indexed-relations}, we
see that the difference between $\CTTLH{e}$ and $\PWRLH{e}$ lies only
in threads other than $t =\threadID e$, as the projection on thread
$t$ yields $\StdLH{e}$ in both cases. Applying
corollary~\ref{co:ct-no-fn} to lemma~\ref{lemma:lockset-s-pwr-c} we
obtain completeness for data race prediction with PWR lock sets:
\begin{corollary}
  \label{le:pwr-has-no-false-negatives}
  $\FN{\PWRLH{\cdot}} = \emptyset$.
\end{corollary}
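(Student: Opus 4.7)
The plan is to derive this corollary by monotonicity of the predictor together with the already established fact that the thread-indexed cross-thread predictor has no false negatives.

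First, I would observe that $\PWRLH{\cdot}$ is a thread-indexed lock set, so the predictor it feeds is $\PredictDRP$ (based on $\indexedcap$), not the plain $\PredictDR$ used for $\StdLH{\cdot}$ and $\IntraLH{\cdot}$. Lemma~\ref{le:fp-fn} is phrased for the non-indexed predictor, so strictly speaking I need a thread-indexed analogue: if $L'_1 \subseteq L'_2$ pointwise, then $\PredictDRP[T]{L'_2} \subseteq \PredictDRP[T]{L'_1}$, and consequently $\FN{L'_1} \subseteq \FN{L'_2}$. This follows by the same one-line set-theoretic argument: if $(x,s) \in L'_1(e)$ and $(x,t) \in L'_1(f)$ with $s\ne t$, then the same pair lies in $L'_2(e), L'_2(f)$, so $L'_1(e)\indexedcap L'_1(f) \subseteq L'_2(e) \indexedcap L'_2(f)$. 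Hence emptiness of the latter intersection implies emptiness of the former, i.e., every prediction made using $L'_2$ is also made using $L'_1$.

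Second, I would invoke Lemma~\ref{lemma:lockset-s-pwr-c}(1), which gives $\PWRLH{e} \subseteq \CTTLH{e}$ for every $e \in T$. Combined with the monotonicity step above, this yields $\FN{\PWRLH{\cdot}} \subseteq \FN{\CTTLH{\cdot}}$.

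Finally, Corollary~\ref{co:ct-no-fn} tells us $\FN{\CTTLH{\cdot}} = \emptyset$, and the chain $\FN{\PWRLH{\cdot}} \subseteq \FN{\CTTLH{\cdot}} = \emptyset$ closes the argument. I do not expect any real obstacle here: the only mildly non-routine point is making the monotonicity of the thread-indexed predictor explicit (since Lemma~\ref{le:fp-fn} is only stated for $\PredictDR$), but this is a direct consequence of the definition of $\indexedcap$ and adds at most a couple of lines.
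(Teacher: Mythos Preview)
Your proposal is correct and matches the paper's own (very terse) justification, which simply says ``applying Corollary~\ref{co:ct-no-fn} to Lemma~\ref{lemma:lockset-s-pwr-c}''. You have in fact been more careful than the paper by spelling out the monotonicity of $\PredictDRP$ under pointwise inclusion of thread-indexed lock sets, a step the paper leaves implicit since Lemma~\ref{le:fp-fn} is only stated for the non-indexed predictor.
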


% The implementation of the enhanced PWR algorithm is incomplete as for efficiency reasons
% we only track the most recent reads and writes.
The implementation of the enhanced PWR algorithm only caches the most
recent reads and writes for efficiency reasons.
For example, conflicting events $e_2$ and $e_7$ in figure~\ref{fig:exPWR_2}
are in a race as shown by the reordering on the right.
However, the subsequent event $e_4$ will evict $e_2$ and we will miss the race.
This strategy (re)introduces some incompleteness, but it is also
applied by other implementations~\cite{flanagan2010fasttrack}.

\begin{figure}

  \bda{lcl}

%%latexTrace $ id  $ addLoc exPWR_2
\ba{|l|l|l|}
\hline  & \thread{1} & \thread{2}\\ \hline
\eventE{1}  & \forkE{\thread{2}}&\\
\eventE{2}  & \writeE{a}&\\
\eventE{3}  & \lockE{x}&\\
\eventE{4}  & \writeE{a}&\\
\eventE{5}  & \unlockE{x}&\\
\eventE{6}  & &\lockE{x}\\
\eventE{7}  & &\writeE{a}\\
\eventE{8}  & &\unlockE{x}\\

 \hline \ea{}

& &

\ba{l}
\mbox{Reordering exhibiting race}

\\

\ba{|l|l|l|}
\hline  & \thread{1} & \thread{2}\\ \hline
\eventE{1}  & \forkE{\thread{2}}&\\
\eventE{6}  & &\lockE{x}\\
\eventE{7}  & &\writeE{a}\\
\eventE{2}  & \writeE{a}&\\

\hline \ea{}

\ea

  \eda

    \caption{Incompleteness due to tracking most recent reads/writes only}
\label{fig:exPWR_2}

\end{figure}

%--------------------------------------------------------
\subsection{PWR Data Race Prediction Enhanced with Approximated Cross-Thread Lock Sets}

\begin{algorithm}
  \caption{PWR Data Race Prediction Enhanced with Cross-Thread Lock Sets} \label{alg:pwr_guards}

\begin{tabular}{cc}

\begin{minipage}{.48\textwidth}

    \bda{ll}
    V & ::= [V_1,...,V_n]  \quad \mbox{Vector clock}
  \\
    V < V'  & \mbox{if}\ \forall k. V_k \leq V'_k \wedge \exists k. V_k < V'_k
  \\
  \pwrVC{e} & \mbox{PWR vector clock}
 \\         & \mbox{for event $e$}
  \\
  \threadVC{t} & \mbox{Thread $t$'s vector clock}
  \\
  \StdLocksSym(t) & \mbox{Thread $t$'s lock set}
  \\
  \AllLocksSym & \mbox{Global lock set}
  \\
  \acqVC{x}&  \mbox{Vector clock of most recent $\lockE{x}$}
  \\
  \RW{a} & \mbox{Race candidates for variable $a$}
  \eda

\begin{algorithmic}[1]
  \Procedure{\mbox{$e$}@acquire}{$t,x$}
  \State $\threadVC{t} = \pwrVC{e}$
\State $\StdLocksSym(t) = \StdLocksSym(t) \cup \{x\}$
\State $\AllLocksSym = \AllLocksSym \cup \{x\}$
\State $\acqVC{x} = \threadVC{t}$
\EndProcedure
\end{algorithmic}

\begin{algorithmic}[1]
  \Procedure{\mbox{$e$}@read}{$t,a$}
  \State $\threadVC{t} = \pwrVC{e}$
    \State $gs = \{ z? \mid z \in \AllLocksSym \setminus \StdLocksSym(t)$
  \\ \quad \quad \quad \quad \quad \quad $\wedge \
               \acqVC{z} < \threadVC{t} \}$
\State $\RW{a} = \RW{a} \cup \{ (e,\threadVC{t}, \StdLocksSym(t),gs) \}$
\EndProcedure
\end{algorithmic}

\begin{algorithmic}[1]
  \Procedure{\mbox{$e$}@write}{$t,a$}
  \State $\threadVC{t} = \pwrVC{e}$
    \State $gs = \{ z? \mid z \in \AllLocksSym \setminus \StdLocksSym(t)$
  \\ \quad \quad \quad \quad \quad \quad $\wedge \
               \acqVC{z} < \threadVC{t} \}$
\State $\RW{a} = \RW{a} \cup \{ (e,\threadVC{t}, \StdLocksSym(t),gs) \}$
\EndProcedure
\end{algorithmic}

\end{minipage}

&

\begin{minipage}{.52\textwidth}

\begin{algorithmic}[1]
  \Procedure{racecheck}{$e,f,a$}
  \If {\begin{tabular}{l}
         $(e,V_e,ls_e, gs_e) \in \RW{a} \wedge \not \exists z? \in gs_e \wedge$
      \\ $(f, V_f, ls_f, gs_f) \in \RW{a} \wedge \not \exists z? \in gs_f \wedge$
      \\ $V_e \not < V_f \wedge V_f \not < V_e \wedge$
      \\ $(ls_e \cup gs_e) \cap (ls_f \cup gs_f) = \emptyset$
        \end{tabular}
  }
  Race found
  \EndIf
\EndProcedure
\end{algorithmic}

\begin{algorithmic}[1]
  \Procedure{\mbox{$e$}@release}{$t,z$}
  \State $\threadVC{t} = \pwrVC{e}$
  \State $\StdLocksSym(t) = \StdLocksSym(t) \setminus \{z\}$
  \State $\AllLocksSym = \AllLocksSym \setminus \{z\}$
  \For {$a \in A$}
  \For {$(e,V,ls, gs) \in \RW{a}$}
  \If {$z? \in gs \wedge V < \threadVC{t}$}
  \State $gs' = (gs \setminus \{ z? \}) \cup \{ z \}$  %% \mbox{replace $z?$ by $z$ in $gs$}
  \Else
  \State $gs' = gs \setminus \{ z? \}$ %%\mbox{remove $z?$ from $gs$}
  \EndIf
  \State $\RW{a}' = \RW{a} \setminus \{ (e,V,ls, gs) \}$
  \State $\RW{a} = \RW{a}' \cup \{ (e,V,ls, gs') \}$
  \EndFor
  \EndFor
\EndProcedure
\end{algorithmic}

\end{minipage}

\end{tabular}

\end{algorithm}

This section explains how to integrate the computation of
$\PWRLH{\cdot}$ into the PWR data race
predictor~\cite{10.1145/3426182.3426185}.
We do apply the thread-indexed construction covered in section~\ref{sec:elim-more-false}, but omit the details for brevity.
The original PWR data race predictor efficiently implements the PWR relation using vector clocks.
For readability, we abstract these details in a function $\pwrVC{e}$
that computes the vector clock that captures the PWR relation.
That is, $\ltPWRNoT{e}{f}$ iff $\pwrVC{e} < \pwrVC{f}$.
Details on computing $\pwrVC{e}$ may be found elsewhere~\cite{10.1145/3426182.3426185}.

The enhanced PWR algorithm~\ref{alg:pwr_guards} makes use of several state variables.
Each thread maintains its own vector clock $\threadVC{t}$
and (standard) lock set $\StdLocksSym(t)$.
We also maintain a global lock set $\AllLocksSym$.
In  $\acqVC{x}$, we record the vector clock of
the most recent acquire operation on lock $x$.
Race candidates for shared variable $a$ are collected in the set $\RW{a}$.

Events are processed in the customary stream-based fashion.
Each event starts a respective procedure, which takes as an argument a thread id $t$ and either
a variable $a$ or lock $x$.
We use the notation $e@\mbox{operation}$ to uniquely identify the event $e$ associated
with the operation.
For brevity, we ignore fork/join events as their treatment is standard.

Consider processing an acquire operation on $x$ in thread $t$.
We first carry out $\pwrVC{e}$ to compute the PWR vector clock of $e$.
Then, we add $x$ to the thread-local lock set $\StdLocksSym(t)$
and the global lock set $\AllLocksSym$.
Finally, we record the vector clock of the acquire operation.

Processing a read/write event adds a race candidate to the set $\RW{a}$.
Each race candidate is represented as a quadruple $(e, V, ls, gs)$
where $V$ is the vector clock of event $e$,
$ls$ is $e$'s standard lock set and
$gs$ is the set of potential guard locks acquired in threads different from
$\threadID e$.

A potential guard lock is an acquired lock that is not part of the standard
lock set and the acquire operation on this lock took place before the read/write.
See the conditions in line 3 and 4 of \textsc{read} and \textsc{write} procedures.
We mark such potential guard locks with $z?$.
We verify if $z?$ is an actual guard lock
when processing the corresponding release event.

Consider processing a release operation on $z$ in thread $t$.
We deliberately use $z$ so the naming is consistent with the
potential guard locks $z?$ identified in the
\textsc{acquire} procedure.
We consider all race candidates associated with $z?$.
If the race candidate happens before the release (cf.\ line 7), then $z$ is an actual guard lock and part of the set $\PWRLH{e}$.
Otherwise, we remove $z?$.

Our actual implementation does not iterate over all variables $a \in A$ over all race candidates.
Rather, we register race candidates with guard locks $z?$ that need to be checked.
Once the potential guard locks of a race candidate have been checked,
we check if there is an actual race in procedure \textsc{racecheck}.

At least one of the race candidates must be a write.
Candidates are removed aggressively to conserve space. For example,
we maintain at most one write per variable, a read that takes place later evicts
an earlier read etc.
For brevity, such details are ignored in the description of the algorithm.

\mbox{}
\\
\noindent
{\bf Difference to original PWR algorithm.}
The original PWR algorithm can be obtained from
algorithm~\ref{alg:pwr_guards} as follows.
The global lock set~$\AllLocksSym$ can be omitted.
Race candidates are represented as a triple~$(e, V, ls)$.
In procedure \textsc{acquire}, remove lines 3-5.
In procedure \textsc{read}, remove lines 3-4.
In procedure \textsc{write}, remove lines 3-4.
In procedure \textsc{release}, remove lines 4-15.

%%% Local Variables:
%%% mode: latex
%%% TeX-master: "main"
%%% End:

%--------------------------------------------------------
%--------------------------------------------------------
\section{Experiments}
\label{sec:experiments}

\begin{figure}

  %% mkRunTimeResultsDROnly
  \bda{lrrrrr||rr}
\mbox{Benchmark} & {\mathcal E}  & {\mathcal T}  & {\mathcal M}  & {\mathcal L}  & \mbox{parsing} & \mbox{\pwrCand} & \mbox{\pwrCandGuarded} \\
 \hline
 \mbox{account } & 134 &  7  & 41 & 3 & 2\mbox{ms} & 2\mbox{ms} & 1\mbox{ms} \\
\mbox{airlinetickets } & 140 &  14  & 44 & 0 & 1\mbox{ms} & 1\mbox{ms} & 0\mbox{ms} \\
\mbox{array } & 51 &  6  & 30 & 2 & 0\mbox{ms} & 0\mbox{ms} & 0\mbox{ms} \\
\mbox{batik } & 157.9\mbox{M} &  8  & 4.9\mbox{M} & 1.9\mbox{K} & 7.4\mbox{m} & 12.2\mbox{m} & 13.2\mbox{m} \\
\mbox{boundedbuffer } & 332 &  7  & 63 & 2 & 1\mbox{ms} & 1\mbox{ms} & 1\mbox{ms} \\
\mbox{bubblesort} & 4.6\mbox{K} &  30 & 3 & 168 & 3\mbox{ms} & 37\mbox{ms} & 27\mbox{ms} \\
\mbox{bufwriter } & 22.2\mbox{K} &  9  & 471 & 1 & 51\mbox{ms} & 95\mbox{ms} & 104\mbox{ms} \\
\mbox{clean } & 1.3\mbox{K} &  12  & 26 & 2 & 3\mbox{ms} & 6\mbox{ms} & 7\mbox{ms} \\
\mbox{critical } & 59 &  7  & 30 & 0 & 0\mbox{ms} & 0\mbox{ms} & 0\mbox{ms} \\
\mbox{cryptorsa } & 58.5\mbox{M} &  19  & 1.7\mbox{M} & 8.0\mbox{K} & 2.3\mbox{m} & 3.9\mbox{m} & 4.5\mbox{m} \\
\mbox{derby } & 1.4\mbox{M} &  7  & 185.6\mbox{K} & 1.1\mbox{K} & 3.4\mbox{s} & 6.1\mbox{s} & 8.2\mbox{s} \\
\mbox{ftpserver } & 49.6\mbox{K} &  14  & 5.5\mbox{K} & 301 & 119\mbox{ms} & 232\mbox{ms} & 276\mbox{ms} \\
\mbox{jigsaw } & 3.1\mbox{M} &  15  & 103.5\mbox{K} & 275 & 7.5\mbox{s} & 12.9\mbox{s} & 15.7\mbox{s} \\
\mbox{lang } & 6.3\mbox{K} &  10  & 1.5\mbox{K} & 0 & 16\mbox{ms} & 27\mbox{ms} & 29\mbox{ms} \\
\mbox{linkedlist } & 1.0\mbox{M} &  15  & 3.1\mbox{K} & 1.0\mbox{K} & 2.3\mbox{s} & 4.5\mbox{s} & 4.8\mbox{s} \\
\mbox{lufact } & 134.1\mbox{M} &  5  & 252.1\mbox{K} & 1 & 4.6\mbox{m} & 9.2\mbox{m} & 10.0\mbox{m} \\
\mbox{luindex } & 397.8\mbox{M} &  4  & 2.5\mbox{M} & 65 & 26.1\mbox{m} & 35.7\mbox{m} & 38.5\mbox{m} \\
\mbox{lusearch } & 217.5\mbox{M} &  10  & 5.2\mbox{M} & 118 & 8.6\mbox{m} & 15.1\mbox{m} & 16.3\mbox{m} \\
\mbox{mergesort } & 3.0\mbox{K} &  8  & 621 & 3 & 8\mbox{ms} & 13\mbox{ms} & 14\mbox{ms} \\
\mbox{moldyn } & 200.3\mbox{K} &  6  & 1.2\mbox{K} & 2 & 457\mbox{ms} & 754\mbox{ms} & 807\mbox{ms} \\
\mbox{pingpong } & 151 &  21  & 51 & 0 & 1\mbox{ms} & 0\mbox{ms} & 1\mbox{ms} \\
\mbox{producerconsumer } & 658 &  11  & 67 & 3 & 1\mbox{ms} & 3\mbox{ms} & 3\mbox{ms} \\
\mbox{raytracer } & 15.8\mbox{K} &  6  & 3.9\mbox{K} & 8 & 37\mbox{ms} & 65\mbox{ms} & 70\mbox{ms} \\
\mbox{readerswriters } & 11.3\mbox{K} &  8  & 18 & 1 & 26\mbox{ms} & 54\mbox{ms} & 60\mbox{ms} \\
\mbox{sor } & 606.9\mbox{M} &  5  & 1.0\mbox{M} & 2 & 23.1\mbox{m} & 37.1\mbox{m} & 40.4\mbox{m} \\
\mbox{sunflow } & 11.7\mbox{M} &  17  & 1.3\mbox{M} & 9 & 27.5\mbox{s} & 55.2\mbox{s} & 59.7\mbox{s} \\
\mbox{tsp } & 307.3\mbox{M} &  10  & 181.1\mbox{K} & 2 & 13.6\mbox{m} & 23.6\mbox{m} & 25.5\mbox{m} \\
\mbox{twostage } & 193 &  15  & 21 & 2 & 0\mbox{ms} & 1\mbox{ms} & 1\mbox{ms} \\
\mbox{wronglock } & 246 &  25  & 26 & 2 & 0\mbox{ms} & 1\mbox{ms} & 1\mbox{ms} \\
\mbox{xalan } & 122.5\mbox{M} &  9  & 4.4\mbox{M} & 2.5\mbox{K} & 5.1\mbox{m} & 8.9\mbox{m} & 24.4\mbox{m} \\
\mbox{biojava } & 221.0\mbox{M} &  6  & 121.0\mbox{K} & 78 & 31.5\mbox{m} & 14.4\mbox{m} & 15.5\mbox{m} \\
\mbox{cassandra } & 259.1\mbox{M} &  178  & 9.9\mbox{M} & 60.5\mbox{K} & 11.0\mbox{m} & 22.6\mbox{m} & 25.0\mbox{m} \\
\mbox{graphchi } & 215.8\mbox{M} &  20  & 24.9\mbox{M} & 60 & 9.1\mbox{m} & 15.7\mbox{m} & 17.0\mbox{m} \\
\mbox{hsqldb } & 18.8\mbox{M} &  46  & 945.0\mbox{K} & 401 & 47.6\mbox{s} & 1.3\mbox{m} & 1.5\mbox{m} \\
\mbox{tradebeans } & 39.1\mbox{M} &  236  & 2.8\mbox{M} & 6.1\mbox{K} & 1.6\mbox{m} & 3.5\mbox{m} & 3.9\mbox{m} \\
\mbox{tradesoap } & 39.1\mbox{M} &  236  & 2.8\mbox{M} & 6.1\mbox{K} & 1.6\mbox{m} & 3.4\mbox{m} & 3.9\mbox{m} \\
\mbox{zxing } & 546.4\mbox{M} &  15  & 37.8\mbox{M} & 1.5\mbox{K} & 24.0\mbox{m} & 46.6\mbox{m} & 49.2\mbox{m} \\

  \hline \eda
  \caption{{\bf Running times}.
    Columns ${\mathcal E}$, ${\mathcal T}$, ${\mathcal M}$ and ${\mathcal L}$ contain
    the number of events, total number of threads, total number of memory locations and total number of
    locks.
    Column ``parsing'' contains the running time to process all events.
    Columns \pwrCand\ and \pwrCandGuarded\ contain
    the running times of our two test candidates.}
  \label{fig:results-misc-zero}
%%  \vspace{-.5cm}
\end{figure}

\begin{figure}

%% mkRacesOnlyNoFilterCSOnly
\bda{l||r|r|r}
\mbox{Benchmark} & \mbox{races} & \mbox{cs guarded} & \mbox{cs guard locks} \\
 \hline
 \mbox{account } & 1 &  0  & 0 \\
\mbox{airlinetickets } & 2 &  0  & 0 \\
\mbox{array } & 1 &  0  & 0 \\
\mbox{batik } & 1 &  0  & 0 \\
\mbox{boundedbuffer } & 1 &  0  & 0 \\
\mbox{bubblesort } & 3 &  0  & 0 \\
\mbox{bufwriter } & 2 &  0  & 0 \\
\mbox{clean } & 3 &  0  & 0 \\
\mbox{critical } & 3 &  0  & 0 \\
\mbox{cryptorsa } & 4 &  0  & 0 \\
\mbox{derby } & 10 &  2  & 109.0\mbox{K} \\
\mbox{ftpserver } & 22 &  2  & 1.3\mbox{K} \\
\mbox{jigsaw } & 6 &  0  & 0 \\
\mbox{lang } & 1 &  0  & 0 \\
\mbox{linkedlist } & 4 &  0  & 0 \\
\mbox{lufact } & 2 &  0  & 0 \\
\mbox{luindex } & 10 &  0  & 0 \\
\mbox{lusearch } & 42 &  0  & 0 \\
\mbox{mergesort } & 1 &  0  & 0 \\
\mbox{moldyn } & 2 &  0  & 0 \\
\mbox{pingpong } & 2 &  0  & 0 \\
\mbox{producerconsumer } & 1 &  0  & 0 \\
\mbox{raytracer } & 3 &  0  & 0 \\
\mbox{readerswriters } & 4 &  0  & 0 \\
\mbox{sor } & 0 &  0  & 0 \\
\mbox{sunflow } & 5 &  0  & 0 \\
\mbox{tsp } & 5 &  0  & 0 \\
\mbox{twostage } & 1 &  0  & 0 \\
\mbox{wronglock } & 2 &  0  & 0 \\
\mbox{xalan } & 11 &  0  & 25.1\mbox{M} \\
\mbox{biojava } & 1 &  0  & 0 \\
\mbox{cassandra } & 48 &  1  & 767.2\mbox{K} \\
\mbox{graphchi } & 4 &  0  & 633.2\mbox{K} \\
\mbox{hsqldb } & 180 &  0  & 0 \\
\mbox{tradebeans } & 29 &  1  & 1.0\mbox{K} \\
\mbox{tradesoap } & 28 &  0  & 991 \\
\mbox{zxing } & 61 &  0  & 0 \\

  \hline \eda

  \caption{{\bf Comparison data races}.
    Column ``races'' contains the number of distinct source code
    locations where \pwrCand\ predicted a race.
    Column ``cs guarded'' contains the number of source code locations
    for which \pwrCandGuarded\ finds
    a guard lock that is not in the standard lock set but arises from a cross-thread lock set.
    Column ``cs guard locks'' contains the overall number of times \pwrCandGuarded\ encountered a cross-thread (guard) lock}
  \label{fig:results-races-all}
\end{figure}

We implemented the original PWR data race predictor as well as the
enhanced version as discussed in section~\ref{alg:pwr_guards} in C++.
In this section, we use the implementations to answer the research
questions posed in the overview:

\begin{description}
\item[RQ1] Can we compute cross-thread critical sections and the associated
  lock sets efficiently?

\item[RQ2] What is the effect on the analysis results of  incorporating
  cross-thread critical section?
\end{description}

\smallskip
\noindent
{\bf Test candidates.}
\pwrCand\ is our port of the original Go implementation of PWR~\cite{speedy-go} to C++.
\pwrCandGuarded\ is our enhancement of PWR described in algorithm~\ref{alg:pwr_guards}.
All test candidates are available in a GitHub
repository.\footnote{
  GitHub URL withheld to maintain anonymity.
  % \url{https://github.com/Proglang-Uni-Freiburg/LockFrame}
}

\smallskip
\noindent
{\bf Benchmarks and system setup.}
Our experiments cover all traces of the zero-reversal-logs and misc collection of traces that are provided by~\citet{tracelogs}.
These traces are obtained from standard benchmark suites such as
Java Grande~\cite{smith2001parallel}, Da
Capo~\cite{Blackburn:2006:DBJ:1167473.1167488}, and IBM
Contest~\cite{farchi2003concurrent}.
It is unknown how many "real" data races are hidden in these trace logs.
We share this issue with prior work that uses these trace logs for evaluation.
We conducted our experiments on an Apple M1 max CPU with 32GB of RAM running macOS Monterey (Version 12.1).

\smallskip
\noindent
{\bf Performance.}
Figure~\ref{fig:results-misc-zero} shows the detailed results of our performance measurements.
The overall overhead to compute cross-thread lock sets for the purpose of data race prediction is manageable
in all cases.
When comparing \pwrCandGuarded\ against \pwrCand\ the worst case
slowdown is 2.7x in case of xalan, but in most cases the slowdown is
between 10-20\%.
Hence, we can answer {\bf RQ1} affirmatively.

\smallskip
\noindent
{\bf Data races and guard locks.}
Figure~\ref{fig:results-races-all} shows the impact
of cross-thread lock sets on data race prediction.

For ftpserver, \pwrCand\ finds  22 distinct source code locations that triggered a race.
Recall that \pwrCand\ uses the standard lock set to reduce the number of race candidates.
Out of the 22 source code locations there are two locations that are additionally guarded by cross-thread locks predicted
by \pwrCandGuarded.
Hence, the programmer should focus first on the 20 racy source code locations that are not guarded by cross-thread locks.
For the cases shown in Figure~\ref{fig:results-races-all}, we can thus eliminate up to 20\% (derby) of the race candidates
that need to be inspected by the user.
This positively answers~{\bf RQ2}.
%% MS: omit, would need to expand and there's not much space
%% There are also cases where we can eliminate only a small portion of race candidates.
%% See case hsqldb where we find 180 race locations of which four are guarded by refined cross-thread locks.

For xalan, we find 11 race locations but none is guarded by a cross-thread lock.
However, \pwrCandGuarded  encounters a guard lock 25  million times,
which is the number of times condition $V < \threadVC{t}$ on line 7 in procedure \textsc{release} in
algorithm~\ref{alg:pwr_guards} is satisfied.
Such huge numbers likely arise because of the high number of locks and
memory locations (variables) in the xalan trace.

%%queryOverallRacesAndFP
%% (37,500,6)

Overall, there are 37 trace logs for which we report 500 racy source code locations
out of which 6 false positives can be eliminated  by \pwrCandGuarded.
The improvement is not drastic but it shows that we can further advance
the state of the art when it comes to efficient, near complete but often sound dynamic race prediction.
Similar observations apply to other state of the art data race prediction tools.
To quote the WCP paper:
\begin{quote}
  ``Note that, our WCP based race detection algorithm does
     not report drastically more races than the simpler HB based
     algorithm.''
\end{quote}
Importantly, we can show that cross-thread critical sections arise in practical programs.

\smallskip
\noindent
{\bf Comparison of PWR against WCP, FastTrack and others.}
We measure here only the relative improvements for PWR by taking into account
cross-thread critical sections.
A detailed comparison of PWR against WCP, FastTrack and others is given in~\cite{10.1145/3426182.3426185,DBLP:phd/dnb/Stadtmuller21}.

%%% Local Variables:
%%% mode: latex
%%% TeX-master: "main"
%%% End:

%--------------------------------------------------------
%--------------------------------------------------------
\section{Related Work}
\label{sec:related}

We started sections~\ref{sec:introduction}
and~\ref{sec:overview} with a review of selected work on
dynamic data race prediction. Here we comment on some further
work divided in four categories. Remember, when we speak of lock set
here, we mean the standard version. Trace analysis with lock sets and
partial orders is also applicable to deadlock detection, but we
concentrate on the analysis of data races in this work.

\medskip
\noindent
{\bf Lock set-based methods.}
\citet{Dinning:1991:DAA:127695:122767}
introduce the idea of a lock set for the purpose of dynamic data race
analysis as described in section~\ref{sec:overview}.
Subsequent work \cite{elmas2010goldilocks,
  Savage:1997:EDD:265924.265927} refines this approach, as the pure
lock set method is fast and complete, but inherently prone to false
positives. These false positives are often addressed using
partial-order-based methods.

\medskip
\noindent
{\bf Combinatorial methods.}
These methods are sound \emph{and} complete, as they attempt to cover
all correct reorderings exhaustively. They often employ SAT/SMT solvers to
derive alternative schedules from the constraints extracted from the trace.
Their run time can be exponential in the worst case.

\citet{sen2005detecting} define a framework for predictive runtime
analysis which enables one to systematically explore properties in all
equivalence classes of reorderings of a trace. As an example, they
encode datarace analysis in their framework. An implementation is
reported, but no quantitative data about it is available.

\citet{Chen:2007:PSC:1770351.1770387} offer a parametric framework
for causality to prove properties of a range of
happen-before relations. There is an implemented instance of the
framework for a particular algorithm based on vector clocks. They find
additional dataraces in small applications.

\citet{Said:2011:GDR:1986308.1986334} design a symbolic analysis based
on SMT solving to search for alternative traces witnessing a data
race. Their analysis encodes the sequential consistency semantics and
is said to improve over the maximal causal model. They present
preliminary statistics of an implementation on small benchmarks. Their
algorithm reduces the number of false positives significantly.

\citet{DBLP:conf/rv/SerbanutaCR12} investigate maximal causal models,
which determine all alternative traces that all programs exhibiting a
given trace might execute. They conduct a semantic investigation, but
there is no implementation.

\citet{Huang:2014:MSP:2666356.2594315} encode control flow into
constraints and formulate race detection as a constraint
problem. They prove an optimality result and claim scalability to
realistic applications.

\citet{LuoHuangRosuSystematic15} propose a constraint-based approach
to generate a sufficient number of alternative schedules to cover the
maximal causal space. Their goal is to discover concurrency bugs,
which are not further specified.

\medskip
\noindent
{\bf Partial-order-based methods.}
These methods are efficient as they infer a partial order via a linear pass through the trace.
We already discussed
WCP~\cite{Kini:2017:DRP:3140587.3062374}, SDP and WDP~\cite{10.1145/3360605},
DC~\cite{Roemer:2018:HUS:3296979.3192385}, and {\DCTaskPar}
\cite{8894270} in section~\ref{sec:overview}. There is a separate
section~\ref{sec:pwr} with details on PWR.

SHB~\citep{Mathur:2018:HFR:3288538:3276515} strengthens
HB~\cite{lamport1978time} by guaranteeing that all detected
races have viable schedules (HB only guarantees that for the first
detected race).
Like HB, SHB maintains the textual order of critical sections.
As each acquire synchronizes with the prior release,
HB and SHB are not affected by cross-thread critical sections.

The Goldilocks algorithm~\cite{10.1007/11940197_13}. %% {,elmas2010goldilocks}
computes the HB relation based on a combination of lock sets and thread ids.
The textual order of critical sections is maintained in this construction.
Hence, Goldilocks is not affected by cross-thread critical sections.
%%
%% MS: highly likely this method is also affected
%%     but no time to dig into the details.

Many of these methods have efficient implementations.
For example, consider FastTrack~\citep{flanagan2010fasttrack} which implements HB
and SmartTrack~\cite{10.1145/3385412.3385993} which further optimizes DC.
%% MS: omit
%% sampling-based FastTrack variant discussed in~\cite{10.1145/3571238}

SyncP~\cite{10.1145/3434317} maintains the textual order among critical sections (like HB)
but allows to skip certain parts of a trace.
For example, SyncP is able to predict the race in figure~\ref{fig:exPWR_2}.
Such races are referred to as sync-preserving races.
Strictly speaking, SyncP is not a partial-order method but we place it here because of its connection to HB.
Like HB, SyncP is not affected by cross-thread critical sections.
%% SyncP maintains the textual order among two critical sections on the same lock remains the same,

%% MS: omit
%%     We could say more about SyncP and its connection to SHB_E_E, PWR_E_E but less is more.
%%
%% This approach strictly improves over HB, but is unable to identify all races detected by  WCP.
%% SyncP is a linear pass algorithm but assumes that memory locations that are potential racy
%% are identified in a pre-processing step.
%% As the number of memory locations can range into the millions, this is an important step
%% to guarantee that SyncP scales to real-world programs.

% CP, WCP, DC, SDP, WDP and PWR may reorder critical sections.
% They all rely on the standard notion of a critical section

% Does this affect the number of false positives and false negatives?
% Does this affect the claims made?

We already discussed TSan~\citep{serebryany2009threadsanitizer} in the introduction.
The newer TSan version, ThreadSanitizer v2
(TSanV2)~\citep{thread-sanitizer-v2}, keeps a limited history of
write/read events to  improve the performance with respect to
FastTrack, which leads to a higher number of false negatives.

Many algorithms combine partial orders with lock sets.
Acculock \citep{xie2013acculock}  optimizes the original TSan algorithm
by employing a single lockset per variable.
Acculock can be faster, but is less precise compared to TSan
if a thread uses multiple locks at once.

SimpleLock \citep{yu2016simplelock+} simplifies the lockset algorithm.
A data race is only reported if at least one of the accesses is not protected by any lock.
SimpleLock is faster compared to Acculock but misses more data races
since it ignores data races for events with different locks.

% For example, % Savage and coworkers~
% \citet{Savage:1997:EDD:265924.265927}
% introduce the Eraser tool that scales to real-world applications.

% HYBRID
%O'Callahan and Choi~
\citet{10.1145/781498.781528} compute
a weak form of happens-before relation  in addition to lock sets.
With such a hybrid method,
conflicting events with disjoint lock sets are ignored if they can be ordered with their happens-before relation.
Thus, some false positives can be eliminated.
%% An idea that already seems present in the work by Dinning and Schonberg~\cite{Dinning:1991:DAA:127695:122767}.
%% A similar idea is implemented in the TSan V1 tool by
%% Serebryany and Iskhodzhanov~\cite{serebryany2009threadsanitizer}.
Similar ideas can be found in the work of
% Gen\c{c} and coworkers~
\citet{10.1145/3360605}
as well as
% Sulzmann and Stadtm{\"u}ller~
\citet{10.1145/3426182.3426185}.

% \newcommand\PWREE{PWRE+E}
% \newcommand\PWREELimit{PWRE+EL}

% {\bf Semi-efficient methods.}
% We consider semi-efficient methods that require polynomial run-time.

% The \SHBEE{} algorithm~\citep{DBLP:conf/pppj/SulzmannS19} requires
% quadratic run-time to compute all trace-specific data race pairs.
% Our \PWREE{} algorithm adopts ideas from \SHBEE{}
% and achieves completeness while retaining a quadratic run-time.
% By limiting the history of edge constraints, the variant \PWREELimit{} runs in linear time.
% Due to this optimization we are only near complete.
% In practice, the performance gain outweighs the benefit of a higher precision.

\medskip
\noindent
{\bf Multi-phase methods.}
The Vindicator algorithm~\citep{Roemer:2018:HUS:3296979.3192385} improves the WCP algorithm
and is sound for all reported data races. It can predict more data races compared to WCP,
but requires three phases to do so. The first phase of Vindicator is selecting
races based on the DC relation (a weakened WCP relation that removes the happens-before closure).
For the second phase, it constructs a graph that
contains all events from the processed trace. This phase is unsound and incomplete
which is why a third phase is required. The third phase makes a single attempt
to reconstruct a witness trace for the potential data race and reports a data race if successful.
%% Vindicator has a much higher run-time compared to the PWR family of algorithms.
% We did not include Vindicator in our measurements as we experienced performance issues
% for a number of real world benchmarks (e.g.~timeout due to lack of memory etc).

The M2 algorithm~\citep{Andreas:Pavlogiannis:popl20} can be seen as a further improvement of the Vindicator idea.
Like Vindicator, multiple phases are required. M2 requires two phases.
M2 has $O(n^4)$ run-time (where $n$ is the size of the trace).
M2 is sound and complete for two threads.

Vindicator and M2 rely on a trace reordering phase.
Because this phase obeys the laws of being a well-formed, cross-thread critical sections have no impact.

\section{Conclusion}
\label{sec:Conclusion}

Starting from our observation ``in the wild'' that not all critical sections are
confined to a single thread, we developed a theoretical framework of
critical sections that may extend across more than one thread and the
accompanying lock set constructions. Using the framework, we showed
how the lock set construction directly impacts false positives and
false negatives. Subsequently, we use the framework to analyse an
approximate lock set construction, which is based on the partial order
PWR.

From this basis, we considered the state-of-the-art, efficient partial
order-based methods for data race detection and studied the impact of
cross-thread critical sections on them.
WCP and SDP are agnostic about cross-thread critical sections.
For WCP we can show that the WCP soundness proof extends to the case of cross-thread critical sections.
For SDP the details have yet to be worked out.
We diagnosed serious soundness
problems for {\DCTaskPar}, which we leave to fix in future
work. The complete relations WDP and PWR were fixable and we studied
the efficiency and the effectiveness of an implementation that takes
cross-thread critical sections into account. This implementation is
based on PWR.

To reiterate again: if any of the related works in
section~\ref{sec:related} makes use of lock sets, they employ the
standard notion, where the lock set of an event only contains locks acquired in
the same thread as the event. The technical results in the references we
studied in detail depend in various degrees on this assumption on the
events in a lock set. It seems wise to scrutinize other results
with cross-thread lock sets in mind, too.
We also expect some impact on algorithms for deadlock detection, but
leave that investigation to future work.

%%% Local Variables:
%%% mode: latex
%%% TeX-master: "main"
%%% End:

\begin{acks}

We thank some POPL'24 referees for their comments on a previous version of this paper.

\end{acks}

%% Bibliography

\bibliography{../bib/main}

%%% -*-BibTeX-*-
%%% Do NOT edit. File created by BibTeX with style
%%% ACM-Reference-Format-Journals [18-Jan-2012].

\begin{thebibliography}{37}

%%% ====================================================================
%%% NOTE TO THE USER: you can override these defaults by providing
%%% customized versions of any of these macros before the \bibliography
%%% command.  Each of them MUST provide its own final punctuation,
%%% except for \shownote{}, \showDOI{}, and \showURL{}.  The latter two
%%% do not use final punctuation, in order to avoid confusing it with
%%% the Web address.
%%%
%%% To suppress output of a particular field, define its macro to expand
%%% to an empty string, or better, \unskip, like this:
%%%
%%% \newcommand{\showDOI}[1]{\unskip}   % LaTeX syntax
%%%
%%% \def \showDOI #1{\unskip}           % plain TeX syntax
%%%
%%% ====================================================================

\ifx \showCODEN    \undefined \def \showCODEN     #1{\unskip}     \fi
\ifx \showDOI      \undefined \def \showDOI       #1{#1}\fi
\ifx \showISBNx    \undefined \def \showISBNx     #1{\unskip}     \fi
\ifx \showISBNxiii \undefined \def \showISBNxiii  #1{\unskip}     \fi
\ifx \showISSN     \undefined \def \showISSN      #1{\unskip}     \fi
\ifx \showLCCN     \undefined \def \showLCCN      #1{\unskip}     \fi
\ifx \shownote     \undefined \def \shownote      #1{#1}          \fi
\ifx \showarticletitle \undefined \def \showarticletitle #1{#1}   \fi
\ifx \showURL      \undefined \def \showURL       {\relax}        \fi
% The following commands are used for tagged output and should be
% invisible to TeX
\providecommand\bibfield[2]{#2}
\providecommand\bibinfo[2]{#2}
\providecommand\natexlab[1]{#1}
\providecommand\showeprint[2][]{arXiv:#2}

\bibitem[\protect\citeauthoryear{Adve and Gharachorloo}{Adve and
  Gharachorloo}{1996}]%
        {Adve:1996:SMC:619013.620590}
\bibfield{author}{\bibinfo{person}{Sarita~V. Adve} {and}
  \bibinfo{person}{Kourosh Gharachorloo}.} \bibinfo{year}{1996}\natexlab{}.
\newblock \showarticletitle{Shared Memory Consistency Models: A Tutorial}.
\newblock \bibinfo{journal}{\emph{Computer}} \bibinfo{volume}{29},
  \bibinfo{number}{12} (\bibinfo{date}{Dec.} \bibinfo{year}{1996}),
  \bibinfo{pages}{66--76}.
\newblock
\showISSN{0018-9162}
\urldef\tempurl%
\url{https://doi.org/10.1109/2.546611}
\showDOI{\tempurl}


\bibitem[\protect\citeauthoryear{Blackburn, Garner, Hoffmann, Khang, McKinley,
  Bentzur, Diwan, Feinberg, Frampton, Guyer, Hirzel, Hosking, Jump, Lee, Moss,
  Phansalkar, Stefanovi\'{c}, VanDrunen, von Dincklage, and
  Wiedermann}{Blackburn et~al\mbox{.}}{2006}]%
        {Blackburn:2006:DBJ:1167473.1167488}
\bibfield{author}{\bibinfo{person}{Stephen~M. Blackburn},
  \bibinfo{person}{Robin Garner}, \bibinfo{person}{Chris Hoffmann},
  \bibinfo{person}{Asjad~M. Khang}, \bibinfo{person}{Kathryn~S. McKinley},
  \bibinfo{person}{Rotem Bentzur}, \bibinfo{person}{Amer Diwan},
  \bibinfo{person}{Daniel Feinberg}, \bibinfo{person}{Daniel Frampton},
  \bibinfo{person}{Samuel~Z. Guyer}, \bibinfo{person}{Martin Hirzel},
  \bibinfo{person}{Antony Hosking}, \bibinfo{person}{Maria Jump},
  \bibinfo{person}{Han Lee}, \bibinfo{person}{J.~Eliot~B. Moss},
  \bibinfo{person}{Aashish Phansalkar}, \bibinfo{person}{Darko Stefanovi\'{c}},
  \bibinfo{person}{Thomas VanDrunen}, \bibinfo{person}{Daniel von Dincklage},
  {and} \bibinfo{person}{Ben Wiedermann}.} \bibinfo{year}{2006}\natexlab{}.
\newblock \showarticletitle{The {DaCapo} Benchmarks: {Java} Benchmarking
  Development and Analysis}. In \bibinfo{booktitle}{\emph{Proc.\ of OOPSLA
  '06}}. \bibinfo{publisher}{ACM}, \bibinfo{address}{Portland, OR, USA},
  \bibinfo{pages}{169--190}.
\newblock
\urldef\tempurl%
\url{https://doi.org/10.1145/1167515.1167488}
\showDOI{\tempurl}


\bibitem[\protect\citeauthoryear{Chen and Ro\c{s}u}{Chen and Ro\c{s}u}{2007}]%
        {Chen:2007:PSC:1770351.1770387}
\bibfield{author}{\bibinfo{person}{Feng Chen} {and} \bibinfo{person}{Grigore
  Ro\c{s}u}.} \bibinfo{year}{2007}\natexlab{}.
\newblock \showarticletitle{Parametric and Sliced Causality}. In
  \bibinfo{booktitle}{\emph{Proc.\ of CAV'07}}. \bibinfo{publisher}{Springer},
  \bibinfo{address}{Berlin, Germany}, \bibinfo{pages}{240--253}.
\newblock
\showISBNx{978-3-540-73367-6}
\urldef\tempurl%
\url{http://dl.acm.org/citation.cfm?id=1770351.1770387}
\showURL{%
\tempurl}


\bibitem[\protect\citeauthoryear{Dijkstra}{Dijkstra}{1965}]%
        {DBLP:journals/cacm/Dijkstra65}
\bibfield{author}{\bibinfo{person}{Edsger~W. Dijkstra}.}
  \bibinfo{year}{1965}\natexlab{}.
\newblock \showarticletitle{Solution of a problem in concurrent programming
  control}.
\newblock \bibinfo{journal}{\emph{Commun. {ACM}}} \bibinfo{volume}{8},
  \bibinfo{number}{9} (\bibinfo{year}{1965}), \bibinfo{pages}{569}.
\newblock
\urldef\tempurl%
\url{https://doi.org/10.1145/365559.365617}
\showDOI{\tempurl}


\bibitem[\protect\citeauthoryear{Dinning and Schonberg}{Dinning and
  Schonberg}{1991}]%
        {Dinning:1991:DAA:127695:122767}
\bibfield{author}{\bibinfo{person}{Anne Dinning} {and} \bibinfo{person}{Edith
  Schonberg}.} \bibinfo{year}{1991}\natexlab{}.
\newblock \showarticletitle{Detecting Access Anomalies in Programs with
  Critical Sections}.
\newblock \bibinfo{journal}{\emph{SIGPLAN Not.}} \bibinfo{volume}{26},
  \bibinfo{number}{12} (\bibinfo{date}{Dec.} \bibinfo{year}{1991}),
  \bibinfo{pages}{85--96}.
\newblock
\showISSN{0362-1340}
\urldef\tempurl%
\url{https://doi.org/10.1145/127695.122767}
\showDOI{\tempurl}


\bibitem[\protect\citeauthoryear{Elmas, Qadeer, and Tasiran}{Elmas
  et~al\mbox{.}}{2006}]%
        {10.1007/11940197_13}
\bibfield{author}{\bibinfo{person}{Tayfun Elmas}, \bibinfo{person}{Shaz
  Qadeer}, {and} \bibinfo{person}{Serdar Tasiran}.}
  \bibinfo{year}{2006}\natexlab{}.
\newblock \showarticletitle{Goldilocks: Efficiently Computing the
  Happens-before Relation Using Locksets}. In
  \bibinfo{booktitle}{\emph{Proceedings of the First Combined International
  Conference on Formal Approaches to Software Testing and Runtime
  Verification}} (Seattle, WA) \emph{(\bibinfo{series}{FATES'06/RV'06})}.
  \bibinfo{publisher}{Springer-Verlag}, \bibinfo{address}{Berlin, Heidelberg},
  \bibinfo{pages}{193–208}.
\newblock
\showISBNx{3540496998}
\urldef\tempurl%
\url{https://doi.org/10.1007/11940197_13}
\showDOI{\tempurl}


\bibitem[\protect\citeauthoryear{Elmas, Qadeer, and Tasiran}{Elmas
  et~al\mbox{.}}{2010}]%
        {elmas2010goldilocks}
\bibfield{author}{\bibinfo{person}{Tayfun Elmas}, \bibinfo{person}{Shaz
  Qadeer}, {and} \bibinfo{person}{Serdar Tasiran}.}
  \bibinfo{year}{2010}\natexlab{}.
\newblock \showarticletitle{Goldilocks: a race-aware {Java} runtime}.
\newblock \bibinfo{journal}{\emph{Commun. ACM}} \bibinfo{volume}{53},
  \bibinfo{number}{11} (\bibinfo{year}{2010}), \bibinfo{pages}{85--92}.
\newblock


\bibitem[\protect\citeauthoryear{Farchi, Nir, and Ur}{Farchi
  et~al\mbox{.}}{2003}]%
        {farchi2003concurrent}
\bibfield{author}{\bibinfo{person}{Eitan Farchi}, \bibinfo{person}{Yarden Nir},
  {and} \bibinfo{person}{Shmuel Ur}.} \bibinfo{year}{2003}\natexlab{}.
\newblock \showarticletitle{Concurrent Bug Patterns and How to Test Them}. In
  \bibinfo{booktitle}{\emph{Proceedings International Parallel and Distributed
  Processing Symposium}}. \bibinfo{publisher}{IEEE}, \bibinfo{pages}{7--pp}.
\newblock
\urldef\tempurl%
\url{https://doi.org/10.1109/IPDPS.2003.1213511}
\showDOI{\tempurl}


\bibitem[\protect\citeauthoryear{Flanagan and Freund}{Flanagan and
  Freund}{2010}]%
        {flanagan2010fasttrack}
\bibfield{author}{\bibinfo{person}{Cormac Flanagan} {and}
  \bibinfo{person}{Stephen~N. Freund}.} \bibinfo{year}{2010}\natexlab{}.
\newblock \showarticletitle{{FastTrack}: Efficient and Precise Dynamic Race
  Detection}.
\newblock \bibinfo{journal}{\emph{Commun. ACM}} \bibinfo{volume}{53},
  \bibinfo{number}{11} (\bibinfo{year}{2010}), \bibinfo{pages}{93--101}.
\newblock
\urldef\tempurl%
\url{https://doi.org/10.1145/1543135.1542490}
\showDOI{\tempurl}


\bibitem[\protect\citeauthoryear{Gen\c{c}, Roemer, Xu, and Bond}{Gen\c{c}
  et~al\mbox{.}}{2019}]%
        {10.1145/3360605}
\bibfield{author}{\bibinfo{person}{Kaan Gen\c{c}}, \bibinfo{person}{Jake
  Roemer}, \bibinfo{person}{Yufan Xu}, {and} \bibinfo{person}{Michael~D.
  Bond}.} \bibinfo{year}{2019}\natexlab{}.
\newblock \showarticletitle{Dependence-Aware, Unbounded Sound Predictive Race
  Detection}.
\newblock \bibinfo{journal}{\emph{Proc. ACM Program. Lang.}}
  \bibinfo{volume}{3}, \bibinfo{number}{OOPSLA}, Article
  \bibinfo{articleno}{179} (\bibinfo{date}{Oct.} \bibinfo{year}{2019}),
  \bibinfo{numpages}{30}~pages.
\newblock
\urldef\tempurl%
\url{https://doi.org/10.1145/3360605}
\showDOI{\tempurl}


\bibitem[\protect\citeauthoryear{Huang, Meredith, and Ro\c{s}u}{Huang
  et~al\mbox{.}}{2014}]%
        {Huang:2014:MSP:2666356.2594315}
\bibfield{author}{\bibinfo{person}{Jeff Huang}, \bibinfo{person}{Patrick~O'Neil
  Meredith}, {and} \bibinfo{person}{Grigore Ro\c{s}u}.}
  \bibinfo{year}{2014}\natexlab{}.
\newblock \showarticletitle{Maximal Sound Predictive Race Detection With
  Control Flow Abstraction}. In \bibinfo{booktitle}{\emph{{PLDI} '14}},
  \bibfield{editor}{\bibinfo{person}{Michael F.~P. O'Boyle} {and}
  \bibinfo{person}{Keshav Pingali}} (Eds.). \bibinfo{publisher}{{ACM}},
  \bibinfo{address}{Edinburgh, United Kingdom}, \bibinfo{pages}{337--348}.
\newblock
\urldef\tempurl%
\url{https://doi.org/10.1145/2594291.2594315}
\showDOI{\tempurl}


\bibitem[\protect\citeauthoryear{Kini, Mathur, and Viswanathan}{Kini
  et~al\mbox{.}}{2017a}]%
        {Kini:2017:DRP:3140587.3062374}
\bibfield{author}{\bibinfo{person}{Dileep Kini}, \bibinfo{person}{Umang
  Mathur}, {and} \bibinfo{person}{Mahesh Viswanathan}.}
  \bibinfo{year}{2017}\natexlab{a}.
\newblock \showarticletitle{Dynamic Race Prediction in Linear Time}.
\newblock \bibinfo{journal}{\emph{SIGPLAN Not.}} \bibinfo{volume}{52},
  \bibinfo{number}{6} (\bibinfo{date}{June} \bibinfo{year}{2017}),
  \bibinfo{pages}{157--170}.
\newblock
\urldef\tempurl%
\url{https://doi.org/10.1145/3062341.3062374}
\showDOI{\tempurl}


\bibitem[\protect\citeauthoryear{Kini, Mathur, and Viswanathan}{Kini
  et~al\mbox{.}}{2017b}]%
        {DBLP:journals/corr/KiniM017}
\bibfield{author}{\bibinfo{person}{Dileep Kini}, \bibinfo{person}{Umang
  Mathur}, {and} \bibinfo{person}{Mahesh Viswanathan}.}
  \bibinfo{year}{2017}\natexlab{b}.
\newblock \showarticletitle{Dynamic Race Prediction in Linear Time}.
\newblock \bibinfo{journal}{\emph{CoRR}}  \bibinfo{volume}{abs/1704.02432}
  (\bibinfo{year}{2017}), 22.
\newblock
\showeprint[arXiv]{1704.02432}
\urldef\tempurl%
\url{http://arxiv.org/abs/1704.02432}
\showURL{%
\tempurl}


\bibitem[\protect\citeauthoryear{Lamport}{Lamport}{1978}]%
        {lamport1978time}
\bibfield{author}{\bibinfo{person}{Leslie Lamport}.}
  \bibinfo{year}{1978}\natexlab{}.
\newblock \showarticletitle{Time, Clocks, and the Ordering of Events in a
  Distributed System}.
\newblock \bibinfo{journal}{\emph{Commun. ACM}} \bibinfo{volume}{21},
  \bibinfo{number}{7} (\bibinfo{year}{1978}), \bibinfo{pages}{558--565}.
\newblock
\urldef\tempurl%
\url{https://doi.org/10.1145/359545.359563}
\showDOI{\tempurl}


\bibitem[\protect\citeauthoryear{Luo, Huang, and Ro\c{s}u}{Luo
  et~al\mbox{.}}{2015}]%
        {LuoHuangRosuSystematic15}
\bibfield{author}{\bibinfo{person}{Qingzhou Luo}, \bibinfo{person}{Jeff Huang},
  {and} \bibinfo{person}{Grigore Ro\c{s}u}.} \bibinfo{year}{2015}\natexlab{}.
\newblock \bibinfo{booktitle}{\emph{Systematic Concurrency Testing with Maximal
  Causality}}.
\newblock \bibinfo{type}{{T}echnical {R}eport}.
\newblock


\bibitem[\protect\citeauthoryear{Mathur, Kini, and Viswanathan}{Mathur
  et~al\mbox{.}}{2018}]%
        {Mathur:2018:HFR:3288538:3276515}
\bibfield{author}{\bibinfo{person}{Umang Mathur}, \bibinfo{person}{Dileep
  Kini}, {and} \bibinfo{person}{Mahesh Viswanathan}.}
  \bibinfo{year}{2018}\natexlab{}.
\newblock \showarticletitle{What Happens-after the First Race? Enhancing the
  Predictive Power of Happens-before Based Dynamic Race Detection}.
\newblock \bibinfo{journal}{\emph{Proc. ACM Program. Lang.}}
  \bibinfo{volume}{2}, \bibinfo{number}{OOPSLA}, Article
  \bibinfo{articleno}{145} (\bibinfo{date}{Oct.} \bibinfo{year}{2018}),
  \bibinfo{numpages}{29}~pages.
\newblock
\showISSN{2475-1421}
\urldef\tempurl%
\url{https://doi.org/10.1145/3276515}
\showDOI{\tempurl}


\bibitem[\protect\citeauthoryear{Mathur, Pavlogiannis, Tun\c{c}, and
  Viswanathan}{Mathur et~al\mbox{.}}{[n.d.]}]%
        {tracelogs}
\bibfield{author}{\bibinfo{person}{Umang Mathur}, \bibinfo{person}{Andreas
  Pavlogiannis}, \bibinfo{person}{H\"{u}nkar~Can Tun\c{c}}, {and}
  \bibinfo{person}{Mahesh Viswanathan}.} \bibinfo{year}{[n.d.]}\natexlab{}.
\newblock \bibinfo{title}{Trace Logs}.
\newblock
\newblock
\urldef\tempurl%
\url{https://uillinoisedu-my.sharepoint.com/:f:/g/personal/umathur3_illinois_edu/EskC1fg2xhNHnim2ZYjDD9gBJqme8hBTgWShHUmOfYmF-Q?e=QE762I}
\showURL{%
\tempurl}


\bibitem[\protect\citeauthoryear{Mathur, Pavlogiannis, and Viswanathan}{Mathur
  et~al\mbox{.}}{2021}]%
        {10.1145/3434317}
\bibfield{author}{\bibinfo{person}{Umang Mathur}, \bibinfo{person}{Andreas
  Pavlogiannis}, {and} \bibinfo{person}{Mahesh Viswanathan}.}
  \bibinfo{year}{2021}\natexlab{}.
\newblock \showarticletitle{Optimal Prediction of Synchronization-Preserving
  Races}.
\newblock \bibinfo{journal}{\emph{Proc. ACM Program. Lang.}}
  \bibinfo{volume}{5}, \bibinfo{number}{POPL}, Article \bibinfo{articleno}{36}
  (\bibinfo{date}{jan} \bibinfo{year}{2021}), \bibinfo{numpages}{29}~pages.
\newblock
\urldef\tempurl%
\url{https://doi.org/10.1145/3434317}
\showDOI{\tempurl}


\bibitem[\protect\citeauthoryear{Mathur and Tun\c{c}}{Mathur and
  Tun\c{c}}{2020}]%
        {rapid-mathur}
\bibfield{author}{\bibinfo{person}{Umang Mathur} {and}
  \bibinfo{person}{H\"{u}nkar~Can Tun\c{c}}.} \bibinfo{year}{2020}\natexlab{}.
\newblock \bibinfo{title}{{RAPID} : Dynamic Analysis for Concurrent Programs}.
\newblock \bibinfo{howpublished}{https://github.com/umangm/rapid}.
\newblock


\bibitem[\protect\citeauthoryear{O'Callahan and Choi}{O'Callahan and
  Choi}{2003}]%
        {10.1145/781498.781528}
\bibfield{author}{\bibinfo{person}{Robert O'Callahan} {and}
  \bibinfo{person}{Jong-Deok Choi}.} \bibinfo{year}{2003}\natexlab{}.
\newblock \showarticletitle{Hybrid Dynamic Data Race Detection}. In
  \bibinfo{booktitle}{\emph{Proceedings of the Ninth ACM SIGPLAN Symposium on
  Principles and Practice of Parallel Programming}}
  \emph{(\bibinfo{series}{PPoPP '03})}. \bibinfo{publisher}{Association for
  Computing Machinery}, \bibinfo{address}{San Diego, California, USA},
  \bibinfo{pages}{167–178}.
\newblock
\showISBNx{1581135882}
\urldef\tempurl%
\url{https://doi.org/10.1145/781498.781528}
\showDOI{\tempurl}


\bibitem[\protect\citeauthoryear{Ogles, Aldous, and Mercer}{Ogles
  et~al\mbox{.}}{2019}]%
        {8894270}
\bibfield{author}{\bibinfo{person}{Benjamin Ogles}, \bibinfo{person}{Peter
  Aldous}, {and} \bibinfo{person}{Eric Mercer}.}
  \bibinfo{year}{2019}\natexlab{}.
\newblock \showarticletitle{Proving Data Race Freedom in Task Parallel Programs
  Using a Weaker Partial Order}. In \bibinfo{booktitle}{\emph{2019 Formal
  Methods in Computer Aided Design, {FMCAD} 2019, San Jose, CA, USA, October
  22-25, 2019}}, \bibfield{editor}{\bibinfo{person}{Clark~W. Barrett} {and}
  \bibinfo{person}{Jin Yang}} (Eds.). \bibinfo{publisher}{{IEEE}},
  \bibinfo{address}{San Jose, CA, USA}, \bibinfo{pages}{55--63}.
\newblock
\urldef\tempurl%
\url{https://doi.org/10.23919/FMCAD.2019.8894270}
\showDOI{\tempurl}


\bibitem[\protect\citeauthoryear{Pavlogiannis}{Pavlogiannis}{2020}]%
        {Andreas:Pavlogiannis:popl20}
\bibfield{author}{\bibinfo{person}{Andreas Pavlogiannis}.}
  \bibinfo{year}{2020}\natexlab{}.
\newblock \showarticletitle{Fast, sound, and effectively complete dynamic race
  prediction}.
\newblock \bibinfo{journal}{\emph{Proc. {ACM} Program. Lang.}}
  \bibinfo{volume}{4}, \bibinfo{number}{{POPL}} (\bibinfo{year}{2020}),
  \bibinfo{pages}{17:1--17:29}.
\newblock
\urldef\tempurl%
\url{https://doi.org/10.1145/3371085}
\showDOI{\tempurl}


\bibitem[\protect\citeauthoryear{Roemer, Gen{\c{c}}, and Bond}{Roemer
  et~al\mbox{.}}{2018}]%
        {Roemer:2018:HUS:3296979.3192385}
\bibfield{author}{\bibinfo{person}{Jake Roemer}, \bibinfo{person}{Kaan
  Gen{\c{c}}}, {and} \bibinfo{person}{Michael~D. Bond}.}
  \bibinfo{year}{2018}\natexlab{}.
\newblock \showarticletitle{High-coverage, unbounded sound predictive race
  detection}. In \bibinfo{booktitle}{\emph{Proceedings of the 39th {ACM}
  {SIGPLAN} Conference on Programming Language Design and Implementation,
  {PLDI} 2018, Philadelphia, PA, USA, June 18-22, 2018}},
  \bibfield{editor}{\bibinfo{person}{Jeffrey~S. Foster} {and}
  \bibinfo{person}{Dan Grossman}} (Eds.). \bibinfo{publisher}{{ACM}},
  \bibinfo{pages}{374--389}.
\newblock
\urldef\tempurl%
\url{https://doi.org/10.1145/3192366.3192385}
\showDOI{\tempurl}


\bibitem[\protect\citeauthoryear{Roemer, Gen\c{c}, and Bond}{Roemer
  et~al\mbox{.}}{2020}]%
        {10.1145/3385412.3385993}
\bibfield{author}{\bibinfo{person}{Jake Roemer}, \bibinfo{person}{Kaan
  Gen\c{c}}, {and} \bibinfo{person}{Michael~D. Bond}.}
  \bibinfo{year}{2020}\natexlab{}.
\newblock \showarticletitle{SmartTrack: Efficient Predictive Race Detection}.
  In \bibinfo{booktitle}{\emph{Proceedings of the 41st ACM SIGPLAN Conference
  on Programming Language Design and Implementation}} (London, UK)
  \emph{(\bibinfo{series}{PLDI 2020})}. \bibinfo{publisher}{Association for
  Computing Machinery}, \bibinfo{address}{New York, NY, USA},
  \bibinfo{pages}{747–762}.
\newblock
\showISBNx{9781450376136}
\urldef\tempurl%
\url{https://doi.org/10.1145/3385412.3385993}
\showDOI{\tempurl}


\bibitem[\protect\citeauthoryear{Said, Wang, Yang, and Sakallah}{Said
  et~al\mbox{.}}{2011}]%
        {Said:2011:GDR:1986308.1986334}
\bibfield{author}{\bibinfo{person}{Mahmoud Said}, \bibinfo{person}{Chao Wang},
  \bibinfo{person}{Zijiang Yang}, {and} \bibinfo{person}{Karem Sakallah}.}
  \bibinfo{year}{2011}\natexlab{}.
\newblock \showarticletitle{Generating Data Race Witnesses by an SMT-based
  Analysis}. In \bibinfo{booktitle}{\emph{Proc.\ of NFM'11}}.
  \bibinfo{publisher}{Springer}, \bibinfo{pages}{313--327}.
\newblock


\bibitem[\protect\citeauthoryear{Savage, Burrows, Nelson, Sobalvarro, and
  Anderson}{Savage et~al\mbox{.}}{1997}]%
        {Savage:1997:EDD:265924.265927}
\bibfield{author}{\bibinfo{person}{Stefan Savage}, \bibinfo{person}{Michael
  Burrows}, \bibinfo{person}{Greg Nelson}, \bibinfo{person}{Patrick
  Sobalvarro}, {and} \bibinfo{person}{Thomas Anderson}.}
  \bibinfo{year}{1997}\natexlab{}.
\newblock \showarticletitle{Eraser: A Dynamic Data Race Detector for
  Multithreaded Programs}.
\newblock \bibinfo{journal}{\emph{ACM Trans. Comput. Syst.}}
  \bibinfo{volume}{15}, \bibinfo{number}{4} (\bibinfo{date}{Nov.}
  \bibinfo{year}{1997}), \bibinfo{pages}{391--411}.
\newblock
\showISSN{0734-2071}
\urldef\tempurl%
\url{https://doi.org/10.1145/265924.265927}
\showDOI{\tempurl}


\bibitem[\protect\citeauthoryear{Sen, Ro{\c{s}}u, and Agha}{Sen
  et~al\mbox{.}}{2005}]%
        {sen2005detecting}
\bibfield{author}{\bibinfo{person}{Koushik Sen}, \bibinfo{person}{Grigore
  Ro{\c{s}}u}, {and} \bibinfo{person}{Gul Agha}.}
  \bibinfo{year}{2005}\natexlab{}.
\newblock \showarticletitle{Detecting Errors in Multithreaded Programs by
  Generalized Predictive Analysis of Executions}. In
  \bibinfo{booktitle}{\emph{International Conference on Formal Methods for Open
  Object-Based Distributed Systems}}. \bibinfo{publisher}{Springer},
  \bibinfo{pages}{211--226}.
\newblock


\bibitem[\protect\citeauthoryear{Serbanuta, Chen, and Ro\c{s}u}{Serbanuta
  et~al\mbox{.}}{2012}]%
        {DBLP:conf/rv/SerbanutaCR12}
\bibfield{author}{\bibinfo{person}{Traian{-}Florin Serbanuta},
  \bibinfo{person}{Feng Chen}, {and} \bibinfo{person}{Grigore Ro\c{s}u}.}
  \bibinfo{year}{2012}\natexlab{}.
\newblock \showarticletitle{Maximal Causal Models for Sequentially Consistent
  Systems}. In \bibinfo{booktitle}{\emph{Poc.\ of RV'12}}
  \emph{(\bibinfo{series}{LNCS})}, Vol.~\bibinfo{volume}{7687}.
  \bibinfo{publisher}{Springer}, \bibinfo{pages}{136--150}.
\newblock
\urldef\tempurl%
\url{https://doi.org/10.1007/978-3-642-35632-2_16}
\showDOI{\tempurl}


\bibitem[\protect\citeauthoryear{Serebryany and Iskhodzhanov}{Serebryany and
  Iskhodzhanov}{2009}]%
        {serebryany2009threadsanitizer}
\bibfield{author}{\bibinfo{person}{Konstantin Serebryany} {and}
  \bibinfo{person}{Timur Iskhodzhanov}.} \bibinfo{year}{2009}\natexlab{}.
\newblock \showarticletitle{{ThreadSanitizer}: Data Race Detection in
  Practice}. In \bibinfo{booktitle}{\emph{Proc.\ of WBIA ’09}}.
  \bibinfo{publisher}{ACM}, \bibinfo{address}{New York, NY, USA},
  \bibinfo{pages}{62--71}.
\newblock
\urldef\tempurl%
\url{https://doi.org/10.1145/1791194.1791203}
\showDOI{\tempurl}


\bibitem[\protect\citeauthoryear{Smaragdakis, Evans, Sadowski, Yi, and
  Flanagan}{Smaragdakis et~al\mbox{.}}{2012}]%
        {Smaragdakis:2012:SPR:2103621.2103702}
\bibfield{author}{\bibinfo{person}{Yannis Smaragdakis}, \bibinfo{person}{Jacob
  Evans}, \bibinfo{person}{Caitlin Sadowski}, \bibinfo{person}{Jaeheon Yi},
  {and} \bibinfo{person}{Cormac Flanagan}.} \bibinfo{year}{2012}\natexlab{}.
\newblock \showarticletitle{Sound Predictive Race Detection in Polynomial
  Time}.
\newblock \bibinfo{journal}{\emph{SIGPLAN Not.}} \bibinfo{volume}{47},
  \bibinfo{number}{1} (\bibinfo{date}{Jan.} \bibinfo{year}{2012}),
  \bibinfo{pages}{387--400}.
\newblock
\showISSN{0362-1340}
\urldef\tempurl%
\url{https://doi.org/10.1145/2103656.2103702}
\showDOI{\tempurl}


\bibitem[\protect\citeauthoryear{Smith, Bull, and Obdrizalek}{Smith
  et~al\mbox{.}}{2001}]%
        {smith2001parallel}
\bibfield{author}{\bibinfo{person}{Lorna~A Smith}, \bibinfo{person}{J~Mark
  Bull}, {and} \bibinfo{person}{J Obdrizalek}.}
  \bibinfo{year}{2001}\natexlab{}.
\newblock \showarticletitle{A Parallel {Java} Grande Benchmark Suite}. In
  \bibinfo{booktitle}{\emph{Proc.\ of SC'01}}. \bibinfo{publisher}{IEEE},
  \bibinfo{pages}{8--8}.
\newblock
\urldef\tempurl%
\url{https://doi.org/10.1145/582034.582042}
\showDOI{\tempurl}


\bibitem[\protect\citeauthoryear{Stadtm{\"{u}}ller}{Stadtm{\"{u}}ller}{2021a}]%
        {DBLP:phd/dnb/Stadtmuller21}
\bibfield{author}{\bibinfo{person}{Kai Stadtm{\"{u}}ller}.}
  \bibinfo{year}{2021}\natexlab{a}.
\newblock \emph{\bibinfo{title}{Efficient, near complete and often sound data
  race prediction}}.
\newblock \bibinfo{thesistype}{Ph.D. Dissertation}. \bibinfo{school}{University
  of Freiburg, Freiburg im Breisgau, Germany}.
\newblock
\urldef\tempurl%
\url{https://freidok.uni-freiburg.de/data/222723}
\showURL{%
\tempurl}


\bibitem[\protect\citeauthoryear{Stadtm{\"{u}}ller}{Stadtm{\"{u}}ller}{2021b}]%
        {speedy-go}
\bibfield{author}{\bibinfo{person}{Kai Stadtm{\"{u}}ller}.}
  \bibinfo{year}{2021}\natexlab{b}.
\newblock \bibinfo{title}{{SpeedyGo}}.
\newblock \bibinfo{howpublished}{https://github.com/KaiSta/SpeedyGo}.
\newblock


\bibitem[\protect\citeauthoryear{Sulzmann and Stadtm\"{u}ller}{Sulzmann and
  Stadtm\"{u}ller}{2020}]%
        {10.1145/3426182.3426185}
\bibfield{author}{\bibinfo{person}{Martin Sulzmann} {and} \bibinfo{person}{Kai
  Stadtm\"{u}ller}.} \bibinfo{year}{2020}\natexlab{}.
\newblock \showarticletitle{Efficient, near Complete, and Often Sound Hybrid
  Dynamic Data Race Prediction}. In \bibinfo{booktitle}{\emph{Proceedings of
  the 17th International Conference on Managed Programming Languages and
  Runtimes}} \emph{(\bibinfo{series}{MPLR 2020})}.
  \bibinfo{publisher}{Association for Computing Machinery},
  \bibinfo{address}{Virtual, UK}, \bibinfo{pages}{30–51}.
\newblock
\showISBNx{9781450388535}
\urldef\tempurl%
\url{https://doi.org/10.1145/3426182.3426185}
\showDOI{\tempurl}


\bibitem[\protect\citeauthoryear{ThreadSanitizer}{ThreadSanitizer}{2020}]%
        {thread-sanitizer-v2}
ThreadSanitizer \bibinfo{year}{2020}\natexlab{}.
\newblock \bibinfo{title}{ThreadSanitizer}.
\newblock \bibinfo{howpublished}{https://github.com/google/sanitizers}.
\newblock


\bibitem[\protect\citeauthoryear{Xie, Xue, and Zhang}{Xie
  et~al\mbox{.}}{2013}]%
        {xie2013acculock}
\bibfield{author}{\bibinfo{person}{Xinwei Xie}, \bibinfo{person}{Jingling Xue},
  {and} \bibinfo{person}{Jie Zhang}.} \bibinfo{year}{2013}\natexlab{}.
\newblock \showarticletitle{{Acculock}: Accurate and Efficient Detection of
  Data Races}.
\newblock \bibinfo{journal}{\emph{Software: Practice and Experience}}
  \bibinfo{volume}{43}, \bibinfo{number}{5} (\bibinfo{year}{2013}),
  \bibinfo{pages}{543--576}.
\newblock
\urldef\tempurl%
\url{https://doi.org/10.1109/CGO.2011.5764688}
\showDOI{\tempurl}


\bibitem[\protect\citeauthoryear{Yu and Bae}{Yu and Bae}{2016}]%
        {yu2016simplelock+}
\bibfield{author}{\bibinfo{person}{Misun Yu} {and} \bibinfo{person}{Doo-Hwan
  Bae}.} \bibinfo{year}{2016}\natexlab{}.
\newblock \showarticletitle{SimpleLock+: fast and accurate hybrid data race
  detection}.
\newblock \bibinfo{journal}{\emph{Comput. J.}} \bibinfo{volume}{59},
  \bibinfo{number}{6} (\bibinfo{year}{2016}), \bibinfo{pages}{793--809}.
\newblock
\urldef\tempurl%
\url{https://doi.org/10.1109/PDCAT.2013.15}
\showDOI{\tempurl}


\end{thebibliography}

\pagebreak

\appendix

%--------------------------------------------------------
%--------------------------------------------------------
\section{Weak Causal Precedence (WCP)}
\label{sec:wcp}

We repeat the definition of the
weak-causally precedes (WCP) relation for a well-formed trace $T$.
We point a weak point of the WCP soundness property in case of cross-thread critical sections.

\begin{definition}[Thread Order]
  % For a well-formed trace $T$,
  The \emph{thread-order relation} $\ltTO{T}{}{}$
  is defined for $\evtAA, \evtBB \in T$ by
  $\ltTO{T}{\evtAA}{\evtBB}$
  if
  $\thd{\evtAA} = \thd{\evtBB}$
  and $\ltTrace{T}{e}{f}$.
\end{definition}

\begin{definition}[Critical Sections Happens-Before]
  % For a well-formed trace $T$,
  The \emph{critical sections happens-before relation} $\ltHBCS{T}{}{}$
  is the smallest strict partial order on $T$ that satisfies
  the following two rules:
\begin{description}
  \item[\nf{(HB-1)}] $\ltTO{T}{}{} \subseteq \ltHBCS{T}{}{}$.
  \item[\nf{(HB-2)}] $\ltHBCS{T}{r}{a}$ if $r = (t, \unlockE{x}) \in T$ and $a = (s, \lockE{x}) \in T$
  and $\ltTrace Tra$.
\end{description}
\end{definition}
The (critical sections) happens-before relation does not consider fork-join dependencies.
Such dependencies imply WCP relations as stated by the following definition.

\begin{definition}[Weak-Causally Precedes (WCP)~\citep{Kini:2017:DRP:3140587.3062374}]
\label{def:wcp}
  % For a well-formed trace $T$,
  The \emph{weak-causally precedes relation}
  $\ltWCP{T}{}{}$ is the smallest binary relation on events in $T$ that satisfies
  the following rules:

  \begin{description}
  \item[\nf{(WCP-1)}]
    $\ltWCP{T}{r_1}{\evtBB}$ if
      for any two matching acquire-release pairs $(a_1,r_1)$ and $(a_2,r_2)$
      for some lock $x$ where $\ltTrace{T}{a_1}{a_2}$,
     there is an event $\evtAA$ such that
    $\conflict\evtAA\evtBB$,
    $\evtAA \in \StdCSect{T}{\AcqRelPair{a_1}{r_1}}{x}$, and
    $\evtBB \in \StdCSect{T}{\AcqRelPair{a_2}{r_2}}{x}$.

   \item[\nf{(WCP-2)}]
     $\ltWCP{T}{r_1}{r_2}$ if
      for any two matching acquire-release pairs $(a_1,r_1)$ and $(a_2,r_2)$
      for some lock $x$, we have that
      $\ltWCP{T}{a_1}{r_2}$.

    \item[\nf{(WCP-3)}]
      $\evtAA \ltWCPSym{T} \evtCC$ if
       for some event $\evtBB$ we either have that
       $\evtAA \ltWCPSym{T} \evtBB \ltHBCSSym{T} \evtCC$, or
      $\evtAA \ltHBCSSym{T} \evtBB \ltWCPSym{T} \evtCC$,
    then

  \item[\nf{(WCP-4)}]
    $\ltWCP{T}{\evtAA}{\evtBB}$ if
     $\evtAA = (t, \forkE{s})$ and $s = \thd{\evtBB}$.

  \item[\nf{(WCP-5)}]
    $\ltWCP{T}{\evtAA}{\evtBB}$ if
     $\evtBB = (t, \joinE{s})$ and $s = \thd{\evtAA}$.
  \end{description}

  We define $\lteqWCP{T}{}{} = (\ltWCP{T}{}{} \cup \ltTO{T}{}{})$.

Two conflicting events $\conflict\evtAA\evtBB$
are in a \emph{WCP-race} if neither $\lteqWCP{T}{\evtAA}{\evtBB}$, nor $\lteqWCP{T}{\evtBB}{\evtCC}$.
\end{definition}

Our formulation of WCP differs from the one in the literature~\cite{Kini:2017:DRP:3140587.3062374}:
\begin{itemize}
  \item Rule (WCP-2) is formulated differently from rule (b) in definition 3 in~\cite{Kini:2017:DRP:3140587.3062374}.
        Both formulations are equivalent as shown in appendix~\ref{sec:wcp-orig}.
  \item Fork and join events are not covered in~\cite{Kini:2017:DRP:3140587.3062374}.
        They are supported by the WCP implementation~\cite{rapid-mathur}\footnote{Personal communication
          with one of the implementors.} and behave like write-read dependencies protected by a common lock.
        As we use fork/join in examples, we explitely add the corresponding rules (WCP-4) and (WCP-5).
\end{itemize}
Besides these differences, the above faithfully captures WCP.

Before restating the WCP soundness property, we repeat
the definition of a predictable deadlock that is commonly used in the literature.

\begin{definition}
  \label{def:pred-deadlock}
  We say that a trace $T$ exhibits a \emph{predictable deadlock}
  if there exists a correctly reordered prefix $T'$ of $T$,
  distinct threads  $\THD{1},\dots ,\THD{n}$,
  distinct locks $x_1, \dots, x_n$,
  acquire events $\ACQ{1},\dots ,\ACQ{n} \in T$ and
  events $e_1, ..., e_n \in T$
  where $n>1$, $\thd{e_i} = \THD{i}$ and $\ACQ{i} = (\THD{i}, \unlockE{x_i})$ such the following condition hold:

\begin{description}
\item[PD-1:] For each $i\in\{i,...,n\}$ we have that
              $\proj{\THD{i}}{T'} = [...,e_i]$ and
              $\proj{\THD{i}}{T} = \proj{\THD{i}}{T'} \pp\ [\ACQ{i},\dots ]$.

  \item[PD-2:] For each $i\in\{i,...,n\}$ we have that
                $x_i \in \StdLHeld{T}{e_j}$ for some $j \not= i$.
  \end{description}
\end{definition}
The above says that for each thread $\THD{i}$, the trace $T'$ cannot be extended to include
$\ACQ{i}$ because the associated lock $x_i$ is hold by another thread.

The WCP soundness result assumes that traces are well-nested.

\begin{definition}%[Well-Nested Trace]
  A well-formed trace $T$ is \emph{well-nested}
  if for any two matching acquire-release pairs $(a_1,r_1)$ and $(a_2,r_2)$
  and lock $x$ we do not have that $r_2 \in \StdCSect{T}{\AcqRelPair{a_1}{r_1}}{x}$.
\end{definition}
The definition states that if a thread acquires lock $y$ followed by acquiring lock $x$,
the matching release of $y$ cannot appear before the matching release of $x$.

\begin{theorem}[Soundness of WCP, theorem 1 in~\citep{Kini:2017:DRP:3140587.3062374}]
WCP is weakly sound, i.e.,
given any well-formed and well-nested trace $T$, if $T$ exhibits a WCP-race then $T$ exhibits a
predictable race or a predictable deadlock.
\end{theorem}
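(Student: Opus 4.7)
The plan is to adapt the original soundness argument of \citet{Kini:2017:DRP:3140587.3062374} with attention to where cross-thread critical sections enter. I would start by focusing on the \emph{first} WCP-race in $T$: conflicting events $(e,f)$, say with $\ltTrace{T}{e}{f}$, such that no WCP-race has its later event strictly before $f$ in trace order. This normalization guarantees that every pair of conflicting events occurring strictly earlier is related by $\lteqWCP{T}{}{}$, which is what makes rule (WCP-1) useful as an inductive hypothesis when reasoning about witnesses inside already scheduled critical sections.

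Next, I would construct a candidate reordered prefix $T'$ greedily. Let $S = \{g \in T \mid \lteqWCP{T}{g}{e} \text{ or } \lteqWCP{T}{g}{f}\} \setminus \{e, f\}$, and schedule events of $S$ one thread at a time, respecting $\ltTO{T}{}{}$ and a lock state. The scheduling invariant to maintain is that (i) whenever a read is next up, its last write in $T$ has already been scheduled, propagated by (WCP-1)+(WCP-3); (ii) fork and join events are respected by (WCP-4)--(WCP-5); and (iii) a thread can only become stuck on an acquire whose lock is currently held in $T'$. If no thread ever blocks, $S$ is exhausted and $e, f$ can be appended adjacently, giving a predictable race; well-nestedness of $T$ is used to argue that inner matching releases always fire in the constructed prefix before outer ones.

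The main obstacle is the extraction of a predictable deadlock (Definition~\ref{def:pred-deadlock}) from a blocked configuration. In the vanilla argument, each blocked thread $\THD{i}$ is held up by a lock $x_i$ owned by another blocked thread $\THD{j}$, cleanly matching \textbf{PD-1}--\textbf{PD-2}. In the presence of cross-thread critical sections, the lock blocking $\THD{i}$ could instead be held by a thread $\THD{k}$ that has already finished scheduling, as in Figure~\ref{fig:exWCP_9d}, where $x$ is owned by $t_1$ while the deadlocked threads are $t_2, t_3$. I would show that this configuration cannot survive at the \emph{first} WCP-race: whenever such a cross-thread section brackets already-scheduled events, rule (WCP-2) orders the inner acquire WCP-before the outer release, and (WCP-3) then composes this with happens-before to put some event of the deadlock set WCP-before an event that was supposed to remain outside $S$. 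That contradicts closure of $S$ under $\lteqWCP{T}{}{}$, so either the "outside holder" is actually in $S$ (and hence blocked, recovering the standard deadlock) or $e$ and $f$ were already WCP-ordered, contradicting the race hypothesis.

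The residual routine work is to verify \textbf{WF}, \textbf{PO}, and \textbf{LW} of Definition~\ref{def:correct-trace-reordering} for $T'$: program order follows from thread-local scheduling, well-formedness from the lock-state invariant, and last-write from invariant (i) above. I expect the overall structure to follow Kini et al.\ closely, with essentially all new content concentrated in the case analysis sketched in the previous paragraph, since that is precisely the point at which cross-thread critical sections could have broken soundness but are rescued by the combined strength of (WCP-2) and (WCP-3).
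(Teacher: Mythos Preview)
Your overall strategy differs substantially from the route the paper (following \cite{Kini:2017:DRP:3140587.3062374,DBLP:journals/corr/KiniM017}) takes. There is no greedy forward construction of a schedule from a WCP-downward-closed set $S$. Instead, one assumes a WCP-race $(e_1,e_2)$ with $\ltTrace{T}{e_1}{e_2}$, observes that if $e_1 \not\ltHBCSSym{T} e_2$ a predictable race is immediate, and otherwise picks, among \emph{all} correct reorderings, one $T'$ that (a) minimises the distance between $e_1$ and $e_2$ and (b) secondarily minimises the distance from $e_2$ to each acquire enclosing $e_1$. Lemmas A.1 and A.2 then establish that every event strictly between $e_1$ and $e_2$ (resp.\ between such an acquire and $e_1$) in $T'$ is HB-sandwiched and not WCP-related to the endpoints; the contradiction and the deadlock-pattern/chain analysis proceed from there. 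The minimality device is what does the work that you try to obtain from closure of $S$.

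Your greedy construction has a concrete gap at invariant (i). You claim that the last write of a scheduled read is already in $S$ ``propagated by (WCP-1)+(WCP-3)''. But (WCP-1) only fires when \emph{both} conflicting events sit inside critical sections on the same lock; an unprotected write-read pair is not ordered by WCP at all. The ``first race'' normalisation tells you that \emph{earlier} conflicting pairs are $\lteqWCP{T}{}{}$-ordered, but the last write of a read in $S$ need not be earlier than $f$ in trace order, and even when it is, the direction of the WCP edge need not put the write into $S$. Without a replacement argument for \textbf{LW}, the prefix you build is not a correct reordering in the sense of Definition~\ref{def:correct-trace-reordering}, and the whole case split collapses. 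The paper's minimality argument sidesteps this because it quantifies over correct reorderings to begin with.

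On the cross-thread deadlock issue you correctly identify the obstacle illustrated in Figure~\ref{fig:exWCP_9d}, and your intuition that (WCP-2) together with (WCP-3) rescues the situation matches the paper's analysis. Note, however, that the paper itself does not claim a finished argument here: it states that whenever a non-standard stuck configuration arises there appears to be no WCP-race, but explicitly records that ``the exact details of this claim yet need to be worked out''. So your one-paragraph sketch is attempting to close a gap that the paper leaves open; a full proof would need to make precise which acquire/release pairs (WCP-2) is applied to and why the resulting WCP edge reaches $e$ or $f$ rather than some unrelated event.
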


WCP and the prior CP relation are major accomplishments.
Initially, the WCP authors tried to fix the soundness proof for CP~\cite{Smaragdakis:2012:SPR:2103621.2103702}.
This turned out to be unsuccessful which then lead to WCP, a \emph{weaker} CP version.
Multiple years of effort went into CP as well as WCP.
The soundness proof of WCP owes much to the efforts put into CP and its soundness proof.
Common to both is that the WCP and CP relation are agnostic about cross-thread critical sections.

\subsection{WCP Soundness Proof Structure}

The first part of the WCP soundness proof is along the lines of the CP soundness proof.
We can show that the reasoning in the proof can be extended to include cross-thread critical sections.
See appendix~\ref{sec:wcp-sound-cross-thread}.

The second part of the WCP soundness proof goes beyond the CP proof by reasoning about deadlock patterns and chains.
This shall lead to a predictable deadlock.
It is this part where we seen an issue as illustrated by the example in figure~\ref{fig:exWCP_9d}.

We examine both parts and stay close to the naming conventions in the CP/WCP soundness proof.

\mbox{}
\\
\noindent
{\bf Proof outline.}
Given some well-formed trace $T$ that exhibits a WCP-race.
Pick the first WCP-race $(e_1, e_2)$.
To goal is to show that there is either a predictable race or a predictable deadlock.

\mbox{}
\\
\noindent
{\bf There is a predictable race.}
If $e_1$ and $e_2$ are unordered under the standard happens-before relation, then there must be a reordering
where $e_1$ and $e_2$ appear right next to each other. So, there is a predictable race and we are done for this case.

\mbox{}
\\
\noindent
{\bf There is no predictable race.}
Otherwise, $\ltHBNoT{e_1}{e_2}$ where we assume that $e_1$ appears before $e_2$ in the trace.
The goal for this case is to show that no predictable race exists.
The proof for this part relies on picking an appropriate reordering $T'$ of $T$.
By exploiting the assumption that $(e_1,e_2)$ are in a WCP-race but $\ltHBNoT{e_1}{e_2}$
this then leads to a contradiction.
Parts of this reasoning only consider the case of standard critical sections.
In appendix~\ref{sec:wcp-sound-cross-thread} we show that the reasoning can be extended
to include the case of cross-thread critical sections as well.

\mbox{}
\\
\noindent
{\bf There is a predictable deadlock.}
The rest of WCP soundness proof establishes properties about deadlock patterns and chains.
\begin{quote}
  "What remains to be done is to prove that deadlock chains
result in predictable deadlocks. In order to do so we introduce
intermediate structures called deadlock patterns."
\end{quote}

Based on our understanding, the definitions of deadlock patterns/chains
only take into accout the deadlocked threads.
As shown by the example in figure~\ref{fig:exWCP_9d}, this might be insufficient.

Initially, we tried to disprove the WCP soundness property
via some cross-thread critical section example where
there is a WCP-race that is not predictable and there is a ``deadlocking'' situation
that goes beyond the standard notion of a predictable deadlock.
We failed many times.

It seems that whenever there is a
``deadlocking'' situation
that goes beyond the standard notion of a predictable deadlock
there cannot be WCP-race.
For an example, see figure~\ref{fig:exWCP_9d}.
The exact details of this claim, yet need to be worked out.

%--------------------------------------------------------
%--------------------------------------------------------
\section{Comparison to Original WCP Definition}
\label{sec:wcp-orig}

The original formulatin of WCP-2 reads as follows.

  \begin{description}
   \item[\nf{(WCP-2')}]
     $\ltWCP{T}{r_1}{r_2}$ if
      for any two matching acquire-release pairs $(a_1,r_1)$ and $(a_2,r_2)$
      for some lock $x$, we have that
      for some events $\evtAA \in \StdCSect{T}{\AcqRelPair{a_1}{r_1}}{x}$ or $\evtAA = a_1$, and
      $\evtBB \in \StdCSect{T}{\AcqRelPair{a_2}{r_2}}{x}$ or $\evtBB = r_2$ we have that
      $\ltWCP{T}{\evtAA}{\evtBB}$.
  \end{description}

Formulations WCP-2 and WCP-2' are equivalent.
That WCP-2' subsumes WCP-2 follows immediately.
For the other direction.
Consider the case that $\evtAA \in \StdCSect{T}{\AcqRelPair{a_1}{r_1}}{x}$, $\evtBB \in \StdCSect{T}{\AcqRelPair{a_2}{r_2}}{x}$  and $\ltWCP{T}{\evtAA}{\evtBB}$.

\begin{enumerate}
 \item Because we consider standard critical sections, we find that $\ltHB{T}{a_1}{\evtAA}$
and $\ltHB{T}{\evtBB}{r_2}$.

 \item Via left and right composition with WCP, we find that $\ltWCP{T}{a_1}{r_2}$.

 \item Via rule (WCP-2), we can establish $\ltWCP{T}{r_1}{r_2}$ and we are done.
\end{enumerate}

%--------------------------------------------------------
%--------------------------------------------------------
\section{WCP Soundness Revisted}
\label{sec:wcp-sound-cross-thread}

We consider parts of the WCP soundness proof.

The general assumption is as follows:
  Let $T$ be a well-formed trace with a WCP race $(e_1, e_2)$
  where $e_1$ appears before $e_2$ in the trace.
  Let $T'$ be a correct reordering such that
  (a) the distance between $e_1$ and $e_2$ is minimal, and
  (b) the distance from $e_2$ to every acquire that
  encloses $e_1$ is minimal (from innermost to outermost acquires).

We repeat two essential lemmas.

\begin{lemma}[Lemma A.1 in WCP technical report~\cite{DBLP:journals/corr/KiniM017}]
  For all events e such that $e_1 \ltTraceSym{T'} e \ltTraceSym{T'} e_2$ we have that
\begin{enumerate}
\item $e_1 \ltHBSym{T'} e \ltHBSym{T'} e_2$.
\item $e_1 \not \ltWCPSym{T'} e$ and $e \not \ltWCPSym{T'} e_2$.
\end{enumerate}
\end{lemma}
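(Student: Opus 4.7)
The plan is to prove both parts by contradiction, exploiting the minimality properties (a) and (b) of the reordering $T'$.

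For part~1, I would suppose there exists an event $e$ with $e_1 \ltTraceSym{T'} e \ltTraceSym{T'} e_2$ such that $e_1 \not\ltHBSym{T'} e$. Since $e$ is not HB-forced to follow $e_1$, there is no chain of program-order and acquire-release edges in $T'$ compelling $e$ to appear after $e_1$. I would then construct a new reordering $T''$ obtained from $T'$ by relocating $e$ (together with the events between $e_1$ and $e$ that are HB-predecessors of $e$) to positions before $e_1$. Because this relocated block has no HB-obligation to follow $e_1$, the result is a correct reordering that still contains both $e_1$ and $e_2$, and thus still witnesses the WCP race, but with strictly smaller $e_1$--$e_2$ distance. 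This contradicts minimality assumption (a). A symmetric argument, moving $e$ together with its HB-successors to positions after $e_2$, handles the case $e \not\ltHBSym{T'} e_2$.

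The main obstacle in part~1 is to verify that the rearranged sequence $T''$ is indeed a correct reordering in the sense of definition~\ref{def:correct-trace-reordering}. One must check \textbf{WF} (in particular the lock-nesting condition \textbf{Lock-1}, so that moving $e$ does not cause a release to precede its acquire or two threads to hold the same lock simultaneously), \textbf{PO} (ensuring no program-order event of $e$'s thread is left stranded on the wrong side), and \textbf{LW} (ensuring that no read's last write is displaced). The trick is to close the relocated set under HB-predecessors (respectively, HB-successors) so that all three conditions are preserved automatically; the non-HB-ordering between $e_1$ and $e$ (respectively, $e$ and $e_2$) guarantees that this closure does not inadvertently drag $e_1$ (respectively, $e_2$) along.

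For part~2, the argument is short given part~1. Suppose towards a contradiction that $e_1 \ltWCPSym{T'} e$ for some $e$ strictly between $e_1$ and $e_2$ in $T'$. By part~1 we have $e \ltHBSym{T'} e_2$, and composing a WCP-edge with an HB-edge on the right via rule (WCP-3) yields $e_1 \ltWCPSym{T'} e_2$. This contradicts the assumption that $(e_1,e_2)$ is a WCP race. The case $e \ltWCPSym{T'} e_2$ is symmetric, using the left-composition of an HB-edge from $e_1$ to $e$ (from part~1) with the hypothetical WCP-edge from $e$ to $e_2$. The subtle point here, which needs to be justified by a separate invariance argument, is that a WCP-ordering derived in the reordering $T'$ still contradicts the race in $T$: WCP depends only on matching acquire-release pairs, conflicting accesses, and fork/join edges, all of which are preserved across correct reorderings, so $e_1 \ltWCPSym{T'} e_2$ transfers back to $e_1 \ltWCPSym{T} e_2$, contradicting the WCP-race assumption.
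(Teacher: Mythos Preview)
Your plan is correct and matches the intended argument. The paper does not give its own proof of this lemma; it merely repeats the statement from the WCP technical report and remarks that the proof is ``analogous to lemma~1 and~2 in~\cite{Smaragdakis:2012:SPR:2103621.2103702}.'' The minimality-based swapping argument you sketch for part~1 and the HB--WCP composition argument for part~2 are exactly the approach of Lemma~1 in the CP paper, so you are reproducing the deferred proof rather than offering an alternative.

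Two small remarks. First, for part~1 you are right that the main work is verifying that the rearranged trace $T''$ is still a correct reordering; be aware that the paper's notion of correct reordering (Definition~\ref{def:correct-trace-reordering}) includes the last-write condition \textbf{LW}, which is not captured by the HB relation $\ltHBCSSym{T'}$ used in the lemma (that relation contains only program order and release--acquire edges). So closing the relocated block under HB-predecessors alone is not quite enough; one must also drag along any last-write events, and argue that doing so does not pull $e_1$ into the block. This is handled in the CP proof but is easy to overlook. Second, your observation at the end of part~2 --- that a WCP ordering derived in $T'$ must transfer back to $T$ --- is indeed the delicate point, and the CP/WCP proofs rely on a separate invariance lemma for it; you have correctly flagged rather than glossed over this step.
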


\begin{lemma}[Lemma A.2 in WCP technical report~\cite{DBLP:journals/corr/KiniM017}]
  Let $a_1$ be an acquire event such that $e_1 \in \StandardCS{T'}{a_1}$.
  For all events $e$ such that   $\posP{T'}{a_1} < \posP{T'}{e} < \posP{T'}{e_1}$
  we have that $\ltHB{T'}{a_1}{e}$ and $\ltHB{T'}{e}{e_2}$.
\end{lemma}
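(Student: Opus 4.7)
The plan is to prove both conjuncts by contradiction, exploiting the two-part minimality of $T'$: the distance from $e_1$ to $e_2$ is minimal, and the distance from $e_2$ to each enclosing acquire of $e_1$ is minimal from innermost outward. Fix an event $e$ with $\posP{T'}{a_1} < \posP{T'}{e} < \posP{T'}{e_1}$ and suppose one of the two HB conclusions fails; in each case I would produce a strictly better correctly reordered prefix, contradicting minimality.

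For the first conjunct, suppose $a_1 \not\ltHBSym{T'} e$. I would construct $T''$ from $T'$ by relocating $e$, together with the transitive closure of its thread-local predecessors and the writes its reads observe, to a position strictly before $a_1$. Because $e$ is not HB-after $a_1$, no program-order or release-acquire chain requires $e$ to remain after $a_1$, so the relocation can be carried out while preserving well-formedness, program order, and last-write consistency. In $T''$ the acquire $a_1$ occupies a position strictly closer to $e_2$ than in $T'$, while all inner acquires enclosing $e_1$ keep their distance to $e_2$. This contradicts minimality condition (b).

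For the second conjunct, suppose $e \not\ltHBSym{T'} e_2$. An analogous argument removes $e$ from the prefix ending at $e_2$: I would truncate (or equivalently relocate past $e_2$) the event $e$ along with anything lying before $e_2$ that transitively depends on it. Since $e$ does not HB-precede $e_2$, no dependency chain forces $e$ to remain in the prefix. The resulting $T''$ is still a correct reordering and again brings $a_1$ strictly closer to $e_2$, contradicting minimality condition (b).

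The main obstacle is the careful bookkeeping required to certify that the relocated trace $T''$ satisfies all three clauses of Definition~\ref{def:correct-trace-reordering}. Moving $e$ may drag along program-order predecessors in its thread and the writes that $e$'s reads observe; one must argue that this transitive closure is itself free of HB chains terminating at $a_1$ (respectively, originating at $e_2$), which reduces to HB being closed under its generating dependencies. A further subtlety is that $a_1$ is an acquire event: one must verify that the surgery does not introduce an unmatched acquire-release pair straddling $a_1$, violating well-nestedness and \textbf{Lock-1}. The hypothesis $e_1 \in \StandardCS{T'}{a_1}$ is pivotal here, certifying that no release of the lock associated with $a_1$ sits between $a_1$ and $e_1$ in $T'$ that the relocation could disturb.
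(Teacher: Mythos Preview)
Your overall strategy---assume the HB relation fails, build the set of events that must travel with $e$, relocate them, and appeal to the minimality of $T'$---is the same as the CP proof that the paper (and the WCP report) defers to. So at the architectural level you are on the right track.

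Where you drift from the paper is in identifying \emph{which} part of the hypothesis $e_1 \in \StandardCS{T'}{a_1}$ does the real work. The CP argument, as the paper spells out, hinges on \textbf{CS-SAME-THREAD}: because $a_1$ lives in the same thread $t_1$ as $e_1$, the set $E = \{\, e' \mid \ltHB{T'}{e'}{e},\ a_1 \ltTraceSym{T'} e' \ltTraceSym{T'} e \,\}$ cannot contain any thread-$t_1$ event (otherwise $a_1 \ltHBSym{T'} e'$ by program order, hence $a_1 \ltHBSym{T'} e$, contradiction). That is precisely what lets you slide $E$ past $a_1$ without violating program order in $a_1$'s own thread. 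You instead locate the pivotal use of the hypothesis in \textbf{CS-MATCH} and lock well-formedness (``no release of the lock associated with $a_1$ sits between $a_1$ and $e_1$''). Lock well-formedness does have to be checked, but it is not the crux; the thread-identity of $a_1$ is. Your sentence about the closure being ``free of HB chains terminating at $a_1$'' gestures in the right direction but never makes explicit that the obstruction would be a program-order edge out of $a_1$, and that such an edge is ruled out exactly because $a_1$ is in thread $t_1$.

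This matters here because the paper's entire purpose in revisiting the lemma is to isolate the dependence on \textbf{CS-SAME-THREAD} and then show how the argument must be patched when that axiom is dropped (the cross-thread case). Your write-up would benefit from naming that step explicitly.
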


The above corresponds to lemma 2 in~\cite{Smaragdakis:2012:SPR:2103621.2103702}.
There is no proof for lemma A.1 and lemma A.2 in the WCP technical report version because
they are analogous to lemma 1 and 2 in~\cite{Smaragdakis:2012:SPR:2103621.2103702}.

The proof of lemma 2 in~\cite{Smaragdakis:2012:SPR:2103621.2103702}
assumes standard critical section.
We show that considering standard critical sections only is valid as we
otherwise reach a contradiction to the assumption that $e_1$ and $e_2$ are in a WCP race.
Here are the details.

To verify $\ltHB{T'}{a_1}{e}$, the proof of lemma 2 in~\cite{Smaragdakis:2012:SPR:2103621.2103702}
assumes the contrary.
That is, $a_1 \not \ltHBSym{T'} e$.

\begin{enumerate}
\item Consider $E = \{ e' \mid \ltHB{T'}{e'}{e} \wedge a_1 \ltTraceSym{T'} e' \ltTraceSym{T'} e \}$.
  This is the set of all events that occur between (and inclusive of) $a_1$ and $e$ are in a happens-before
  relation with $e$.

\item The argument is now that the set $E$ cannot contain any events from thread $t_1$
  where $t_1 = \thd{e_1}$, or else $\ltHB{T'}{a_1}{e}$.

\item This reasoning can only go through in case of standard critical sections.
  In this case, $a_1$ is in thread $t_1$ as well.

\item Suppose, there is some $e'$ where $t_1 = \thd{e'}$ and $\ltHB{T'}{e'}{e}$
  and $a_1 \ltTraceSym{T'} e'$.

\item Due to thread-order, we find that $\ltHB{T'}{a_1}{e'}$
      and thus $\ltHB{T'}{a_1}{e}$ which contradicts the assumption.
\end{enumerate}

In general, $a_1$ could belong to a cross-thread critical section.
We show that it is safe to eliminate such cases here.

Case $e_1 \in  \IntraCSect{T}{\AcqRelPair{a_1}{r_1}}{x}$
      where $r_1$ is the matching release for $a_1$ and $x$ is the name of the lock.
      We assume that $a_1, r_1$ are not in the same thread as $e_1$.

\begin{enumerate}
\item We have that $r_1$ \emph{must} happen after $e_1$.

\item This implies that there must write-read dependency in between $e_1$ and $r_1$
      such that $\ltMustHB{T'}{e_1}{r_1}$.

    \item Hence, we must find a write $e_w$ and a read $e_r$ such that
      $e_1 \ltTraceSym{T'} e_w \ltTraceSym{T'} e_r \ltTraceSym{T'} r_1$
      where $e_w$ is the last write for $e_r$.

\item This write-read dependency must be protected by a common lock~$y$.
      Otherwise, we obtain a contradiction to the assumption that $e_1$ and $e_2$
      are the first race.

\item We can further assume that
      (1) $t_1 = \thd{e_w}$ and $r_1$ and $e_r$ are in some
          thread $t \not = t_1$, and
      (2) $e_w \in \StandardCS{T'}{y}$ and $e_r \in \StandardCS{T'}{y}$.
         That is, $e_w$ and $e_r$ are part of a standard critical section.

\item In general, there could be a sequence of write-read dependencies
      where some of them are protected by  cross-thread critical sections.
      However, we must find a write-read dependency that is only protected by standard critical sections.

\item Via the above and WCP rule (WCP-1) we conclude that $\ltWCP{T'}{e_w}{e_r}$.

\item Via lemma A.1 we can conclude that $\ltHB{T'}{e_1}{e_w}$ and $\ltHB{T'}{e_r}{e_2}$.

\item Via WCP rule (WCP-3) we can conclude that $\ltWCP{T'}{e_1}{e_2}$.

\item This contradicts the assumption that $e_1$ and $e_2$ are in a WCP race.

\end{enumerate}

Hence, it is safe to assume that $e_1$ is part of a standard critical section.

\newcommand{\pwrE}{PWR-E}

\newcommand{\pwrsymbole}{\textit{pwr-e}}%{PWR}
\newcommand{\ltPWRE}[3]{#2 <_{\pwrsymbole}^{#1} #3}

\newcommand{\ltPWRENoT}[2]{#1 <_{\pwrsymbole} #2}
\newcommand{\CTTLHeldNoT}[1]{\LHSym{\intrathreadthread}{(#1)}}

%--------------------------------------------------------
%--------------------------------------------------------
\section{Full Enhancement of PWR}
\label{sec:further-pwr}

We consider the full enhancement of PWR with cross-thread critical sections.

\begin{definition}[\pwrE]
  \label{def:pwre-relation}
  For a well-formed trace $T$,
  $\ltPWRE{T}{}{}$
  is the smallest strict partial order that
  satisfies the following rules:

  \begin{description}
  \item[\nf{(PWRE-1)}:]
        $\ltPWRE{T}{\evtAA}{\evtBB}$ if
       $\ltTrace{T}{\evtAA}{\evtBB}$ and $\thd{\evtAA} = \thd{\evtBB}$.

  \item[\nf{(PWRE-2)}:]
      $\ltPWRE{T}{\evtBB}{\evtAA}$ if
        $\evtAA = (t,\readE{a})$ and
        $\evtBB$ is the last write for $\evtAA$ w.r.t.~$T$.
  \item[\nf{(PWRE-3)}:]
    $\ltPWR{T}{r_1}{f}$ if
    for any two matching acquire-release pairs $(a_1,r_1)$ and $(a_2,r_2)$
    and for some lock $x$ where $\ltTrace{T}{a_1}{a_2}$
    there is some event $e$ such that
    $e \in \IntraCSect{T}{\AcqRelPair{a_1}{r_1}}{x}$ and
    $f \in \IntraCSect{T}{\AcqRelPair{a_2}{r_2}}{x}$.
  \item[\nf{(PWRE-4)}]
    $\ltPWRE{T}{\evtAA}{\evtBB}$ if $\evtAA = (t, \forkE{s})$ and $s = \thd{\evtBB}$,

  \item[\nf{(PWRE-5)}]
    $\ltPWRE{T}{\evtAA}{\evtBB}$ if $\evtBB = (t, \joinE{s})$ and $s = \thd{\evtAA}$.
  \end{description}

Two conflicting events $\conflict\evtAA\evtBB$
are in a \emph{\pwrE-race} if neither $\ltPWRE{T}{\evtAA}{\evtBB}$, nor $\ltPWRE{T}{\evtBB}{\evtCC}$,
and $\CTTLHeldNoT{\evtAA} \cap \CTTLHeldNoT{\evtBB} = \emptyset$.

\end{definition}

The enhancements compared to PWR are as follows:
(E1) In rule (PWRE-3) we employ cross-thread critical sections
and (E2) the \pwrE-race checks employs thread indexed cross-thread lock sets.

\begin{figure}

%% latexTrace  $ addLoc exCS_5b
\bda{|l|l|l|l|l|}
\hline  & \thread{1} & \thread{2} & \thread{3} & \thread{4}\\ \hline
\eventE{1}  & \forkE{\thread{3}}&&&\\
\eventE{2}  & \lockE{x}&&&\\
\eventE{3}  & \forkE{\thread{2}}&&&\\
\eventE{4}  & &\writeE{a}&&\\
\eventE{5}  & \writeE{b}&&&\\
\eventE{6}  & \joinE{\thread{2}}&&&\\
\eventE{7}  & \unlockE{x}&&&\\
\eventE{8}  & &&\lockE{x}&\\
\eventE{9}  & &&\forkE{\thread{4}}&\\
\eventE{10}  & &&&\readE{a}\\
\eventE{11}  & &&\joinE{\thread{4}}&\\
\eventE{12}  & &&\unlockE{x}&\\
\eventE{13}  & &&\writeE{b}&\\

\hline \eda{}

    \caption{Enhancement (E1) by example}
\label{fig:exCS_5b}

\end{figure}

How to integrate (E2) into PWR is described in algorithm~\ref{alg:pwr_guards}.
(E1) we effectively get for free.

Consider the example in figure~\ref{fig:exCS_5b}.
There is no race here for the following reasoning.
\begin{enumerate}
\item $\ltPWRENoT{e_4}{e_{10}}$ via rule (PWRE-2).
\item From this we derive $\ltPWRENoT{e_6}{e_{11}}$.
\item Via rule (PWRE-3) we find that $\ltPWRENoT{e_7}{e_{11}}$.
\item Therefore, $\ltPWRENoT{e_5}{e_{13}}$.
\end{enumerate}

In our implementation, we keep track of critical sections (but limiting their number for efficiency reasons).
When processing the release event $e_{12}$, we check for any prior critical section.
Here, we find $(V_{e_2}, V_{e_7})$, the critical section in thread $t_1$ represented as pair of
vector clocks where $V_{e_2}$ corresponds to the acquire and $V_{e_7}$ corresponds to the release.
Due to the fork/join and write-read dependencies, we find that $V_{e_2} < V_{e_{12}}$
and therefore thread $t_3$ synchronizes with $V_{e_7}$.
Thus, we find that $V_{e_5} < V_{e_{13}}$.

As discussed, our implementation underapproximates both enhancements.
\begin{itemize}
   \item The implementation may not recognize a cross-thread critical section and fails to apply rule (PWRE-3).
   \item The implementation may not recognize that a lock is in the set $\CTTLHeldNoT{}$.
\end{itemize}

%%% Local Variables:
%%% mode: latex
%%% TeX-master: "main"
%%% End:

\end{document}